\documentclass{lmcs}
\usepackage{lmodern}
\usepackage{tikz}
\usetikzlibrary{intersections, calc, arrows, automata, positioning,shapes, trees}
\usetikzlibrary{bending, arrows.meta}
\usetikzlibrary{decorations.pathmorphing}
\usepackage{bbding}
\usepackage{pifont}
\usepackage{pgf, verbatim}
\usepackage{amssymb}
\usepackage{amsmath,amsthm}
\usepackage{comment}
\usepackage{xspace}
\usepackage{booktabs}
\usepackage{diagbox}
\usepackage{makecell}
\usepackage{hyperref}
\usepackage[capitalise]{cleveref}
\newcommand{\game}{{\mathcal{G}}}
\newcommand{\player}[1]{\mathsf{Player}(#1)}

\newcommand{\Ad}{{\mathsf{Adam}}}
\newcommand{\Aa}{{\mathcal{A}}}
\newcommand{\Ev}{{\mathsf{Eve}}}
\newcommand{\Gg}{{\mathcal G}} 
 
\newcommand{\nat}{{\mathbb N}} %
\newcommand{\F}{{\mathcal{F}}}

\newcommand{\sg}{\sigma}

\newcommand{\AtE}{\mathsf{Attr}_{\Ev}}

\newcommand{\genreach}{\mathsf{GenReach}}
\newcommand{\maxgenreach}{\mathsf{MaxGenReach}}

\newcommand{\maxgenreachpromise}{\mathsf{MaxGenReachPromise}}

\newcommand\PSPACE{$\textsc{PSPACE}$\xspace}
\newcommand\NPSPACE{$\textsc{NPSPACE}$\xspace}
\newcommand\APTime{$\textsc{APTime}$\xspace}

\newcommand\NL{$\textsc{NL}$\xspace}
\newcommand\NP{$\textsc{NP}$\xspace}
\newcommand\coNL{$\textsc{co}\textsc{NL}$\xspace}
\newcommand\coNP{$\textsc{co}\textsc{NP}$\xspace}
\renewcommand\P{$\textsc{PTIME}$\xspace}

\newcommand{\M}{\mathcal{M}}
\newcommand{\upd}{\mu}
\newcommand{\ksurdeux}{\lfloor k/2 \rfloor}
\newcommand{\binomksurdeuxk}{\binom{k}{\ksurdeux}}
\newcommand{\set}[1]{\{ #1 \}}

\usepackage{tcolorbox}

\usepackage{lineno}
\usepackage{todonotes}
\usepackage{iftex}

\ifpdf
  \usepackage{underscore}         %
  \usepackage[T1]{fontenc}        %
\else
  \usepackage{breakurl}           %
\fi

\begin{document}

\title{Generalised Reachability Games}

% affiliations are numbered automatically with a, b, c (see below)
% use the optional argument to indicate the affiliation(s) of each author
% omit the argument if there is only one author, or only one affiliation
\author[S.~Bose]{Sougata Bose\lmcsorcid{0000-0003-3662-3915}}[a]
\author[N.~Fijalkow]{Nathana\"el Fijalkow\lmcsorcid{0000-0002-6576-4680}}[b]
\author[D.~Hausmann]{Daniel Hausmann\lmcsorcid{0000-0002-0935-8602}}[c]
\author[F.~Horn]{Florian Horn\lmcsorcid{0000-0001-8872-4705}}[d]
\author[S.~Paul]{Soumyajit Paul\lmcsorcid{0000-0002-7233-2018}}[c]
\author[S.~Schewe]{Sven Schewe\lmcsorcid{0000-0002-9093-9518}}[c]
\author[T.~Zhanabekova]{Tansholpan Zhanabekova\lmcsorcid{0000-0002-4941-2554}}[c]

% affiliation 1 (automatically numbered a)
\address{UMONS - Universit\'e de Mons, Mons, Belgium}	%optional
% write emails for all authors having that affiliation
%\email{name1@email1}  %optional

% affiliation 2 (automatically numbered a)
\address{CNRS, LaBRI, Bordeaux, France}	%optional
% write emails for all authors having that affiliation
%\email{name1@email2}  %optional

% affiliation 3 (automatically numbered c)
\address{University of Liverpool, Liverpool, UK}	%optional
%\email{name1@email3, name3@email3, name4@email4}  %optional

% affiliation 3 (automatically numbered c)
\address{Université Paris Cité, CNRS, IRIF, F-75013 Paris, France}	%optional
%\email{name2@email2}  %optional

\begin{abstract}
We study two-player zero-sum turn-based games played on graphs with multiple reachability objectives called \emph{generalised reachability games}.
In classic reachability games the goal of one player -- $\Ev$, is to visit a given target set of vertices, and that of the other player -- $\Ad$, is to prevent this. In generalised reachability games, 
%studied by Fijalkow and Horn, 
the single target set is replaced with a family of target sets and the objective of $\Ev$ is to visit all of them in any order. We study the complexity of deciding the winner in two-player games with generalised reachability objectives.

Our study reveals that an important parameter that determines the complexity of this problem is the size of the target sets. We first prove that deciding the winner in such games is \textsc{PSPACE}-complete, and the \textsc{PSPACE} lower bound holds even when the size of each target set is at most three. By contrast, we show that the problem is FPT in the number of target sets of size greater than one. Moreover, we consider the memory requirements for both players and give matching upper and lower bounds
on the sizes of winning strategies.

%Our results are twofold: first, we provide an improved complexity picture for generalised reachability games, expanding the known tractable class from games in which all target sets are singleton to additionally allowing a logarithmic number of target sets of arbitrary size. 

We also study optimisation variants of these games. For the optimisation problems, we show intractability for most interesting cases. Particularly, in contrast to the tractability of generalised reachability in the case with singleton target sets, the optimisation problem is co\NP-hard when $\Ev$ tries to maximise the number of target sets that are visited. Tractability of this case can be recovered in a different optimisation setting where $\Ev$ is required to pledge a maximum sized subset of target sets that she can guarantee to visit. 
\end{abstract}

\maketitle

\section{Introduction}\label{sec:intro}
Two-player zero-sum games played on graphs provide a fundamental framework for modelling decision-making scenarios where two players have opposing objectives. They are extensively used to model reactive systems, where one player ($\Ev$) models actions controlled by the system and the other player ($\Ad$) models uncontrollable actions of the environment. Analysing such games then provides formal guarantees on the behaviours of the system against all possible behaviours of the environment~\cite{fijalkow2023gamesgraphs,pnueli1977, PR89, Bloem2018}.

These games are played on finite graphs, where the vertices are partitioned into ones controlled by $\Ev$ and $\Ad$. Starting from a token placed on a fixed initial vertex, the player controlling the current vertex moves the token along an edge of the graph to jointly form an infinite path. A winning objective specifies the set of acceptable behaviours of the system as a set of infinite paths that are good for $\Ev$. In the game, $\Ev$ attempts to ensure that the path formed is good for $\Ev$, while $\Ad$ tries to obstruct this goal.
Solving games refers to the decision problem of checking which of the two players can win from a given initial vertex.

A key type of objective studied in such settings is \emph{reachability}, where the goal of $\Ev$ is to reach some vertex among a designated target set of vertices. 
Generalised reachability games extend this concept by requiring $\Ev$ to reach multiple target sets rather than just one. Then, $\Ev$ is required to visit at least one vertex from each target set in the play.
However, in terms of computational complexity of solving such games, changing from reachability to generalised reachability results in a significant jump. While reachability games are \P-complete \cite{Zer13,DBLP:journals/jcss/Immerman81}, generalised reachability games are \PSPACE-complete (Theorem~\ref{thm:pspace-hardness-general-case}). 
Understanding the complexity of solving special cases of generalised reachability games is crucial, especially when considering the size of target sets as a parameter. %For instance, if all target sets consist of a single vertex, $\Ev$ is required to visit several vertices in the play. Such games can be solved in polynomial time (Theorem~\ref{thm:ptime-singleton-2player}, first proven in~\cite{fij-horn2010}.

The analysis of generalised winning conditions, i.e., the conjunction of multiple objectives of the same kind has been a significant focus of research~\cite{chatterjee2007, chatterjee_et_al10, chatterjee_et_al16}.  
While generalised reachability games have been well studied, one could also consider the optimisation variant of the problem that asks $\Ev$
to visit as many target sets as possible.
This variant can be used to provide strategies where visiting all target sets may not be possible, but $\Ev$ can still visit a significant number of target sets.
Optimisation variants have been studied beyond generalised reachability objectives~\cite{KupfermanS24}.
In this work, we consider both the case where $\Ev$ has to name the target sets she visits before the game starts (and visiting other sets does not count) \emph{and} the case where she just wants to maximise the number of sets visited.

\paragraph{Our Contributions.}
The current work provides a comprehensive study of generalised reachability games, joining and extending~\cite{DBLP:journals/corr/abs-2509-14091} and~\cite{fij-horn2010,FijalkowHorn13french}. 

We first analyse the complexity of deciding their winner, showing that the size of the target sets plays a crucial role. We then study the memory requirements of winning strategies for both players. Finally, we consider natural optimisation variants of the problem with relaxed winning objectives and analyse their complexity. In particular, we show the following results.

\begin{itemize}
    \item[--] We prove that deciding the winner in generalised reachability games is \PSPACE-complete in general. It is \PSPACE-hard even when the sizes of the target sets are at most $3$ (Theorem~\ref{thm:pspace-hardness-general-case}). On the other hand, we show that the problem is fixed-parameter tractable when parameterised by the number of target sets with size bigger than $1$ (Theorem~\ref{thm:fpt-2-player}).
    
    \item[--] We study the complexity of one-player restrictions of the problem. When all vertices belong to $\Ev$, deciding if $\Ev$ can win is \NP-complete (Theorem~\ref{thm:genreach-eve-only}). When all vertices belong to $\Ad$, the same problem is \NL-complete (Theorem~\ref{thm:genreach-adam-only}).
    
    \item[--] We analyse the memory requirements for both players and give upper and lower bounds on the size of winning strategies. In a game with $k$ target sets, we show that $\Ev$ may need $2^k - 1$ memory states, while $\Ad$ may need $\binom{k}{\lfloor k/2 \rfloor}$ memory states (Theorem~\ref{thm:memlower}). We show that these bounds are tight (Theorem~\ref{thm:memupper}). 
    \item[--] We analyse two optimisation variants of the problem. In the first variant $\Ev$ maximises the number of targets she can reach. In the second variant $\Ev$ maximises the same objective but under the constraint that she has to announce which targets she will visit (in any order) before the game is played. We show that the first variant is \PSPACE-complete, with a \PSPACE lower bound holding even for the case where all target sets have size at most $2$ (Theorem~\ref{thm:maxgrcomplexity}.(1)). For the case that all target sets have size $1$, we show that this problem is \coNP-hard (Theorem~\ref{thm:maxgrcomplexity}). On the other hand, the complexity of the second variant behaves similarly to standard generalised reachability games. We show that it is \PSPACE-complete in general and can be solved in polynomial time when all target sets have size $1$ (Theorem~\ref{thm:maxgrp2p}). We also provide complexity bounds for one-player restrictions of these problems. 
\end{itemize}

The resulting complexity landscapes of all decision problems are summarised in the tables in Figures~\ref{fig:complexity-table},~\ref{fig:maxgenreach-complexity-table}, and~\ref{fig:maxgenreachpromise-complexity-table}.
\begin{figure}
  \centering
	\scalebox{.95}{
	\renewcommand{\arraystretch}{2.2}
	\setlength{\tabcolsep}{8pt}
	\begin{tabular}{|l|c|c|c|c|c| }
	\hline
	\small \diagbox{Players}{Targets} & \small All size $1$ & \small $k$ size $> 1$ & \small All size $\leq 2$ & \small All size $\leq 3$ & \small General \\
    \hline
	2 Players & \makecell{\P \\ \small (Thm~\ref{thm:ptime-singleton-2player})} & \makecell{\text{FPT} \\ \small (Thm~\ref{thm:fpt-2-player})} & {open} & \multicolumn{2}{|c|}{\makecell{\PSPACE-complete \\ \small (Thm~\ref{thm:pspace-hardness-general-case})}}\\
	\hline
	Only Eve & \makecell{\NL-complete \\ \small (Thm~\ref{thm:genreach-eve-only})} & \multicolumn{2}{|c|}{\makecell{\P \\ \small (Thm~\ref{thm:genreach-eve-only})}} & \multicolumn{2}{|c|}{\makecell{\NP-complete \\ \small (Thm~\ref{thm:genreach-eve-only})}}\\
	\hline
	Only Adam & \multicolumn{5}{|c|}{\makecell{\NL-complete \\ \small (Thm~\ref{thm:genreach-adam-only})}} \\
	\hline
\end{tabular}
	}
\caption{Complexity of solving different types of generalised reachability games.}
  \label{fig:complexity-table}
\end{figure}

\begin{figure}
  \centering
	\scalebox{.95}{
	\renewcommand{\arraystretch}{2.2}
	\setlength{\tabcolsep}{8pt}
	\begin{tabular}{|l|c|c|c| }
	\hline
	\small \diagbox{Players}{Targets} & \small All size $1$ & \small All size $\leq 2$ & \small General \\
    \hline
	2 Players & \makecell{\coNP-hard \\ \small (Thm~\ref{thm:maxgrcomplexity})} & \multicolumn{2}{|c|}{\makecell{\PSPACE-complete \\ \small (Thm~\ref{thm:maxgrcomplexity})}}\\
	\hline
	Only Eve & \makecell{\P \\ \small (Thm~\ref{thm:maxgrev})} & \multicolumn{2}{|c|}{\makecell{\NP-complete \\ \small (Thm~\ref{thm:maxgrev})}}\\
	\hline
	Only Adam & \makecell{\P \\ \small (Thm~\ref{thm:maxgrad})} & \multicolumn{2}{|c|}{\makecell{\coNP-complete \\ \small (Thm~\ref{thm:maxgrad})}} \\
	\hline
\end{tabular}
	}
\caption{Complexity of maximising the number of targets reached for different cases.}
\label{fig:maxgenreach-complexity-table}
\end{figure}

\begin{figure}
  \centering
	\scalebox{.95}{
	\renewcommand{\arraystretch}{2.2}
	\setlength{\tabcolsep}{8pt}
	\begin{tabular}{|l|c|c|c|c| }
	\hline
	\small \diagbox{Players}{Targets} & \small All size $1$ & \small All size $\leq 2$ & \small All size $\leq 3$ & \small General \\
    \hline
	2 Players & \makecell{\P \\ \small (Thm~\ref{thm:maxgrp2p})} & {open} & \multicolumn{2}{|c|}{\makecell{\PSPACE-complete \\ \small (Thm~\ref{thm:maxgrp2p})}}\\
	\hline
	Only Eve & \makecell{\P \\ \small (Thm~\ref{thm:maxgrev})} & \multicolumn{3}{|c|}{\makecell{\NP-complete \\ \small (Thm~\ref{thm:maxgrev})}}\\
	\hline
	Only Adam & \multicolumn{4}{|c|}{\makecell{\P \\ \small (Thm~\ref{thm:maxgrprad})}} \\
	\hline
\end{tabular}
	}
\caption{Complexity of maximising the number of reached targets that have to be declared before the game.}
  \label{fig:maxgenreachpromise-complexity-table}
  \end{figure}

\paragraph*{Related Work.}

%The starting point of this work is the study of \emph{generalised reachability games} by Fijalkow and Horn~\cite{fij-horn2010}.
%Their work revealed the surprising complexity of these games: while reachability games are \textsc{P}-complete \cite{Zer13,DBLP:journals/jcss/Immerman81}, even modest extensions to standard reachability objectives can lead to significant computational challenges. 
%This work has influenced the analysis of multi-objective games and strategy synthesis under complex constraints. 
%It also provides certain restrictions which make the problem easier, particularly by considering one-player variants, and by parametrising the problem by size of each target set.
%Fijalkow and Horn also exploit connections with the true quantified satisfiability (QSAT) problem, one of the standard \PSPACE-complete problems, to provide lower bounds, both in the general case and in some restricted cases. 
%An open problem stated in their work is the complexity of the problem when target sets have size $2$. 

Games with weighted multiple objectives have been studied by Kupferman and Shenwald~\cite{KupfermanS24}, showing how adding different goals and weights makes these games more complex and useful for modelling real systems.
In particular, their results can be used to provide strategies that maximise the number of objectives satisfied, even if not all of the objectives can be jointly satisfied. This yields a \PSPACE~upper bound for some of the optimisation variants of generalised reachability games we consider. However, they do not study the problem by considering the size of target sets as a parameter.

In the realm of satisfiability, El Halaby~\cite{Halaby2016} investigated the \emph{computational complexity of MaxSAT}, an optimisation variant of the satisfiability problem. 
Prior to this, Kohli et al.~\cite{Kohli-et-al-94} introduced and studied the \emph{Minimum Satisfiability Problem (MinSAT)}, establishing its NP-hardness and discussing its relevance in fields such as fault diagnosis and design verification. However, the optimisation variant of QSAT, a natural \PSPACE-complete problem, is relatively unexplored. To the best of our knowledge, while some algorithms for solving MAX-QSAT are known~\cite{IgnatievJM13}, the computational complexity for restricted classes of the problem have not been studied. As several lower bound proofs for generalised reachability games are obtained by reduction from QSAT, we expect that the analysis of optimisation variants of generalised reachability games provides more insight into the optimisation variants of QBF and vice-versa.
 
\section{Preliminaries}\label{sec:pre}
We use $\nat$ to denote the set of natural numbers. 
For $n \in \nat$, we use $[n]$ to denote the set $\{1,\dots,n\}$. 
We use $G = (V,E)$ to denote a directed graph with sets of vertices $V$ and edges $E \subseteq V \times V$. 
We often write $u \rightarrow v$ to denote $(u,v) \in E$. 
For $u \in V$, let $Succ(u) = \{v\in V \mid u \rightarrow v\}$; we generally assume
that $Succ(u)\neq\emptyset$ for all $u\in V$. 
We assume familiarity with graph theoretic notions such as strongly connected components (SCC) and the directed acyclic graph formed by SCC decomposition of a graph.%

We study two-player turn-based games played between players $\Ad$ and $\Ev$. 
Such games are played on directed graphs called game arenas. Formally, a \emph{game arena} $\Aa = (G, V_{\Ev}, V_{\Ad})$ is composed of a finite directed graph $G = (V, E)$ and a partition $(V_{\Ev}, V_{\Ad})$ of the vertex set $V$. A vertex in $V_{\Ev}$ (respectively $V_{\Ad}$) is controlled by $\Ev$ (respectively $\Ad$). For $v \in V$, let $\player{v}$ be the player controlling $v$ i.e. $\player{v} = p$ when $v \in V_p$ for $p \in \{\Ad, \Ev\}$.

A \emph{generalised reachability game} is a tuple $\game = (\Aa, s,\F)$, where\begin{itemize}
\item[--] $\Aa$ is a game arena,
\item[--] $s \in V$ is the start vertex,
\item[--] $\F = \{F_1, F_2, \dots, F_n\}$ is a set of $n$ target sets, where for each $i$, we have $F_i \subseteq V$.
\end{itemize}
%\red{There's an ugly space between where and first item above}

\Cref{fig:maxproblem} shows an example of a generalised reachability game where circle vertices belong to $\Ev$ and square vertices belong to $\Ad$. The start vertex is $s$. In this particular game, all target sets are singleton and are marked with doubled borders.  

The rules of a generalised reachability game are as follows: initially, a token is placed on the start vertex $s$. At each step, when the token is on vertex $u$, it is $\player{u}$'s turn to play: $\player{u}$ chooses a vertex $v$ from $Succ(u)$ and moves the token to $v$. The sequence of vertices starting with $s$ that is obtained this way, is called a \emph{play}. 
The objective of $\Ev$ is to move the token in such a way that the token visits at least one vertex from every target set $F_i \in \F$.
The goal of $\Ad$ is to prevent $\Ev$ from achieving her objective.
In order to achieve their respective goals, players can move tokens according to some \emph{strategy}.
When the token is on vertex $u$, $\player{u}$ moves the token based on the past history as described by the strategy. 
Formally,  a \emph{strategy} for player $p \in \{\Ad,\Ev\}$ is a function $ \sg_p : V^* \cdot V_{p} \to V $, such that for all $\pi=v_0 v_1 \ldots v_k\in V^* \cdot V_{p}$, we have $\sg_p(\pi)\in Succ(v_k)$.
Thus a strategy prescribes a valid move that should be taken when it is the respective player's turn.
A pair of strategies \( (\sigma_{\text{Eve}}, \sigma_{\text{Adam}}) \) induces a unique infinite play $\pi \in V^{\omega}$.

%\red{TODO : reformulate with lenght bound argument. reducing to finite horizon game.}
Given a generalised reachability objective $\F = \{F_1, F_2, \dots, F_n\}$,
a play $\pi=v_0 v_1\ldots$ is \emph{winning} for $\Ev$ if it visits each target set at least once, that is,
if for all $1\leq i\leq n$, there is $j \geq 0$ such that $v_j\in F_i$; otherwise, $\pi$ is won by $\Ad$. In principle, the game can continue for an infinite duration but for generalised reachability objectives, for every winning play $\pi=v_0 v_1\ldots$ there is some $k \geq 0$ such that $\Ev$ achieves her goal in the prefix $v_0 v_1\ldots v_k$.  

The \emph{winning region} of a player in a generalised reachability game is the set of vertices $v\in V$
such that the respective player has a strategy $\sigma$ such that the player wins every play that starts at $v$ and is 
compatible with $\sigma$.
Our primary interest in this paper is the complexity of deciding whether $\Ev$ has a strategy to achieve the generalised reachability objective, starting
from $s$ (that is, to decide whether $s$ belongs to the winning region of $\Ev$). The decision problem is as follows:

\begin{tcolorbox}[colback=gray!5!white,colframe=black!75!black,arc=0pt,outer arc=0pt]
$\boldsymbol{\genreach}$: Given a game $\game = (\Aa, s, \F)$, does $\Ev$ have a strategy to visit all $F_i$ in $\F$ starting from $s$?
\end{tcolorbox}

%\begin{thm}[\cite{fij-horn2010}]
%$\genreach$ is \PSPACE-complete. The \PSPACE-hardness holds even when $|F_i| = 3$ for each $F_i \in \F$. $\genreach$ is in $\P$ when $|F_i| = 1$ for each $F_i \in \F$.
%\end{thm}

%In this work, we particularly focus on  generalised reachability games parameterised by the size of target sets This includes cases where some target sets $F_i$ may have size $1$. In cases where our results rely on the number of singleton target sets, we simplify the notation for convenience. We write $\game = (\Aa,s,\F,T)$, where $|F_i| > 1$ for each $F_i \in \F$ and $T=\{t_1,\dots,t_m\} \subseteq V$ are the set of singleton targets In terms of the previous formulation, the set of targets is now $\F \cup \{\{t\}\mid t \in T\}$. 

\begin{figure}[!t]
 \centering
\scalebox{0.7}{
\begin{tikzpicture}[shorten >=0.7pt, node distance=1.6cm, on grid, auto, thick, >=stealth']
   \tikzstyle{state}=[circle, draw, fill=gray!10, minimum size=8mm]
   \tikzstyle{box}=[rectangle, draw, fill=gray!10, minimum size=8mm]

   \node[initial, state, initial text=] (s) [yshift=-0.8cm] {$s$};
   \node[box] (u) [below =of s] {$u$};
   \node[state,double, double distance=0.7mm] (u1) [below left=of u] {$u_1$};
   \node[state,double, double distance=0.7mm] (u2) [left=of u1] {$u_2$};
   \node[state,double, double distance=0.7mm] (u3) [below right=of u] {$u_3$};
   \node[state,double, double distance=0.7mm] (u4) [right=of u3] {$u_4$};
   \node[state,double, double distance=0.7mm] (v) [right=of s] {$v$};

   \path[->]

   (s) edge (u)
   (u) edge (u1)
   (u) edge (u3)
   (s) edge (v)
   (u1) edge [bend right=30] (u2)
   (u2) edge [bend right=30] (u1)
   (u3) edge [bend right=30] (u4)
   (u4) edge [bend right=30] (u3)
   (v) edge [loop right] (v);;
\end{tikzpicture}
}
\caption{A generalised reachability game with initial state $s$ and target sets $\F = \{\{u_1\},\{u_2\},\{u_3\},\{u_4\},\{v\}\}$.
$\Ev$ controls the circle nodes and $\Ad$ controls the square node. In this game, $\Ev$ cannot win the generalised reachability objective, but can ensure that at least two target sets are visited by choosing $u$ at $s$. When $\Ev$ has to declare the target sets she can guarantee to visit before the game starts, she can only promise one target set, namely $v$, and not any of the $u_i$'s.
}
\label{fig:maxproblem}
\end{figure}
We also consider optimisation versions of the problem. In the game in \cref{fig:maxproblem}, all  targets are singleton with $\F = \{\{u_1\},\{u_2\},\{u_3\},\{u_4\},\{v\}\}$. Starting from $s$, $\Ev$ does not have a strategy to visit all targets.  However, if $\Ev$ chooses to move from $s$ to $u$, no matter where $\Ad$ moves at $u$, at least two target sets will be visited. This is the best $\Ev$ can do, since on choosing $v$, she can only visit one target set. With this in mind we consider the scenario where $\Ev$'s goal is to maximise the number of target sets she can visit defined. The decision problem is as follows: 

\begin{tcolorbox}[colback=gray!5!white,colframe=black!75!black,arc=0pt,outer arc=0pt]
$\boldsymbol{\maxgenreach}$: Given a game $\game= (\Aa, s, \F)$ and $ k \in \nat$, does $\Ev$ have a strategy to visit at least $k$ target sets from $\F$ starting from $s$?
\end{tcolorbox}

In \cref{fig:maxproblem}, $\Ev$ can force the play to visit at least two target sets, but she cannot choose two specific target sets to visit, since $\Ad$ makes the decision at $u$. On the other hand, $\Ev$ has a strategy to ensure that $v$ is always visited. Therefore the maximum number of target sets $\Ev$ can list out before the game that she can force to visit is one. We consider the variant where the objective of $\Ev$ is to select a maximum size collection of target sets before playing and then has to visit all of the selected target sets.
The decision problem is defined as follows: 
\begin{tcolorbox}[colback=gray!5!white,colframe=black!75!black,arc=0pt,outer arc=0pt]
$\boldsymbol{\maxgenreachpromise}$: Given a game $\game = (\Aa, s, \F)$, and $k\in \nat$, is there a set $\F' \subseteq \F$, with $|\F'| \geq k$  such that $\Ev$ has a strategy to visit all $F_i$ in $\F'$ starting from $s$?
\end{tcolorbox}

For the one-player variants of the game,  where only $\Ev$ plays%
, the  problems $\maxgenreach$
and $\maxgenreachpromise$ are equivalent. But in general, this may not hold, even for the one-player variant with $\Ad$ as the only player. To see this, consider a game obtained by modifying the game in \cref{fig:maxproblem}, where all vertices in the game belong to $\Ad$. The maximum number of target sets $\Ev$ can ensure for $\maxgenreach$ is $1$, whereas for the $\maxgenreachpromise$, the maximum is $0$, since there are no target sets in $\F$ that $\Ev$ can force to visit. 
$\Ad$ visits $v$ if $v$ is not promised by $\Ev$, otherwise goes to $u$ followed by a loop, say $u_1\to u_2 \to u_1$.

\paragraph{Attractors}
Reachability objectives require $\Ev$ to be able to enforce the play to enter target vertices, irrespective of what $\Ad$ plays. In this regard, the notion of \emph{attractor} is natural~\cite{Zer13,ZIELONKA1998135}.
For a set of (target) vertices $S\subseteq V$, the attractor of $S$ is the set of all vertices $u$ such that $\Ev$ has strategy to reach a vertex in $S$ when the game starts at $u$. Formally, the attractor of $S$, denoted by $\AtE(S)$ can be defined recursively as follows. 
\begin{align*}
\AtE^0(S) &= S \\
\AtE^{i+1}(S) &= \AtE^i(S) ~ \cup~ \{u \in V_{\Ev} \mid \exists v \in Succ(u),  v \in \AtE^i(S) \} \\ &~~~~ \cup~  \{u \in V_{\Ad} \mid \forall v \in Succ(u),v \in \AtE^i(S) \} \\
\AtE(S) &= \bigcup_i \AtE^i(S)
\end{align*}
Here, $\AtE^i(S)$ is the set of all vertices from which $\Ev$ can force plays to enter $S$ within $i$ steps.

We point out that attractor sets $\AtE(S)$ can be computed in time linear in the number of edges of the underlying graph $(V,E)$, that is, in time $\mathcal{O}(|E|)$, noting $|E|\leq |V|^2$.
For each vertex $v\in\AtE(S)$, player $\Ev$ can enforce that $S$ is visited when playing from $v$: for $v\in V$, let $i_v$ denote the least 
number such that $v\in \AtE^{i_v}(S)$. Then $\Ev$ intuitively can ensure that $S$ is reached from $v$ in at most $i_v$ steps. A witnessing positional strategy is obtained by
moving from $v\in \AtE^{i_v}(S)\cap V_{\Ev}$ to some $v'\in\AtE^{i_v-1}(S)\cap Succ(v)$, that is, by moving one step closer to $S$.

\section{Solving Generalised Reachability Games}

\noindent In this section, we study the complexity of the $\genreach$ problem. We give an almost complete picture of the complexity of $\genreach$ for different cases based on the size of target sets and the number of players. The results are summarised in \cref{fig:complexity-table}.

\subsection{Two-player Case}

We consider generalised reachability games with two players and show 
that their solution is \PSPACE-complete in general but becomes tractable if the size of target
sets is sufficiently restricted.

\subsubsection{Hardness of Generalised Reachability Games}

We first establish \PSPACE-hardness of solving two-player generalised reachability games in general. %This even when the underlying graph is a directed acyclic graph (DAG) with pathwidth 2. 

\begin{thm}\label{thm:pspace-hardness-general-case}
$\genreach$  is \PSPACE-complete. The \PSPACE-hardness holds even when $|F_i| = 3$ for each $F_i \in \F$ and the underlying graph is a directed acyclic graph with pathwidth $2$. 
\end{thm}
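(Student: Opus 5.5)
Membership in \PSPACE and hardness are handled separately.

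\emph{Upper bound.} I would use an alternating polynomial-time algorithm, since \APTime $=$ \PSPACE. The key observation is that the game can be transformed into one of polynomial horizon. Extend the arena with the set $X\subseteq[n]$ of target sets visited so far. In the product, $X$ only grows, and for fixed $X$ the game is a reachability game towards vertices that add a new index to $X$. By positional determinacy of reachability games, if $\Ev$ wins then she can force a new target set to be visited within $|V|$ steps, unless all sets are already visited. Symmetrically, if $\Ad$ wins then he can prevent any progress forever from some point. Hence $\Ev$ wins if and only if she wins the truncated game of length $n\cdot|V|$. The alternating machine simulates this truncated game, guessing $\Ev$'s moves existentially and $\Ad$'s moves universally, while storing only the current vertex, $X$ and a step counter. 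This runs in polynomial time. Equivalently, a depth-first evaluation of the game tree uses polynomial space.

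\emph{Lower bound.} I would reduce from QSAT with the matrix in 3-CNF, $\exists x_1\forall x_2\cdots Q x_m.\ \bigwedge_{j} C_j$ with $|C_j|\le 3$. The arena is a chain of gadgets, one per variable $x_i$. Each gadget has a choice vertex owned by $\Ev$ if $x_i$ is existential and by $\Ad$ if it is universal. The choice vertex branches to two vertices $x_i^{\top}$ and $x_i^{\bot}$, and both branches merge into the next gadget's choice vertex. The last gadget leads to a sink with a self-loop. For each clause $C_j$ the target set is $F_j=\{\ell^{\text{vertex}} : \ell\in C_j\}$, where a positive literal $x_i$ corresponds to $x_i^{\top}$ and $\neg x_i$ to $x_i^{\bot}$. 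After padding clauses by duplicating literals onto fresh copies if necessary, each $F_j$ has size exactly $3$. A play visits exactly one of $x_i^{\top},x_i^{\bot}$ for each $i$, so it encodes an assignment, and the play is winning if and only if that assignment satisfies every clause. Strategies in the game correspond to Skolem functions in the QSAT game, so $\Ev$ wins if and only if the formula is true.

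\emph{Graph restrictions.} The graph is acyclic apart from the sink self-loop. To make it a genuine DAG I would simply let the sink have no successors, or note that the paper's convention $Succ(u)\neq\emptyset$ is only harmless here. For the pathwidth bound, I would build a path decomposition with bags $\{c_i,x_i^{\top},c_{i+1}\}$ and $\{c_i,x_i^{\bot},c_{i+1}\}$, where $c_i$ is the choice vertex of gadget $i$; each bag has 3 vertices, giving width $2$. The main obstacle I expect is the exact-size-$3$ padding while keeping the pathwidth at $2$. My first attempt would be to avoid fresh vertices altogether: assume a 3-CNF in which every clause has exactly three distinct literals, which is standard via duplicating-variable tricks applied before the reduction.
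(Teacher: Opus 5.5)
Your proposal is correct and follows the paper's proof. The lower bound is the same chain-of-gadgets reduction from 3-CNF QSAT, with choice vertices owned according to the quantifiers and one target set of literal vertices per clause. The upper bound is the same alternating polynomial-time simulation of the game truncated to $n\cdot|V|$ steps, which you justify more carefully via the product with the set of visited targets; your explicit width-$2$ path decomposition and your remarks on the sink self-loop and on clauses with three distinct literals only make explicit details the paper leaves implicit.
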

\begin{proof}
We reduce the \PSPACE-complete TQBF problem to generalised reachability games. It is known that TQBF is \PSPACE-hard even when each clause has 3 literals. 
The input of the reduction is a quantified Boolean formula over set of variables $x_1, \ldots, x_m$ of the form
\[\phi=Q_1 x_1. \ldots Q_i x_i.\ldots. Q_m x_m.\, c_1\land\ldots\land c_n\]
where each $Q_i \in \{\exists, \forall \}$, and each $c_j$ is a clause in disjunctive form with 3 literals (possibly negated), i.e., $c_j = l_{j_1} \lor l_{j_2} \lor l_{j_3}$ for some literals $l_{j_1}, l_{j_2}, l_{j_3}$. The reduced game is the generalised reachability game $G_\phi=(\mathcal{A}=(V,E),1,\{F_j~\mid ~ 1\leq j\leq n\} )$,
where $ V= \{i,x_i,\neg x_i\mid 1\leq i\leq m\} \cup\{\bot\} $. The edges of the arena are defined as follows, 
\begin{align*}
Succ(x_i)&=Succ(\neg x_i)=\{i+1\} ~\forall i ,~ 1\leq i < m &
Succ(i)&=\{x_i,\neg x_i\}  ~\forall i ,~ 1\leq i \leq m\\
Succ(x_m)&=Succ(\neg x_m)=\{\bot\} 
& Succ(\bot)&=\{\bot\} 
\end{align*}
The state $1$ is designated as the initial state and for all $j\leq n$, we have target sets $
F_j=\{l_{j_1}, l_{j_2}, l_{j_3}\}$. For instance, if the clause $c_j= x_1 \lor x_4 \lor \neg x_5$, then $F_j=\{x_1, x_4, \neg x_5\}$.
The partition of $V$ into $(V_\Ev,V_\Ad)$ is such that $i \in V_\Ad$ if $x_i$ is universally quantified, while $i \in V_\Ev$ if $x_i$ is existentially quantified. (The remaining vertices have a single successor, so that ownership does not matter.) See~\Cref{fig:qbf} for an example of the reduction.

There is a natural bijection between assignments of the variables $x_1, \ldots, x_m$ and plays in the game $G_\phi$: 
each path from the starting node $1$ to the sink assigns a truth value to each variable in the order determined by the graph structure. 
An assignment satisfies the formula $\phi$ if and only if the corresponding play visits all target sets $F_1, \ldots, F_n$ 
(i.e., satisfies the generalised reachability objective). 
Since the evaluation order of the variables is the same in the formula $\phi$ and in the game structure,
we conclude that $\Ev$ has a winning strategy in $G_\phi$ if and only if $\phi$ is true.
Moreover, the underlying graph of $G_\phi$ is a DAG with pathwidth $2$.

Now we show the \PSPACE upper bound.
Consider a $\game = (\Aa, s, \F)$ such that $\F=\{F_1,\dots, F_n\}$.
Let us first make a simple observation: if $\Ev$ has a winning strategy, then she has a winning strategy 
that visits each target set $F_i$ within $n \cdot |V|$ steps.
Indeed, if she can enforce to visit a subset of vertices, then she can enforce it within $|V|$ steps using attractor strategies.
Relying on this observation, we can simulate the game for up to $n \cdot |V|$ steps using an alternating Turing machine:
whenever a vertex belongs to $\Ev$, the corresponding state is disjunctive, and it is conjunctive if the vertex belongs to $\Ad$. 
A path of length $n \cdot |V|$ is accepted if it is winning, 
\textit{i.e.}, if it contains one vertex from each target set $F_i$. 
This machine accepts if and only if $\Ev$ wins $\game$, and works in polynomial time.
Since $\APTime = \PSPACE$, the result follows.

\end{proof}

We note that TQBF is in PTIME when all the clauses have 2 literals, and therefore our reduction does not work for deducing anything for the case where all target sets have size $2$.
%\textcolor{red}{add citation?}
%\begin{thm}\cite{fij-horn2010}
%Player $\Ev$ wins the game $G_\phi$ iff $\phi$ is true.
%\end{thm}

\begin{figure}[h!]
 \centering
\scalebox{0.8}{
\begin{tikzpicture}[shorten >=0.7pt, node distance=1.6cm, on grid, auto, thick, >=stealth']
   \tikzstyle{state}=[circle, draw, fill=gray!10, minimum size=8mm]
   \tikzstyle{box}=[rectangle, draw, fill=gray!10, minimum size=8mm]

   \node[box] (start) {};
   \node[state] (x) [above right=of start, xshift=1cm] {$x$};
   \node[state] (notx) [below right=of start, xshift=1cm] {$\neg x$};

   \node[state] (mid1) [right=of start, xshift=2.5cm] {};

   \node[state] (y) [above right=of mid1, xshift=1cm] {$y$};
   \node[state] (noty) [below right=of mid1, xshift=1cm] {$\neg y$};

   \node[box] (mid2) [right=of mid1, xshift=2.5cm] {};

   \node[state] (z) [above right=of mid2, xshift=1cm] {$z$};
   \node[state] (notz) [below right=of mid2, xshift=1cm] {$\neg z$};

   \node[state] (mid3) [right=of mid2, xshift=2.5cm] {};

   \node[state] (u) [above right=of mid3, xshift=1cm] {$u$};
   \node[state] (notu) [below right=of mid3, xshift=1cm] {$\neg u$};

   \node[state] (end) [right=of mid3, xshift=2.5cm] {$\bot$};

   \path[->]
   (start) edge (x)
   (start) edge (notx)
   (x) edge (mid1)
   (notx) edge (mid1)
   (mid1) edge (y)
   (mid1) edge (noty)
   (y) edge (mid2)
   (noty) edge (mid2)
   (mid2) edge (z)
   (mid2) edge (notz)
   (z) edge (mid3)
   (notz) edge (mid3)
   (mid3) edge (u)
   (mid3) edge (notu)
   (u) edge (end)
   (notu) edge (end)
   (end) edge [loop right] (end);
\end{tikzpicture}
}
 \caption{The generalised reachability game for the formula $\phi_1=
\forall x. \, \exists y. \, \forall z. \, \exists u. \, \big( (\neg x \lor \neg y\lor u) \land (x \lor \neg z) \land (\neg z \lor y) \big)$.}
\label{fig:qbf}
\end{figure}

\begin{exa}
\label{ex:qbf}
For an example of the reduction, consider the following QBF formula:
\[\phi_1=
\forall x. \, \exists y. \, \forall z. \, \exists u. \, \big( (\neg x \lor \neg y\lor u) \land (x \lor \neg z) \land (\neg z \lor y) \big).
\]
The reduced generalised reachability game $G_{\phi_1}$ is shown in ~\Cref{fig:qbf}.
$\Ev$ takes care of existential quantification in $\phi_1$, and $\Ad$ of universal quantification. The clauses correspond to the sets \( F_i \); in this example, we have 
\[
F_1 = \{\neg x, \neg y, u\}, \quad F_2 = \{x, \neg z\}, \quad F_3 = \{\neg z, y\}.
\]

\end{exa}

\subsubsection{Generalised Reachability with Small Target Sets}

Next we show that the solution of generalised reachability games becomes tractable
when the sizes of the individual target sets are sufficiently restricted. 

In order to win a generalised reachability game, $\Ev$ needs to be able to
force visits to all target sets, that is, $\Ev$ loses from all nodes $s\in V$ such that
$s\notin\textstyle\bigcap_{1\leq i \leq k} \AtE(F_i)$. Containment in all individual attractors, however,
is not sufficient for winning: $\Ev$ has to ensure that all target sets are visited (in some order),
so once a target set is visited, she has to be able to visit all remaining targets.

Consider a game with two singleton target sets $F_0=\{t_0\}$ and $F_1=\{t_1\}$. Any winning strategy forces
a visit to both $t_0$ and $t_1$ (in some order).
Depending on the behaviour of $\Ad$, $\Ev$ may be able to first visit $t_0$ and
then $t_1$, or vice versa.
Let $i\in\{0,1\}$ and $s\in V$.
By the definition of attractors, whenever $\Ev$ can attract from $s$ to the single node $t_i$ and from $t_i$ to $t_{1-i}$,
then she can attract from $s$ to $t_{1-i}$ by simply concatenating the two attractor strategies. 
Note that the play may visit $t_{1-i}$ already while going from $s$ to $t_i$, but $\Ev$ can guarantee a visit to $t_{1-i}$ 
after visiting $t_i$ as well. It follows that if $\Ev$ wins, then she has a strategy to
visit the targets in a single fixed order, no matter how $\Ad$ plays. Intuitively,
any such strategy first visits the target from where $\Ev$ can enforce a visit to the other target; 
if this is the case for both targets, then the order in which they are visited can be picked arbitrarily.
Thus $\Ev$ wins $s$ if and only if there is $i\in\{0,1\}$ such that she can attract
from $s$ to $t_i$, and then from $t_i$ to $t_{1-i}$.
This is the case if and only if 
there is $i\in\{0,1\}$ such that $\AtE(F_i)\subseteq\AtE(F_{i-1})$ and
$s\in \AtE(F_i)$.

This argument extends to generalised reachability games with $k$ targets that all are singleton sets. Such
games can be solved by computing the attractors $\AtE(F_{i})$ to all individual targets and then checking whether these attractors are ordered by set inclusion.
If this is the case, then the winning region of $\Ev$ is $\textstyle\bigcap_{1\leq i \leq k}\AtE(F_{i})$, otherwise,
the winning region of $\Ev$ is the empty set.
\begin{thm}\label{thm:ptime-singleton-2player}
$\genreach$ is in \P~when all target sets are singleton sets. \end{thm}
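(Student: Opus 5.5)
We prove the characterisation sketched just before the statement. Write $F_i=\{t_i\}$ and $A_i=\AtE(\{t_i\})$ for $1\leq i\leq k$. We show that $\Ev$ wins from $s$ if and only if the sets $A_1,\dots,A_k$ form a chain under inclusion and $s\in\bigcap_i A_i$. The algorithm is then immediate. We compute the $k$ attractors in time $\mathcal{O}(k\cdot|E|)$, check pairwise inclusion in polynomial time, and test membership of $s$.

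\emph{Sufficiency.} Suppose, after renaming, that $A_{1}\subseteq A_{2}\subseteq\dots\subseteq A_{k}$ and $s\in A_1$. Since $t_i\in A_i\subseteq A_{i+1}$, $\Ev$ can attract from $t_i$ to $t_{i+1}$. She plays the attractor strategy to $t_1$ until $t_1$ is reached, then the attractor strategy to $t_2$, and so on. This strategy uses finite memory: a counter indicating the current phase. Each phase terminates against every $\Ad$ behaviour, by the attractor property recalled in the preliminaries. Hence all $t_i$ are visited.

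\emph{Necessity.} This is the main step. Necessity of $s\in\bigcap_i A_i$ is clear: if $s\notin A_i$, then $\Ad$'s complementary (trap) strategy avoids $t_i$ forever. For the chain condition, suppose $\Ev$ wins from $s$ with strategy $\sigma$, and assume towards a contradiction that some $A_i\not\subseteq A_j$ and $A_j\not\subseteq A_i$. I would derive the contradiction from the following key observation.

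\emph{Key observation.} If $\Ev$ wins from $s$, then every target vertex $t_i$ that can occur along some play consistent with $\sigma$ lies in $\bigcap_{j} A_j$. Indeed, such a play has a finite prefix $h$ ending in $t_i$. From there, $\sigma$ shifted by $h$ must still ensure visiting all targets not yet visited in $h$. Targets already visited in $h$ need not be attractable from $t_i$, so the observation must be handled more carefully.

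The cleaner route is an extremal argument on a well-chosen prefix. Let $\Ad$ play as follows: at each moment, among the targets not yet visited, if the current vertex lies outside $A_j$ for some unvisited $j$, he plays the trap strategy against $t_j$. Such a state would contradict $\sigma$ winning, so along every consistent play the current vertex always lies in $A_j$ for all unvisited $j$.

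Now consider the first target $t_i$ visited on some consistent play. From $t_i$ every unvisited $t_j$ is attractable, so $t_i\in A_j$, and hence $A_i\subseteq A_j$ for all $j\neq i$, because attractor sets are closed under reaching $t_i$ followed by an attractor strategy to $t_j$. Removing $t_i$ and repeating from the position at $t_i$, where $\Ev$ wins the game with targets $\{t_j\}_{j\neq i}$, gives by induction on $k$ that the remaining attractors form a chain. Together with $A_i$ being the minimum, this yields the full chain.

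The expected obstacle is the justification that $t_i\in A_j$ implies $A_i\subseteq A_j$, together with making the induction well founded. The inclusion follows by concatenating the attractor strategy from $v\in A_i$ to $t_i$ with the one from $t_i$ to $t_j$. For the induction I would phrase the hypothesis over arbitrary start vertices, so that it can be applied at the vertex $t_i$.
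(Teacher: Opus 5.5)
Your proof is correct. It establishes the same characterisation as the paper ($A_1,\dots,A_k$ form a chain under inclusion and $s\in\bigcap_i A_i$), and your sufficiency argument is the paper's. The difference lies in necessity.

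The paper argues directly from $\Ad$'s side. If $A_i$ and $A_j$ are incomparable, $\Ad$ plays arbitrarily until one of $t_i,t_j$ is reached, and then plays the trap strategy against the other. This works because $t_i\in A_j$ would force $A_i\subseteq A_j$, which is the same concatenation fact you isolate.

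You instead argue from a winning strategy $\sigma$ of $\Ev$. You first establish the invariant that, along every $\sigma$-consistent play, the current vertex lies in $A_j$ for every unvisited $j$. You then use an induction on $k$ showing that the first visited target has an inclusion-minimal attractor. Both steps are sound, including applying the induction hypothesis at $t_i$ with the shifted strategy.

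The induction is unnecessary, however. Given an incomparable pair, apply your invariant at the first position of a $\sigma$-consistent play that visits $t_i$ or $t_j$. This yields, say, $t_i\in A_j$, hence $A_i\subseteq A_j$, a contradiction. That is exactly the paper's $\Ad$ strategy read contrapositively.

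What each route buys:
\begin{itemize}
\item The paper's version produces an explicit spoiling strategy for $\Ad$ with two memory values.
\item Your invariant gives slightly more structural information: every winning strategy of $\Ev$ visits the targets in an order compatible with the inclusion chain.
\end{itemize}

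Finally, delete the abandoned ``key observation'' from a written version, since as stated it is false. If $A_1\subsetneq A_2$, then necessarily $t_2\notin A_1$, yet every winning play visits $t_2$.
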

\begin{proof}
Let $\game = (\Aa,s,\F=\{F_1,\ldots,F_n\})$ be a generalised reachability game
such that $|F_i|=1$ for all $1\leq i\leq n$. 
For each $i\in\{1,\ldots,n\}$, compute $A_i=\AtE(F_{i})$. This can be done in time
$\mathcal{O}(n|E|)$. Check whether there are indices $i,j\in\{1,\ldots,n\}$
such that $A_i\not\subseteq A_j$ and $A_j\not\subseteq A_i$.
This can be checked in time $\mathcal{O}(n^2 \log |V|)$. If such indices exist,
then $\Ad$ wins by any strategy that plays arbitrarily until either $F_i$ or $F_j$ is visited,
and then forever avoids the other target set; such strategies proceed in two (positional) stages 
so that $2$ memory values suffice. Otherwise, $\Ev$ wins by forcing visits
to all sets $F_i$ according to the inclusion preorder on the sets $A_i$. There is a strategy for this 
that uses at most $n$ memory values, at each point identifying the current target.
\end{proof}

This result generalises to the case where there is a fixed number of non-singleton target sets. 
For illustrative purposes, we first consider the version with $n+1$ target sets of which $n$ are singleton targets $F_1=\{t_1\}, \dots, F_n=\{t_n\}$, and one target set $F_{n+1}$ has more than one element (that is, $|F_{n+1}| > 1$).

\begin{thm}\label{thm:genReachSingletons}
$\genreach$ is in \P~when one target set has size at least $2$ and all other target sets are singleton sets.  
\end{thm}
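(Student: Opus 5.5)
The plan is to reduce the problem to an ordinary reachability game on a polynomial-size product arena. Write the singleton targets as $F_1=\{t_1\},\dots,F_n=\{t_n\}$, the large target as $F_{n+1}=F$, and let $A_i=\AtE(\{t_i\})$.

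\textbf{Step 1: a total preorder on the singletons.} First I show that if $A_i\not\subseteq A_j$ and $A_j\not\subseteq A_i$ for some $i,j\le n$, then $\Ad$ wins. The argument from the proof of Theorem~\ref{thm:ptime-singleton-2player} carries over, because the presence of $F$ plays no role in it.
\begin{itemize}
\item If $t_i\in A_j$, then every vertex of $A_i$ lies in $A_j$: $\Ev$ first attracts to $t_i$ and then to $t_j$. This contradicts incomparability, so $t_i\notin A_j$, and symmetrically $t_j\notin A_i$.
\item Whichever of $t_i,t_j$ is visited first, the play then sits outside the attractor of the other one.
\item From there $\Ad$ plays the positional trap strategy that avoids the other target forever.
\end{itemize}
So, after checking this condition in polynomial time, I may renumber so that $A_1\subseteq A_2\subseteq\dots\subseteq A_n$. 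In particular $t_i\in A_{i+1}$ for every $i<n$, so from $t_i$ $\Ev$ can force visits to $t_{i+1},\dots,t_n$ in this order.

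\textbf{Step 2: the product game.} I build a reachability game on $V\times\{0,\dots,n\}\times\{0,1\}$ with ownership inherited from $V$. The counter $c$ records the largest prefix $t_1,\dots,t_c$ of the chain that has been visited in order, and the bit $b$ records whether $F$ has been visited. On entering a vertex $v$:
\begin{itemize}
\item $c$ is raised to $c+1$ if $v=t_{c+1}$;
\item $b$ is set to $1$ if $v\in F$.
\end{itemize}
The target set is $V\times\{n\}\times\{1\}$. The product has polynomial size, so its attractor is computable in polynomial time. I then claim that $\Ev$ wins $\game$ from $s$ if and only if she wins the product game from $(s,c_0,b_0)$, where $(c_0,b_0)$ is the initial update at $s$.

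Soundness is immediate: reaching $(\cdot,n,1)$ means every $t_i$ and $F$ has been visited.

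\textbf{Step 3: completeness, which I expect to be the main obstacle.} The worry is that a winning strategy in $\game$ might visit singletons out of chain order, for instance $t_j$ before $t_{c+1}$ with $j>c+1$, and that the counter ignores such visits. I would handle this by strengthening the induction to an invariant on the winning region.

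Suppose $\Ev$ wins $\game$ from the current history. Then, by Step 1, the current vertex lies in $A_{c+1}$ whenever $c<n$, since $t_{c+1}$ still has to be visited. By the chain property, the remaining singleton obligations are exactly $t_{c+1},\dots,t_n$, and they can be discharged in chain order from $A_{c+1}$. Visits to later targets $t_j$ are therefore redundant: they are recoverable later because $t_{c+1}\in A_{c+2}\subseteq\dots\subseteq A_j$.

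Formally, I would argue by contraposition with the determinacy of the product reachability game. If $\Ad$ wins the product game from $(s,c_0,b_0)$, he has a positional product strategy, that is, a finite-memory strategy in $\game$. Take any play consistent with it that never reaches $c=n,b=1$. Either $F$ is never visited, or some $t_{c+1}$ is never visited after the counter stalls at $c$. In the second case, the play stays outside the region from which $\Ev$ can continue, and the chain inclusions show that earlier out-of-order visits to $t_j$ with $j>c+1$ cannot help, because $t_{c+1}$ itself is missed. Either way the play violates the objective of $\game$, so $\Ad$ wins $\game$ as well.

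Making this last argument precise, namely that stalling the counter in the product corresponds to genuinely missing a target in $\game$, is the delicate point. I would carry it out by induction on $n-c$, using $A_{c+1}\subseteq A_{c+2}\subseteq\dots$.
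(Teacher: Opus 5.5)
\textbf{Verdict: genuine gap in Step 3.} Steps 1 and 2 and the soundness direction are fine. Your product arena is in fact the right object. Solving it level by level, together with Step 1, yields exactly the paper's criterion. In your numbering this reads: $s\in A_1$, and $t_j\in\AtE(F\cap A_{j+1})$ for some $0\le j\le n$ (with $t_0=s$, $A_{n+1}=V$).

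\textbf{Why Step 3 fails.} The problem is not missing detail: the two claims Step 3 relies on are false. A counter stalled at $c$ does not mean that $t_{c+1}$ ``still has to be visited'' or ``is missed''. The target $t_{c+1}$ may already have been visited out of chain order.
\begin{itemize}
\item[] Example: let all vertices be Eve's, with edges $s\to t_2\to t_1$, $t_1\to t_2$ and $t_1\to f$, where $f\in F$ is a sink and the rest of $F$ is unreachable.
\item[] Then $A_1=A_2$, so your renumbering may put $t_1$ first.
\item[] The play $s\,t_2\,t_1\,f^\omega$ wins $\game$. Yet its image in the product stays in $(f,1,1)$ forever, with $f\notin A_2$.
\end{itemize}
So the claim ``Eve wins from the current history, hence the current vertex lies in $A_{c+1}$'' fails. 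Likewise, the inference ``never reaching $c=n$, $b=1$ means some target is missed'' is invalid, and with it your transfer of Adam's product strategy to $\game$. Eve does win the product here, by cycling through $t_2,t_1$ once more before going to $f$. But that rescheduling is exactly what your argument never establishes. An induction on $n-c$ using only $A_{c+1}\subseteq A_{c+2}\subseteq\cdots$ cannot repair this, because it says nothing about where $F$ is visited.

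\textbf{What is missing.} You need two ingredients.

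First, if $t_j\in A_i$ for some $i<j$, then $A_j\subseteq A_i\subseteq A_j$. So a chain target visited early, while the play is still inside an earlier target's attractor, is tied with that target.

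Second, you need the converse of the criterion. Set $B_j:=\AtE(F\cap A_{j+1})$. If $t_j\notin B_j$ for every $j$, then Adam wins $\game$, by the following strategy.
\begin{itemize}
\item[] Let $\ell$ be the length of the longest prefix $t_1,\dots,t_\ell$ visited so far, in any order.
\item[] Adam keeps the play outside $B_\ell$.
\item[] As soon as the play leaves $A_{\ell+1}$, he switches to a trap against $t_{\ell+1}$.
\end{itemize}
Under this strategy, a visit to $F$ at level $\ell<n$ happens outside $A_{\ell+1}$, so Adam wins. No visit to $F$ happens at level $n$, since $F\subseteq B_n$.

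It remains to check that the invariant survives level jumps. Suppose visiting $t_{\ell+1}$ raises the level to $\ell'$. The targets $t_{\ell+2},\dots,t_{\ell'}$ were visited earlier inside $A_{\ell+1}$ (otherwise Adam would already have switched). The first ingredient then gives $A_{\ell+1}=A_{\ell'}$, hence $t_{\ell'}\in\AtE(\{t_{\ell+1}\})$. Therefore $t_{\ell+1}\in B_{\ell'}$ would force $t_{\ell'}\in B_{\ell'}$, which is excluded.

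This is essentially the paper's route. The paper states the criterion directly as its conditions (1)--(3) and builds both players' strategies from attractors, with no product arena.
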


\begin{proof}

Let $\game = (\Aa,s,\F=\{F_1,\ldots,F_n, F_{n+1}\})$ be a generalised reachability game
such that $|F_{n+1}|>1$ and $F_i=\{t_i\}$ for all $1\leq i\leq n$.
Let $A_1,\ldots,A_n$ denote the attractor sets to the singleton target sets, that is $A_i=\AtE(\{t_i\})$ for $1\leq i \leq n$.
We point out that $t_i\in A_i$.
We claim that $\Ev$ wins the game $\mathcal{G}$ from $s$ if and only if
\begin{enumerate}
\item the sets $A_i$ form a total preorder under set inclusion;
\item $s\in \textstyle\bigcap_{1\leq i\leq n} A_i$; and 
\item there is some $0\leq j \leq n$ such that $t_{j+1}\in \AtE(A_{j} \cap F_{n+1})$, where $t_{n+1}= s$.
\end{enumerate}
The attractor computations and the check whether they form a total preorder can be implemented in polynomial time. The same holds
for the check whether $s\in \textstyle\bigcap_{1\leq i\leq n} A_i$ and whether one of the target states $t_{j+1}$ is contained in the attractor to $A_j\cap F_{n+1}$. 
Hence the Theorem follows from the claim.

\begin{figure}[h!]
 \centering

\begin{tikzpicture}[>={[inset=0,angle'=27]Stealth}, SharpSquiggly/.style={
            decorate,
            decoration={zigzag, segment length=4, amplitude=0.9},
        }]
\draw[thick,fill=gray!10,rounded corners=10,shift={(0.15,0.15)}]
    (-3,-2) rectangle (4.5,1.7);
\draw [thick,fill=gray!50](0,0) ellipse (2.5cm and 1.4cm);
\draw [thick,fill=gray!35](0,0) ellipse (2cm and 1cm);
\draw [thick,fill=gray!20](0,0) ellipse (1.5cm and .6cm);
\draw [thick,fill=gray!30](3,-0.5) ellipse (1.3cm and 0.8cm);
\draw [thick](0,0) ellipse (2cm and 1cm);
\draw [thick](0,0) ellipse (2.5cm and 1.4cm);
    
\node at (-1.1, -.1){\footnotesize{$A_3$}};
\node at (-1.35, -0.45){\footnotesize{$A_{2}$}};
\node at (-1.6, -0.8){\footnotesize{$A_{1}$}};
\node at (-2, -1.5){\footnotesize{$A_{0}$}};
\node at (3, -.5){\footnotesize{$F_{0}$}};
\node at (-.5,-.2) [circle,fill,inner sep=1pt]{};
\node at (-.3,-.3){\footnotesize{$s$}};
\node at (.5,.3) [circle,fill,inner sep=1pt]{};
\draw[->]  (-.5,-.2) decorate[SharpSquiggly]{ -- (.5,.3) } ;
\node at (.7,.25){\footnotesize{$t_3$}};
\draw[->]  (.5,.3) decorate[SharpSquiggly]{ -- (.8,-.6) } ;
\node at (.8,-.6) [circle,fill,inner sep=1pt]{};
\node at (.6,-.65){\footnotesize{$t_2$}};
\draw[->]  (.8,-.6) decorate[SharpSquiggly]{ -- (2.2,-.3) } ;
\node at (2,.6) [circle,fill,inner sep=1pt]{};
\node at (1.8,.7){\footnotesize{$t_1$}};
\draw[->]  (2.2,-.3) decorate[SharpSquiggly]{ -- (2,.6) } ;
\end{tikzpicture}
\caption{Example construction of winning strategy from attractor sets, $k=4$}
\label{fig:genreach}
\end{figure}

For one direction of the claim, assume that the three items hold.
As the order in which the individual targets are visited is irrelevant, we can reorder them. Without loss of generality, we assume 
\ $A_n\subseteq \ldots \subseteq A_1$. Let the initial state in $\mathcal{G}$ be denoted by $t_{n+1}=s$ so that $t_{n+1}\in A_n$
by item ($2$).
Additionally, let $A_0$ denote the set of all states.
We construct a winning strategy for $\Ev$ as follows:

1. For each $i \in \{1,\ldots, n\}$ with $i\neq j+1$: move from $t_{i+1}$ towards $t_i$. Since $t_{i+1}$ is in the attractor $A_i$, player $\Ev$ can enforce that $t_i$ is eventually reached.

2. For the state $t_{j+1}$: move towards the set $A_{j} \cap F_{n+1}$; this is possible since $t_{j+1}\in \AtE(A_{j} \cap F_{n+1})$. If $j = 0$,  then the strategy terminates successfully once $A_{0} \cap F_{n+1}$ is reached. Otherwise, proceed by moving from $A_{j} \cap F_{n+1}$ on to $t_{j}$ (which is possible since 
$A_{j} \cap F_{n+1}\subseteq A_j$), and finish the remaining sequence of targets from $t_j$.

This strategy uses at most $n+1$ memory values to identify the current target set and ensures that all target sets, including $F_{n+1}$, are eventually visited.
Figure~\ref{fig:genreach} shows an example of the strategy construction, where $s=t_4$ and we assume that $j=1$, that is, that $t_2$ is contained in $\AtE(A_1\cap F_{n+1})$.

For the converse direction, assume that one of the three items does not hold.

If the attractors do not form a total preorder, then there exist indices 
$i,j$ such that neither $A_i \subseteq A_j$ nor $A_j \subseteq A_i$ holds. 
Then player $\Ad$ wins by playing arbitrarily until $t_i$ or $t_j$ is reached, and then avoiding the other target forever.
Two memory values suffice to implement strategies with this behaviour.

We consider the remaining case in which the attractors form a total preorder, but $s\notin\textstyle\bigcap_{1\leq i \leq n} A_i$ or we have $t_{j+1}\notin \AtE(A_{j} \cap F_{n+1})$ for all $j$.
In the former case, player $\Ev$ cannot attract from $s$ to some target $t_i$. Thus player $\Ad$ wins from $s$ using a positional strategy that avoids $t_i$ forever.
In the latter case, player $\Ad$ wins by staying within the attractors $A_i$ but avoiding the sets $A_{j} \cap F_{n+1}$ for all $j$. While this strategy may visit all singleton targets $t_i$, it avoids the set $F_{n+1}$ forever so that $\Ad$ wins.
\end{proof}

%\subsection{Generalised Reachability with Mostly Singleton Targets}

Next we consider the case of generalised reachability games with a total of $n+k$ target sets: $n$ singleton targets and $k$ target sets of size at least $2$.

\begin{thm}\label{thm:fpt-2-player}
$\genreach$ can be solved in time $\mathcal{O}(mn2^k)$ for generalised reachability games with $m$ edges, $n$ singleton target sets and $k$ target sets of size at least $2$.  
\end{thm}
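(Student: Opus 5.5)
We generalise the proof of Theorem~\ref{thm:genReachSingletons} by a dynamic programme over the subsets $S\subseteq\{1,\ldots,k\}$ of the large target sets $G_1,\ldots,G_k$ (those with $|G_i|\geq 2$). As there, we first compute the attractors $A_i=\AtE(\{t_i\})$ of the singleton targets and check that they form a total preorder. If they do not, $\Ad$ wins by the argument of Theorem~\ref{thm:ptime-singleton-2player}. Otherwise we order them as $A_n\subseteq\cdots\subseteq A_1$ and set $A_0=V$.

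The key structural claim is the following. If $\Ev$ wins, she wins by a strategy that visits the singletons in the fixed order $t_n,t_{n-1},\ldots,t_1$ (the first one possibly from $s$). Between consecutive singleton visits, she may interleave visits to some large sets. Such a detour towards $G_i$ must land in $A_j\cap G_i$ when the next pending singleton is $t_j$. To make this computable, define for every $S\subseteq[k]$ and every level $j\in\{0,\ldots,n\}$ the set $W_j(S)$ of vertices from which $\Ev$ wins the residual objective ``visit $t_j,\ldots,t_1$ and all $G_i$ with $i\in S$''.

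The plan is to compute these sets by a recursion that processes pending obligations.
\begin{itemize}
\item[--] For the base case we set $W_0(\emptyset)=V$.
\item[--] For general $(j,S)$, $\Ev$ chooses the next obligation: either the singleton $t_j$, which requires being in $\AtE(\{t_j\}\cap W_{j-1}(S))$, or some large set $G_i$ with $i\in S$, which requires being in $\AtE(G_i\cap W_j(S\setminus\{i\}))$.
\item[--] We take $W_j(S)$ to be the union of these attractors.
\end{itemize}
This is a nested attractor computation (a reachability game on the product with the obligation state). Computed naively, it has $n2^k$ states, each handled by one attractor computation in time $\mathcal{O}(m)$. The required total time $\mathcal{O}(mn2^k)$ is therefore achieved if each pair $(j,S)$ costs only $\mathcal{O}(m)$ amortised. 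To get this, I would compute one attractor per pair, of the union of the relevant target sets: first mark each target vertex with its successor obligation state, then run a single backward attractor pass.

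Soundness is routine: the recursion directly yields a strategy with memory recording $(j,S)$, which is a concatenation of attractor strategies as in Figure~\ref{fig:genreach}. The main obstacle is completeness. We must show that restricting to strategies that discharge obligations one at a time is sound, and that the union-of-attractors recursion is correct when a play may accidentally visit several targets at once. Here the inclusion ordering of the $A_i$ matters. Reaching $t_j$ while $t_{j'}$ with $j'<j$ is still pending is harmless, because $t_j\in A_j\subseteq A_{j'}$. Visiting a singleton out of order can be ignored, since $\Ev$ can revisit it later. For the large sets, I would argue by the determinacy of the finite-state game on the product arena $V\times\{0,\ldots,n\}\times 2^{[k]}$, which tracks which sets have been visited. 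Its winning region coincides with $\bigcup_j W_j$ once we observe that accidental visits only shrink $S$, and $W_j(\cdot)$ is monotone under shrinking $S$. I expect this monotonicity-plus-determinacy argument, together with getting the per-state cost down to $\mathcal{O}(m)$ rather than $\mathcal{O}(mk)$, to be the delicate parts.
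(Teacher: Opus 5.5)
\textbf{There is a genuine gap: the recursion you define is incomplete.} Taking $W_j(S)$ to be the \emph{union} of the attractors $\AtE(\{t_j\}\cap W_{j-1}(S))$ and $\AtE(G_i\cap W_j(S\setminus\{i\}))$ forces $\Ev$ to commit in advance to which obligation she discharges next. But $\Ad$ may be the one who decides that order.

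Here is a counterexample. Take $n=0$, $k=2$, an $\Ad$ vertex $v$ with successors $a$ and $b$, edges $a\to a'$ and $b\to b'$, self-loops on $a'$ and $b'$, and $G_1=\{a,b'\}$, $G_2=\{b,a'\}$. Every play from $v$ visits both sets, so $\Ev$ wins from $v$. However, $b'\notin W_0(\{2\})=\AtE(G_2)$ and $a'\notin W_0(\{1\})=\AtE(G_1)$. Your recursion therefore gives $W_0(\{1,2\})=\AtE(\{a\})\cup\AtE(\{b\})=\{a,b\}$, which misses $v$. Prefixing a singleton target $t_1$ with the single edge $t_1\to v$ gives the same failure with $n=1$.

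So the claim that the product-game winning region coincides with $\bigcup_j W_j$ is false for this definition, and no monotonicity or determinacy argument can rescue it. Your complexity paragraph then switches to a different object, the attractor of the union $\AtE\bigl((\{t_j\}\cap W_{j-1}(S))\cup\bigcup_{i\in S}(G_i\cap W_j(S\setminus\{i\}))\bigr)$, and presents it as a faster way to compute the same set. It is not the same set: $\AtE(X\cup Y)$ can be strictly larger than $\AtE(X)\cup\AtE(Y)$, exactly as in this example.

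\textbf{How to repair it.} Make the attractor of the union the definition. Two further arguments are then needed.
\begin{itemize}
\item[--] Your remark that an out-of-order singleton ``can be revisited later'' needs its real justification. Let $t_{j'}$ be a singleton visited before the pending $t_j$. Then $A_{j'}=A_j$, because a play consistent with a winning strategy cannot leave $A_j$ before $t_j$ is visited. So from the first visited member of this tie class, $\Ev$ can tour the class in the prescribed order, return, and continue. This works because her winning region is upward closed in the set of already visited targets (the dual of Lemma~\ref{lem:adam-mem-downward-closed}). This reduces the problem to the ordered game.
\item[--] Completeness then follows from a first-discharge argument. Under a winning strategy from $v$, consider the first vertex $u$ at which a pending obligation is discharged. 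It is winning for the reduced residual objective, so by induction it lies in $\{t_j\}\cap W_{j-1}(S)$ or in some $G_i\cap W_j(S\setminus\{i\})$. Hence $v$ lies in the single attractor.
\end{itemize}

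\textbf{Comparison with the paper.} With these repairs your computation is essentially the paper's. The paper builds the product $V\times 2^{[k]}$ once, with the singletons indexed in visiting order. It sets $D_{n+1}=V\times\{[k]\}$ and $D_i=\{(t_i,S)\mid (t_i,S)\in\AtE(D_{i+1})\}$, and checks whether $(s,\emptyset)\in\AtE(D_1)$. The $\Ad$-dependent interleaving of large-set visits is thus resolved inside each attractor on the product, rather than by an explicit recursion over $S$. The $\mathcal{O}(mn2^k)$ bound comes from $n+1$ attractor computations costing $\mathcal{O}(m2^k)$ each.
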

\begin{proof}
Let $\mathcal{G} = (\mathcal{A}, s, \{F_1,\ldots,F_{k+n}\})$ be a generalised reachability game with parameters as stated in the claim,
that is $|F_i|>1$ for $1\leq i\leq k$ and $F_{k+j}=\{t_j\}$ for $1\leq j\leq n$.
First, we observe that a necessary condition for player $\Ev$ to win $\mathcal{G}$ is that 
$\Ev$ wins $(\mathcal{A}, s, \{F_{k+1},\ldots,F_{k+n}\})$.
\iffalse
Here, \texttt{GEN-REACH}$(G, s, T, F)$ denotes a \emph{generalised reachability game}, where:
\begin{itemize}
  \item $G = (V, (V_\circ, V_\square), E)$ is the game arena, consisting of a finite directed graph with vertex set $V$, edge set $E$, and a partition of $V$ into positions controlled by $\Ev$ ($V_\circ$) and $\Ad$($V_\square$);
  \item $s \in V$ is the start vertex;
  \item $F = \{F_1, \dots, F_k\}$ is a collection of $k$ target sets, where each $F_i \subseteq V$;
  \item The generalised reachability objective is to reach at least one vertex from each $F_i$. Formally, the set of winning plays for Eve is:
  \[
  \texttt{GenReach}(F_1, \dots, F_k) = \left\{ \pi = v_0 v_1 v_2 \dots \,\middle|\, \forall i \in \{1, \dots, k\},\, \exists p_i \in \mathbb{N},\, v_{p_i} \in F_i \right\}.
  \]
\end{itemize}
\fi
If player $\Ev$ wins $ (\mathcal{A}, s, \{F_1,\ldots,F_{k+n}\})$,
we hence can follow the proof of Theorem~\ref{thm:genReachSingletons} by
assuming without loss of generality a total order on the singleton target states as $t_1,\dots,t_n$ where $s = t_0 \in \AtE{(t_1)}$ and $t_i \in \AtE{(t_{i+1})}$ for $1\leq i<n$. Then let $T$ denote $\{t_0,t_1,\ldots,t_n\}$.

Next we transform the game arena of $\mathcal{G}$ in order to treat the non-singleton target sets $\mathcal{F}$. To this end, 
let $V$ denote the set of vertices of $\mathcal{A}$ and 
consider the game arena $\hat{\mathcal{A}}$ over $\hat{V} = V \times 2^{[k]}$, that is, $V$ is augmented with the memory structure $2^{[k]}$, storing the set of target sets from
$\{F_1,\ldots,F_k\}$ that have been visited so far. States $(u,S)\in\hat{V}$ are owned by the player owning $u$ in $\mathcal{A}$ and we have an edge $(u,S) \rightarrow (v,S')$ in $\hat{\mathcal{A}}$ iff (i) $u \rightarrow v$ is an edge in $\mathcal{A}$ and (ii) $S'= S \cup \{j\in\{1,\ldots,k\} \mid v \in F_j\}$.

Next, we inductively define a sequence of sets $D_i$ by putting $D_{n+1} = V \times [k]$ and, for $0\leq i\leq n$,
$D_i = \{(u,S) \in A_{i+1} \mid u = t_{i}\}$, where $A_{i+1} = \AtE(D_{i+1})$.

We claim that $\Ev$ wins ${\mathcal{G}}$ if and only if $(s,\emptyset)\in D_0$.

For one direction, let $(s,\emptyset)\in D_0$. Then $\Ev$ can just follow her individual attractor strategies one after another, which results in visits
to all target states $t_i\in T$, as well to all target sets $F_i\in \F$; the latter is the case since the values of the auxiliary memory indicate the 
target sets that have been visited so far, and since the constructed strategy ensures that a game node from
$D_{n+1}$, that is, with auxiliary memory value $[k]=\{1,\ldots,k\}$ is eventually reached.

For the converse direction, let $(s,\emptyset)\notin D_0$.
We show that $\Ad$ wins $\mathcal{G}$
%We first assume that the attractors of all singleton target sets in $\mathcal{G}$ are different.
by constructing a winning strategy for $\Ad$ in $\mathcal{G}$ as follows.
We note that $\Ad$ can win by preventing $\Ev$ from visiting the individual target vertices $t_i$ in the order
prescribed by the partial order on their attractors in $\mathcal{G}$.
For this, he wins if at some point, the first component of $\hat{\mathcal{A}}$ (the original game node in $\mathcal{A}$) falls outside of the attractor for the next target vertex.
Plays for which this does not happen may eventually visit all singleton target sets. In this case, it still follows from $(s,\emptyset)\notin D_0$ that
$\Ad$ can avoid, after visiting all singleton targets, at least one of the larger target sets $F_i$ for $1\leq i\leq k$.

With this in mind, $\Ad$ can just follow a strategy in $\mathcal{G}$ that attempts to prevent that the induced play on $\hat{\mathcal{A}}$ ever reaches $D_1$ from vertices of the form $(t_0,S) \notin D_0$; if this fails, $\Ad$ uses his strategy to prevent states of the form $(t_1,S) \notin D_1$ to reach $D_2$, and so on.
Since $(s,\emptyset)\notin D_0$, $\Ad$ can use this strategy to ensure
that whenever a vertex $u$ is reached by a play in $\mathcal{G}$ that visits all target sets $F_i\in \mathcal{F}$ (which corresponds to reaching the vertex $(u,[k])\in D_{n+1}$ in
the corresponding play on $\mathcal{A}'$), then at least one of the target states $t_i$ has not been visited by the play so far. Thus the described strategy indeed is winning for $\Ad$, as required.

%We note that if two or more target states have the same attractor, then they are neighbours $t_i,\ldots,t_j$ in the partial order on attractors.
%For deciding reachability, we can then just keep one such state, keeping in mind that we can reach all other such states and return to any of these target states afterwards.

Regarding runtime complexity, the proposed solution algorithm computes $n+2$ sets $D_i$ and $A_i$; both are subsets of $V\times 2^{[k]}$. Given $A_{i+1}$, the computation of a single set $D_i$ can be done in time linear in $m \cdot 2^k$.
The computation of a single attractor set $A_i$ over a graph with $m \cdot 2^k$ edges takes time $\mathcal{O}(m \cdot 2^k)$. Hence the overall algorithm can be implemented to run in time
$\mathcal{O}(nm2^k)$.
\end{proof}

This in particular provides fixed parameter tractability.

\begin{cor}
$\genreach$ is in \P~when the number of target sets that are not singleton is logarithmic in the size of the game.
\end{cor}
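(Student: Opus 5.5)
This is a direct consequence of Theorem~\ref{thm:fpt-2-player}. Fix a game $\game = (\Aa, s, \F)$ and let $N$ be its size. Assume the number $k$ of non-singleton target sets satisfies $k \leq c \log N$ for some constant $c$. I would instantiate the running-time bound $\mathcal{O}(mn2^k)$ of Theorem~\ref{thm:fpt-2-player} with this $k$.

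The key steps are as follows.
\begin{itemize}
\item[--] The number of edges satisfies $m \leq |V|^2 \leq N^2$.
\item[--] The number $n$ of singleton target sets satisfies $n \leq |\F| \leq N$, since each target set is part of the input.
\item[--] The exponential factor satisfies $2^k \leq 2^{c\log N} = N^{c}$.
\end{itemize}
Multiplying these bounds, the algorithm of Theorem~\ref{thm:fpt-2-player} runs in time $\mathcal{O}(N^{3+c})$, which is polynomial. The algorithm itself needs no change: it builds the product arena $V\times 2^{[k]}$, which now has polynomially many vertices and edges, and runs the $n+2$ attractor computations there.

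The only point that needs care is the bookkeeping between the two kinds of target sets. Separating singleton from non-singleton target sets is trivial to do in linear time, so the input is already in the form Theorem~\ref{thm:fpt-2-player} requires. I would also fix the convention that ``logarithmic'' means $k = \mathcal{O}(\log N)$ for a fixed constant, and note that the logarithm's base only changes that constant. No real obstacle is expected.
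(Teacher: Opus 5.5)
Your proof is correct and is the paper's argument: the paper states the corollary as an immediate consequence of the $\mathcal{O}(mn2^k)$ bound of Theorem~\ref{thm:fpt-2-player}. Substituting $k \leq c\log N$ gives $2^k \leq N^{c}$, and your bounds $m \leq N^2$ and $n \leq N$ then yield a polynomial running time, which is all that is needed.
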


\subsection{One Player Case}
\paragraph{One Player Case with $\Ev$}
We consider the case of generalised reachability games in which
$\Ev$ controls all vertices. The solution of such games is $\NP$-complete in general,
but becomes tractable if sizes of target sets are suitably restricted.

\begin{thm}\label{thm:genreach-eve-only}
  Let $\game =(\Aa,s,\F)$ be a generalised reachability game with all vertices controlled by $\Ev$.
Then $\genreach$ is 
\begin{enumerate}
  \item \NP-complete in general and is \NP-hard even when $|F_i| = 3$ for each $F_i \in \F$;
  \item in $\P$ if for all $F_i\in \F$, $|F_i| \leq 2$;
  \item $\NL$-complete if for all $F_i\in \F$, $|F_i| = 1$.
\end{enumerate}
\end{thm}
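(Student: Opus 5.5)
We treat the three items separately. Throughout we use that in a one-player game with only $\Ev$, she wins iff there is a (finite) path from $s$ in the graph visiting every $F_i$. Moreover, she may restrict to paths of length at most $n\cdot|V|$, since between consecutive first visits to new target sets a shortest path suffices.

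\textbf{Item (1).} For membership in \NP, we guess a path of length at most $n|V|$ from $s$ and check in polynomial time that it follows edges and hits every $F_i$. For hardness we reduce from 3-SAT, reusing the construction of Theorem~\ref{thm:pspace-hardness-general-case} with every variable existentially quantified. The arena is the chain of gadgets $i\to\{x_i,\neg x_i\}\to i+1$, and there is one target set $F_j$ of size $3$ per clause. This arena is a DAG whose plays from $1$ correspond bijectively to assignments, and a play visits all $F_j$ iff the assignment satisfies the formula. All vertices belong to $\Ev$.

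\textbf{Item (3).} For the \NL upper bound, compute the SCC DAG; we will not do this explicitly in logspace but argue as follows. A path visiting all singleton targets $t_1,\dots,t_n$ exists iff the targets can be ordered $t_{\pi(1)},\dots,t_{\pi(n)}$ with $s\leadsto t_{\pi(1)}$ and $t_{\pi(i)}\leadsto t_{\pi(i+1)}$. This holds iff reachability restricted to $\{s\}\cup T$ is a total preorder with $s$ at the bottom, i.e.\ for all $i,j$ we have $s\leadsto t_i$ and ($t_i\leadsto t_j$ or $t_j\leadsto t_i$). The argument is the one-player analogue of Theorem~\ref{thm:ptime-singleton-2player}, with attractors replaced by reachability sets. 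This is a conjunction of polynomially many reachability/disjunction queries. Since $\NL=\coNL$ is closed under such Boolean combinations (logspace Turing reductions to \NL), the problem lies in \NL. \NL-hardness follows from directed $s$-$t$ reachability by taking a single target $\{t\}$.

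\textbf{Item (2).} This is the main obstacle. We first compute the SCC DAG and contract each SCC $C$. Inside an SCC $\Ev$ can visit everything, so $F_i$ is satisfied by a path iff the path passes through an SCC meeting $F_i$. A winning play is thus a path $C_0=[s],C_1,\dots,C_\ell$ in the DAG. Each $F_i=\{a_i,b_i\}$ imposes the constraint that the chosen path contains $[a_i]$ or $[b_i]$.

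We plan a 2-SAT encoding with a variable $y_C$ (``$C$ is on the path'') for each SCC. The clauses are:
\begin{itemize}
\item[--] $y_{[a_i]}\lor y_{[b_i]}$ for each target set;
\item[--] $y_{[s]}$;
\item[--] $\neg y_C\lor\neg y_D$ for each pair of DAG-incomparable SCCs (neither reaches the other);
\item[--] $\neg y_C$ for each SCC not reachable from $s$.
\end{itemize}
Any path yields a satisfying assignment. Conversely, a satisfying assignment selects a set of pairwise comparable SCCs reachable from $s$, hence a chain in the reachability order. Since reachability is transitive, a chain can be threaded into a single path, and that path covers every target set. The delicate point is exactly this converse: comparability must suffice to concatenate paths, which it does because comparable SCCs are ordered by reachability and the chain is finite. 2-SAT is solvable in polynomial time (indeed \NL), so item (2) follows.
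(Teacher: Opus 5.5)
Your proposal is correct and follows essentially the paper's route. For (1) you guess a short path and reuse the all-existential QBF reduction; for (2) you build a 2-SAT encoding with pairwise incomparability clauses plus one coverage clause per target set; for (3) you test pairwise comparability of the singleton targets.

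Your extra unit clauses $y_{[s]}$ and $\neg y_C$ for SCCs unreachable from $s$ are genuinely needed, as is the check in (3) that $s$ reaches each $t_i$. The paper's assumption that each $F_i$ has \emph{some} element reachable from $v_0$ does not by itself rule out satisfying assignments that pick unreachable targets. Your reduction from $s$-$t$ reachability also supplies the \NL-hardness that the paper's proof of (3) leaves implicit.
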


\begin{proof}
\begin{enumerate}

\item 
%First, we prove the \NP upper bounds.
Player $\Ev$ has control over all branching and hence can guess a path that satisfies the winning condition. If $\Ev$ wins, then she has a winning strategy that wins within $|V| \cdot |\F|$ steps. The algorithm guesses a path of length $|V| \cdot |\F|$ and checks whether it is winning. This gives an $\NP$ upper bound.

For the lower bound, in our previous reduction for \PSPACE-hardness in the proof of Theorem \ref{thm:pspace-hardness-general-case}, consider the case where all variables in the original formula are quantified existentially. 
Then the problem corresponds to 3-SAT, which is $\NP$-complete.
Resulting games are one-player games, \textit{i.e} all vertices belong to Eve,
hence solving one-player generalised reachability games is $\NP$-hard even when $|F_i| = 3$ for each $F_i \in \F$.

\item 
If all target sets are of size at most 2, then we can encode the winning condition as a 2SAT formula and check its 
satisfiability in polynomial time:

Similar to the previous subsection, we consider the preorder on the set of vertices $V$ in $\mathcal{G}$
defined by $v \preceq v'$ if $v \in \AtE(v')$.
Note that in the case of one-player arenas, $v \in \AtE(v')$ reduces to ``there is a path from $v$ to $v'$''.

Let $F_i = \set{x_i,y_i}$ be the reachability sets, and $v_0$ be the starting vertex. 
We assume without loss of generality that there is a path from $v_0$ to every $F_i$ (that is, either to $x_i$ or $y_i$),
otherwise $\Ev$ cannot win. 
(This property is easily checked in deterministic polynomial time.)
A first statement is as follows: $\Ev$ wins from $v_0$ if and only if 
there exist $v_1,\ldots,v_k$ target vertices such that
\begin{enumerate}
	\item[1.] for all $0 \leq i \leq k-1$, $v_i \preceq v_{i+1}$ and
	\item[2.] each target set $F_i$ is represented in $\set{v_1,\ldots,v_k}$.
\end{enumerate}
We turn this condition into a boolean formula where clauses have size $2$.
We consider the $2\cdot k$ variables $X_i$ and $Y_i$,
that correspond to vertices $x_i$ and $y_i$.
We define the formula $\phi$:
$$\underbrace{\bigwedge_{\ } \set{(\neg X \vee \neg Y) \mid 
\textrm{ if } x \not\preceq y \textrm{ and } y \not\preceq x}}_{(a)}
\wedge \underbrace{\bigwedge_{i} (X_i \vee Y_i)}_{(b)},$$
where $x,y$ ranges over target vertices (that is, vertices from $F_i$ for some $i$).

We argue that $\Ev$ wins from $v_0$ if and only if $\phi$ is satisfiable.
Assume $\Ev$ wins from $v_0$: let $v_1,\ldots,v_k$ be as in the previous statement,
and set the corresponding variables to true and the others to false, we claim that the formula $\phi$ is satisfied.
Indeed, condition 2. ensures that the clauses under-braced $(b)$ are satisfied,
and for the clauses under-braced $(a)$, let $x,y$ such that $x \not\preceq y$ and $y \not\preceq x$.
If $x$ is one of the $v_i$'s, then $y$ cannot be, so $\neg X \vee \neg Y$ holds.
Conversely, assume that $\phi$ is satisfiable.
The clauses under-braced $(a)$ ensure that the order $\preceq$ is total over vertices set to true.
The clauses under-braced $(b)$ ensure that at least one vertex from each reachability set is set to true.
Combining those two statements, we reach the condition stated above.

The latter allows to decide in polynomial time whether $\Ev$ wins from $v_0$ 
by checking the formula $\phi$ for satisfiability.
\item 
If  all target sets are singletons, then we can check in logarithmic space if there exist 
targets $t_i,t_j\in T$ such that $t_i$ is not reachable from $t_j$ and $t_j$ is not reachable from $t_i$. 
If no such pair of targets exists, then $\Ev$ wins by going through all targets in order of reachability. If such a pair of targets exists, 
then any play must first reach one of the two targets, say $t_i$, and then it cannot reach $t_j$ and hence $\Ev$ loses.

\end{enumerate}
\end{proof}

\paragraph{One Player Case with $\Ad$}
Next, we consider the special case of generalised reachability games in which $\Ad$ controls all vertices.
We show that solving such games is $\NL$-complete.
%For one-player games with $\Ev$, $\genreach$ is known to be $\NP$-complete in general, and in $\P$ when $|F_i| \leq 2$ for all $F_i \in \F$~\cite{fij-horn2010}.
%If $|F_i|=1$ for all $F_i\in \F$, \cite[Theorem 5]{austin2025} can be modified to obtain $\NL$-completeness. 
%The general case was shown to be in $\P$ in~\cite{fij-horn2010}. We provide an improved complexity bound by showing that the problem is in fact $\NL$-complete. 
\begin{thm}\label{thm:genreach-adam-only}
Let $\game =(\Aa,s,\F)$ be a generalised reachability game with all vertices controlled by $\Ad$. Then,
$\genreach$ is $\NL$-complete. The $\NL$-hardness holds even when $|\F|=1$. %
\end{thm}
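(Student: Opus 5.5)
In an $\Ad$-only game there is no branching for $\Ev$, so $\Ev$ wins from $s$ if and only if \emph{every} infinite path from $s$ visits every $F_i$. Equivalently, $\Ad$ wins if and only if there exist an index $i$ and an infinite path from $s$ that avoids $F_i$ entirely. The plan is to show that the complement of $\genreach$ is in \NL; since $\NL = \coNL$ (Immerman--Szelepcs\'enyi), this gives membership. Hardness comes from a reduction from (the complement of) graph reachability.

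\textbf{Upper bound.} Fix $i$ and let $G_i$ be the subgraph induced by $V \setminus F_i$. An infinite path from $s$ avoiding $F_i$ exists if and only if $s \notin F_i$ and there are a vertex $v$ and a path in $G_i$ from $s$ to $v$ such that $v$ lies on a cycle in $G_i$. This uses the standing assumption that every vertex has a successor only to note that infinite paths in a finite graph are exactly lasso-shaped ones. A nondeterministic logspace machine for the complement works as follows:
\begin{itemize}
\item guess $i$ (an index of logarithmic size);
\item guess $v$;
\item verify reachability $s \to v$ in $G_i$ by guessing the path step by step, checking each vertex against $F_i$ by scanning the input;
\item verify a nonempty cycle $v \to^{+} v$ in $G_i$ in the same way.
\end{itemize}
Only a constant number of vertex names and counters are stored, so the complement of $\genreach$ restricted to $\Ad$-only games is in \NL. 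Hence the problem itself is in $\coNL = \NL$.

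\textbf{Lower bound with $|\F| = 1$.} Reduce from directed reachability, which is \NL-complete, or rather from its complement, which is also \NL-complete since $\NL=\coNL$. Given a graph $H$ with vertices $a$ and $b$, first add a self-loop to every vertex. This ensures each vertex has a successor and does not change reachability. Give all vertices to $\Ad$, set $s = a$ and $\F = \{\{b\}\}$. Then every infinite path from $a$ visits $b$ if and only if $a=b$, because $\Ad$ can simply loop at $a$. So this naive construction must be adjusted.

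\textbf{Main obstacle: getting the hardness reduction right.} What is needed is an instance where $\Ev$ wins exactly when $b$ is \emph{not} reachable from $a$. Add a fresh sink $\top$ with a self-loop, and let the single target set be $F = \{\top\}$. Remove self-loops from $H$, and instead add an edge from every vertex to $\top$. Make $b$ a dead end except for a self-loop, i.e.\ remove its other outgoing edges, including its edge to $\top$. Now an infinite path from $a$ avoiding $\top$ must either reach $b$ and loop there forever, or run forever inside $H \setminus \{b\}$ along a cycle. To rule out the second option, first make $H$ acyclic by the standard layering construction: take vertices $(v,j)$ for $0 \le j \le |V|$ with edges $(u,j) \to (v,j+1)$. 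This construction is logspace and preserves reachability within $|V|$ steps. After layering, every infinite path either hits $\top$ or reaches $(b,j)$ and loops there, so $\Ev$ wins if and only if $b$ is unreachable from $a$. This places the complement of reachability below $\genreach$, and since $\NL = \coNL$, $\genreach$ is \NL-hard even with $|\F|=1$. The delicate point is exactly this control of infinite paths: unreachability alone is not enough, and the layering step is what eliminates spurious $\Ad$-winning cycles.
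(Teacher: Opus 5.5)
Your proof is correct and follows essentially the same route as the paper: membership by guessing an $F_i$-avoiding lasso for $\Ad$ and using \textsc{NL} $=$ co\textsc{NL}, and hardness via a layered copy of the reachability graph with a single singleton target. The only difference is cosmetic. The paper makes the target $\{\bot\}$ reachable from every copy $(v,i)$ with $v\neq t$ and lets $(t,i)$ escape to a separate safe sink $\top$, so $t$ is reachable iff $\Ad$ wins; you instead make $\top$ the target and turn the copies $(b,j)$ themselves into safe self-loops, which is the same reduction stated from $\Ev$'s side.
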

\begin{proof}
First, we show containment in $\NL$. $\Ad$ wins if he has a strategy to ensure that some $F_i \in \F$ is never visited. In other words, $\Ad$ wins from vertex $v$ iff there is some $i$ such that $v \not \in \AtE(F_i)$. %
Observe that, if $\Ad$ wins, there is always a winning strategy of $\Ad$ that produces a \emph{lasso} play, i.e. a play of the form $v_0, \dots ,v_i, \dots ,v_m$ where $m \leq |V|$, $v_i = v_m$ and no vertex is repeated except $v_i = v_m$. This is because, if a winning strategy produces a play with more than one loop, then $\Ad$ has another winning strategy where at least one of those loops is not taken. We call such a play an $(v_0,v_i,v_i)$ lasso.  
Based on this, we provide an $\NL$ algorithm for the complement decision problem, i.e. for checking whether $\Ad$ has a winning strategy.
Since $\NL = \coNL$, this gives an $\NL$ upper bound. 

We non-deterministically guess a $(s,t,t)$ lasso and a set $F_i$ such that no vertex from the lasso is in $F_i$.
We guess the lasso by guessing successors step by step starting from $s$. At each step we check that the vertex is not in $F_i$. We also non-deterministically guess the vertex $t$ at some step and store it. We also maintain a counter to store the length of the play so far. The algorithm terminates when $t$ is repeated and the length is no more than $n$. This algorithm uses logarithmic space.

\begin{figure}[h!]
 \centering
\scalebox{0.7}{
\begin{tikzpicture}[shorten >=0.7pt, node distance=1.6cm, on grid, auto, thick, >=stealth']
   \tikzstyle{state}=[circle, draw, fill=gray!10, minimum size=8mm]
   \tikzstyle{box}=[rectangle, draw, fill=gray!10, minimum size=8mm]

   \node[state] (s) [yshift=-0.8cm] {$s$};
   \node[state] (u) [below right=of s] {$u$};
   \node[state] (t) [above right=of u] {$t$};

   \node[box] (s1) [right=of start, xshift=4cm,yshift=0.8cm] {$s,1$};
   \node[box] (u1) [below=of s1] {$u,1$};
   \node[box] (t1) [below=of u1] {$t,1$};

   \node[box] (s2) [right=of s1, xshift=1cm] {$s,2$};
   \node[box] (u2) [below=of s2] {$u,2$};
   \node[box] (t2) [below=of u2] {$t,2$};

   \node[box] (s3) [right=of s2, xshift=1cm] {$s,3$};
   \node[box] (u3) [below=of s3] {$u,3$};
   \node[box] (t3) [below=of u3] {$t,3$};

   \node[box] (s4) [right=of s3, xshift=1cm] {$s,4$};
   \node[box] (u4) [below=of s4] {$u,4$};
   \node[box] (t4) [below=of u4] {$t,4$};

   \node[box,double, double distance=0.7mm] (bot) [above=of s2, xshift=1cm] {$\bot$};
   \node[box] (top) [below=of t2, xshift=1cm] {$\top$};

   \path[->]

   (s) edge [bend left=20] (u)
   (u) edge [bend left=20] (s)
   (u) edge (t)
   (u) edge [loop right] (u)
   (t) edge [loop right] (t)

   (s1) edge [dashed] (u2)
   (u1) edge (u2)
   (u1) edge (s2)
   (u1) edge (t2)
   (t1) edge (t2)
   
   (s2) edge (u3)
   (u2) edge  (u3)
   (u2) edge (s3)
   (u2) edge [dashed] (t3)
   (t2) edge (t3)

   (s3) edge (u4)
   (u3) edge (u4)
   (u3) edge (s4)
   (u3) edge (t4)
   (t3) edge (t4)

   (t1) edge (top)
   (t2) edge (top)
   (t3) edge [dashed] (top)
   (t4) edge (top)

   (s1) edge (bot)
   (s2) edge (bot)
   (s3) edge (bot)
   (s4) edge (bot)
   (u1) edge [bend left=10] (bot)
   (u2) edge [bend right=10] (bot)
   (u3) edge [bend left=10] (bot)
   (u4) edge [bend right=10] (bot)

   (top) edge [dashed, loop below] (top)
   (bot) edge [loop above] (bot);
\end{tikzpicture}
}
 \caption{Example of reduction from $s-t$ reachability to one-player generalised reachability.}
    \label{fig:nl}
\end{figure}

For the lower bound, we provide a reduction from the $\NL$-complete $s-t$ reachability problem. The $s-t$ reachability problem asks whether a vertex $t$ is reachable from a vertex $s$ in a given graph $H = (V_H,E_H)$ with $s,t \in V_H$. We construct a game $\Gg$ such that $t$ is reachable from $s$ in $H$ if and only if $\Ad$ has a winning strategy.

Let $|V_H| = n$. The game $\Gg$ uses the arena $\Aa$ with underlying graph $G = (V,E)$ over $V = V_H \times [n+1] \cup \{ \top, \bot\}$, all of which are owned by $\Ad$. If $(u,v) \in E_H$, then for every $1\leq i \leq n$, in $E$ we have $(u,i) \rightarrow (v,i+1)$. For every $i$ we have $(t,i) \rightarrow \top$. For every $v \in V\setminus \{t\}$ and for each $i$, we have $(v,i) \rightarrow \bot$. The vertices $\top$ and $\bot$ have self loops. The start vertex is $(s,1)$ and $\Gg$ has only one singleton target set $T = \{\bot\}$. %
We note that $\Gg$ can be constructed in logspace from $H$.
Figure~\ref{fig:nl} shows an example of this reduction for a graph with three vertices; the dashed arrows indicate a strategy for $\Ad$ to spoil reachability of $\{\bot\}$ in the reduced game,
corresponding to the fact that $t$ is reachable from $s$ in the graph.

We claim that in general, $t$ is reachable from $s$ in $H$ iff $\Ad$ wins the game $\Gg = (\Aa,(s,1),\{\bot\})$. %
To prove the claim, we observe that all infinite paths in $G$ end up in exactly one of the self loops at $\top$ or $\bot$. If $t$ is reachable from $s$, then $\Ad$ has a path from $(s,1)$ to $(t,i)$ for some $i$ and then can move to $\top$ and hence wins the game, i.e. $\Ad$ has a winning $((s,1),\top,\top)$ lasso. If $t$ is not reachable from $s$, then all paths starting from $(s,1)$ end up at $\bot$, in which case $\Ad$ loses. 
\end{proof}

%\vspace{10pt}
%\textcolor{red}{Move the following to the opimisation section?} 
%Makes Sense - moved.
%\begin{rem}
%For one-player games with $\Ad$, even for the optimisation variants of the problem, i.e. the $\maxgenreach$ and $\maxgenreachpromise$ problems, $\Ad$ always has an optimal strategy that produces a lasso. This follows from the fact that it is always better for $\Ad$ to encounter fewer distinct vertices, and consequently fewer target sets. 
%\end{rem}

\section{Memory Requirements for Winning Generalised Reachability Games}\label{sec:mem}
%!TEX root=./main.tex

In this section, we study the following question: 
\emph{How much memory is required to win generalised reachability games?}
\newline
\newline
First we define memory machines associated to strategies. 
For a  game $\game = (\Aa,s,\F = \{F_1,\dots,F_k\})$, a memory machine is a tuple $\M = (M, m_0, \upd)$, where 
$M = 2^{[k]}$ is the set of memory states recording which target sets have been already visited, $m_0 = \set{i \mid s \in F_i}$ is the initial memory state, and $\upd : M \times E \to M$ is the memory update function defined by $\upd(S,(v,v')) = S \cup \set{i \mid v' \in F_i}$. Note that, this is very similar to the auxiliary memory structure used in the proof of Theorem~\ref{thm:fpt-2-player}.

Given $\game$ and $\M$, one can construct the natural product game $\game \times \M$ over the set of nodes $V \times M$ and edges following the update function $\upd$. Every play in $\game$ naturally corresponds to a unique play in $\game \times \M$ and vice versa. There is an equivalence between winning the generalised reachability game $\game$ and winning in $\game \times \M$ with a reachability objective of reaching $V \times \{[k]\}$, which gives the following key observation.
\begin{lem}\label{lem:product-game-equivalence}
A play in $\game$ from $s$ is winning for $\Ev$ if and only if
the corresponding play is winning for $\Ev$ in the game $\game \times \M $
starting from $(s,m_0)$ with reachability objective of reaching $V \times \{[k]\}$.
\end{lem}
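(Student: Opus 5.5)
The plan is to make the correspondence between plays explicit and then prove, by induction along the play, an invariant about the memory component. Given a play $\pi = v_0 v_1 v_2 \ldots$ in $\game$ with $v_0 = s$, its corresponding play in $\game \times \M$ is $\hat\pi = (v_0,m_0)(v_1,m_1)(v_2,m_2)\ldots$, where $m_{j+1} = \upd(m_j,(v_j,v_{j+1}))$. I will first check that this is well defined. Each step is an edge of the product, because $v_j \rightarrow v_{j+1}$ is an edge of $\Aa$ and the second component is updated exactly by $\upd$. Conversely, projecting a product play onto its first component gives back a play of $\game$. Since $\upd$ is deterministic, these two maps are mutually inverse bijections between plays from $s$ and plays from $(s,m_0)$. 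This justifies the phrase ``the corresponding play'' in the statement.

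The central step is the invariant
\[ m_j = \set{ i \in [k] \mid \exists\, \ell \leq j,\ v_\ell \in F_i } \quad \text{for all } j \geq 0, \]
proved by induction on $j$. The base case is the definition $m_0 = \set{i \mid s \in F_i}$. For the inductive step, $m_{j+1} = m_j \cup \set{i \mid v_{j+1} \in F_i}$, which is exactly the set of indices $i$ for which $F_i$ has been hit by some $v_\ell$ with $\ell \leq j+1$.

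Both directions now follow from the invariant. If $\pi$ is winning, then for each $i$ there is some index $p_i$ with $v_{p_i} \in F_i$. Setting $j = \max_i p_i$, which is finite because there are only $k$ target sets, the invariant gives $m_j = [k]$, so $\hat\pi$ reaches $V \times \{[k]\}$. Conversely, if $\hat\pi$ visits $(v_j,[k])$ for some $j$, then every $i \in [k]$ lies in $m_j$. By the invariant, every $F_i$ is then visited by $\pi$, so $\pi$ is winning.

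There is no real obstacle in this argument. The only points needing care are the initial memory value, which must account for $s$ itself lying in some target sets, and the fact that a winning play visits its targets at finitely many positions so that a single prefix captures all of them. I will also note in passing that the memory component is monotone along a play, so once $[k]$ is reached it stays there. This is not needed for the equivalence, but it is convenient later when strategies are transferred between $\game$ and $\game \times \M$.
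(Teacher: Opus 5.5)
Your proposal is correct and follows essentially the same route as the paper: the paper's proof likewise rests on the observation that the memory component after any prefix records exactly the set of targets visited so far. From that observation it reads off both directions. You simply make the play correspondence and the induction behind that invariant explicit, including the initial value $m_0$ accounting for $s$ itself.
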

\begin{proof}
By construction of $\upd$, after any play prefix $v_0 v_1 \ldots v_j$ in $\game$, the memory component of the corresponding product state $(v_j, S)$ in $\game \times \M$ records exactly the set of targets visited so far. Therefore, the play in $\game$ eventually visits all target sets (winning condition for $\Ev$) if and only if the corresponding play in $\game \times \M$ eventually reaches a state with memory component $[k]$, i.e.\ reaches $V \times \{[k]\}$.
\end{proof}
Hereafter we will use $\game \times \M$ to denote the game with the reachability objective (of $\Ev$) of reaching $V \times \{[k]\}$. Note that a winning strategy for any player in this game is always positional. We will now make the following crucial observation about the structure of winning strategies for $\Ad$ in $\game \times \M$.

\begin{lem}
\label{lem:adam-mem-downward-closed}
For $v \in V$ and $S \subseteq [k]$, if $\Ad$ wins the game $\game \times \M$ starting from $(v,S)$, then for every $S' \subseteq S$, $\Ad$ also wins the game starting from $(v,S')$. Moreover, if $\tau$ is a winning strategy for $\Ad$ from $(v,S)$, then the strategy $\tau'$ where $\tau'(v',S' \cup U) = \tau(v',S \cup U)$ for all $v' \in V$ and $U \subseteq [k]$ is also a winning strategy of $\Ad$ starting from $(v,S')$.
\end{lem}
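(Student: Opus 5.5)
We will prove the ``moreover'' part directly; the first claim then follows at once. The plan is a simulation argument: a play in $\game \times \M$ from $(v,S')$ that follows $\tau'$ is matched with the play from $(v,S)$ that follows $\tau$ and has the same vertex sequence. We then show that if the former reached full memory, so would the latter, contradicting that $\tau$ is winning.

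\textbf{Well-definedness.} First note that $\tau'$ is well defined on all states actually reachable from $(v,S')$. The memory update only ever adds indices, so every memory value reached from $(v,S')$ has the form $S' \cup U$ with $U \subseteq [k]$. On such states $\tau'$ is prescribed by $\tau$. The decomposition $S'\cup U$ need not be unique, so we fix a canonical choice $U = M' \setminus S'$ for memory value $M'$. Any $U$ with $S'\cup U = M'$ gives $S\cup U = S \cup M'$, because $S' \subseteq S$. Hence $\tau'(v', M') = \tau(v', S \cup M')$ is unambiguous. On states not reachable from $(v,S')$, $\tau'$ can be defined arbitrarily.

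\textbf{Invariant.} Fix any play $\rho' = (v_0,M'_0)(v_1,M'_1)\ldots$ from $(v_0,M'_0)=(v,S')$ consistent with $\tau'$. Consider the play $\rho = (v_0,M_0)(v_1,M_1)\ldots$ in $\game\times\M$ with $M_0 = S$ and the same vertex sequence; its memory values are forced by $\upd$. We show by induction on $j$ that
\[ M_j = S \cup M'_j. \]
This holds for $j=0$ because $S' \subseteq S$. It is preserved at each step, since both memories gain exactly $\{i \mid v_{j+1}\in F_i\}$.

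\textbf{Consistency with $\tau$.} Next, $\rho$ is consistent with $\tau$. At an $\Ad$-vertex $v_j$, $\rho'$ moves to
\[ \tau'(v_j,M'_j) = \tau(v_j, S\cup M'_j) = \tau(v_j,M_j), \]
so $\rho$ makes the move $\tau$ prescribes. At $\Ev$-vertices, $\rho$ simply copies $\Ev$'s choices from $\rho'$. The vertex sequences and edges are identical, so $\rho$ is a valid play.

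\textbf{Conclusion.} Since $\tau$ wins from $(v,S)$, the play $\rho$ never reaches memory $[k]$. By $M'_j \subseteq M_j$, neither does $\rho'$. So $\tau'$ is winning for $\Ad$ from $(v,S')$, which also gives the first claim.

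The only delicate point is the well-definedness of $\tau'$, namely the non-unique decomposition $S'\cup U$. It is resolved by the observation that $S\cup U$ depends only on $S'\cup U$ when $S'\subseteq S$; everything else is a routine induction.
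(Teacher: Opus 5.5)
Your proof is correct and uses the same simulation argument as the paper. You pair the play from $(v,S')$ under $\tau'$ with the play from $(v,S)$ under $\tau$ that has the same vertex sequence, with memories $M'_j$ and $S \cup M'_j$, so reaching $[k]$ in the first forces it in the second. Your version is slightly more careful than the paper's in two ways. You check that $\tau'$ is well defined despite the non-unique decomposition $S' \cup U$. You also copy Eve's moves vertex by vertex, where the paper fixes one Eve strategy $\sigma$ for both plays; read literally, that $\sigma$ need not give matching vertex sequences from the two different memory states.
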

\begin{proof}
Let $\sigma$ be a strategy of $\Ev$ in $\game \times \M$.
Consider a play $\pi$ in $\game \times \M$ from $(v,S')$ consistent with $\tau'$ and $\sigma$. 
Let $\pi'$ be the corresponding play in $\game \times \M$ from $(v,S)$ consistent with $\tau$ and $\sigma$. Since $\tau'(v',S' \cup U) = \tau(v',S \cup U)$, if at some point the play $\pi$ reaches some position $(u,S' \cup U)$ then the play $\pi'$ reaches the position $(u,S \cup U)$.

If $\pi$ is losing for $\Ad$, i.e. it reaches some position $(u,[k])$, then
$S' \cup U = [k]$, which implies $S \cup U = [k]$.
But then $\pi'$ also reaches $(u,[k])$,
contradicting that $\tau$ is winning for $\Ad$ from $(v,S)$. Hence $\tau'$ is winning for $\Ad$ from $(v,S')$.
\end{proof}

We note that a property similar to Lemma~\ref{lem:adam-mem-downward-closed} does not hold for $\Ev$. Consider the game in Figure~\ref{fig:optimisme}, where $\Ev$ reaches the targets $\F = \{ \{1\}, \{2\} \}$ from $(v_0,\{1\})$ by going right from $v_0$. However, to win from $(v_0,\emptyset)$ the same strategy does not work; she has to go left first to visit target $1$ before going right to visit target $2$. Moreover, just subset inclusion of memory states may not preserve winning for $\Ev$. $\Ev$ trivially wins from $(2, \{1,2\})$ but she cannot win starting from $(2, \{2\})$.
\begin{figure}[ht]
\begin{center}
\begin{tikzpicture}[shorten >=0.7pt, on grid, auto, thick, >=stealth']

\tikzstyle{state}=[circle, draw, fill=gray!10, minimum size=8mm]
\tikzstyle{box}=[rectangle, draw, fill=gray!10, minimum size=8mm]
	\node[state] (coeur) at (1.6,0) {$v_0$};
	\node[box,double] (1) at (0,0) {$1$};
  	\node[box,double] (2) at (3.2,0) {$2$};

	\draw[->] (coeur) to[bend right=22] (1);
	\draw[->] (coeur) -- (2);
	\draw[->] (1) to[bend right=22] (coeur);
	\draw[->] (2) edge[loop right] ();
\end{tikzpicture}
\end{center}
\caption{A game where Lemma~\ref{lem:adam-mem-downward-closed} fails for $\Ev$ }
\label{fig:optimisme}
\end{figure}

Now we describe our results for memory bounds. We begin with upper bounds.

\begin{thm}[Upper bounds on memory]
\label{thm:memupper}
For every generalised reachability game $\game = (\Aa,s,\F)$ with $\F=\{F_1,\ldots,F_k\}$,
\begin{enumerate}
	\item if $\Ev$ wins, then she has a winning strategy with $2^k - 1$ memory states;
	\item if $\Ad$ wins, then he has a winning strategy with $\binomksurdeuxk$ memory states.
\end{enumerate}
\end{thm}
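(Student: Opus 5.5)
Both bounds come from the product game $\game \times \M$ of Lemma~\ref{lem:product-game-equivalence}. Since this is a reachability game, each player has a positional winning strategy there, and such a strategy is a strategy in $\game$ with memory $2^{[k]}$. The task is to shrink this memory set: to $2^k-1$ states for $\Ev$ and to the middle layer for $\Ad$.

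\textbf{Eve.} Fix a positional attractor strategy $\sigma$ for $\Ev$ in $\game\times\M$ towards $V\times\{[k]\}$. The memory value $[k]$ is only ever reached when the play has already won, because $\upd$ is monotone. So once the memory would become $[k]$, $\Ev$ may play arbitrarily (say, reuse the move of some fixed other memory state) without any effect on the outcome. Hence the memory states $2^{[k]}\setminus\{[k]\}$ suffice, which gives $2^k-1$ states. The only check needed is that the update to $[k]$ can be redirected to an arbitrary remaining state; this is harmless since winning is already guaranteed.

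\textbf{Adam.} The plan is to use Lemma~\ref{lem:adam-mem-downward-closed}. Fix a positional winning strategy $\tau$ for $\Ad$ from $(s,m_0)$. Along any play consistent with $\tau$ the memory is a chain $S_0\subseteq S_1\subseteq\cdots$ that never reaches $[k]$.

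The key step is to compress the memory into an antichain. I take the middle layer $\mathcal{L}=\{S\subseteq[k] : |S|=\lfloor k/2\rfloor\}$ as the memory, and I map each current memory set $S$ with $|S|\le\lfloor k/2\rfloor$ to some $L\in\mathcal{L}$ with $S\subseteq L$. Lemma~\ref{lem:adam-mem-downward-closed} then says that playing as from $(v,L)$ is also winning from $(v,S)$, provided $\Ad$ wins from $(v,L)$.

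So the plan is to maintain the invariant that the current position $(v,S)$ satisfies $S\subseteq L$ for the stored $L$, and that $\Ad$ wins from $(v,L)$. When a newly visited set forces $S\not\subseteq L$, $\Ad$ must switch to a new $L'$. I would organise the switches along a symmetric chain decomposition of $2^{[k]}$ (de Bruijn--Tengbergen--Kruyswijk). This decomposition has exactly $\binomksurdeuxk$ chains, and every chain meets the middle layer, so $\Ad$ can remember only the chain index: each reachable memory set lies on one chain. The transitions must be designed so that the stored chain determines the positional move $\tau$ at the top of that chain, applied via Lemma~\ref{lem:adam-mem-downward-closed} with $S'$ the actual set and $S$ the top element of the chain.

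The main obstacle is the step just described. Lemma~\ref{lem:adam-mem-downward-closed} requires that $\Ad$ wins from the larger set, and the top of a chain may be $[k]$ or some other losing set. I would therefore first restrict to the sets $S$ from which $\Ad$ wins at the relevant vertices. If that fails, I would fall back on a Dilworth-type argument. The reachable winning memory sets form a poset with no chain longer than needed, and any antichain in $2^{[k]}$ has size at most $\binomksurdeuxk$ by Sperner's theorem. Dilworth's theorem then covers the winning sets by at most $\binomksurdeuxk$ chains, each with a maximal winning element $L$. $\Ad$ stores the chain he is on and plays $\tau(\cdot,L)$. When a target visit leaves the chain, he moves to a chain whose maximum contains the new set, which exists because the actual position is winning by the original invariant.
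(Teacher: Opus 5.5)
Your argument for $\Ev$ is correct and is the paper's: the memory value $[k]$ is entered only once the play is already won, so it can be merged with any other state.

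\textbf{The gap is in $\Ad$'s part.} Whether $\Ad$ wins from $(v,S)$ depends on $v$, not on $S$ alone. So Lemma~\ref{lem:adam-mem-downward-closed} can only be applied with a set $L$ that is winning \emph{at the current vertex}. Your fallback covers ``the winning sets'' by one global family of chains, each with a fixed top $L$. $\Ad$ then plays $\tau(\cdot,L)$ and switches chains only when a target visit leaves the chain. Such an $L$ may be winning at some other vertex but losing at the current one, and then $\tau(v,L)$ is unconstrained.

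For instance, take $k=2$ with $\emptyset,\{1\},\{2\}$ all reachable and winning somewhere, and a cover containing the chain $\emptyset\subset\{1\}$. If the play moves, without visiting targets, to an $\Ad$ vertex $v$ with $W_v=\{\emptyset,\{2\}\}$, then $\Ad$ plays $\tau(v,\{1\})$ from a position he loses anyway. That move may visit target $1$ and lose, although $(v,\emptyset)$ is winning.

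The middle-layer plan fails for the same reason. It also fails because $\Ad$ may win with more than $\lfloor k/2\rfloor$ targets already visited. More generally, the proposal never states which invariant the memory update preserves.

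\textbf{The missing idea} is to make the stored set vertex-dependent and to update it on every edge. The paper uses $W_v=\{S\mid (v,S)\text{ winning for }\Ad\}$, which is downward closed by Lemma~\ref{lem:adam-mem-downward-closed}. Its maximal elements form an antichain, so by Sperner there are at most $\binomksurdeuxk$ of them; Dilworth is unnecessary, since only maxima are ever used.

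$\Ad$ stores a maximal $L\in W_v$ containing the visited set. At his own vertices he moves to some $v'$ with $L\cup\{i\mid v'\in F_i\}\in W_{v'}$. Such a $v'$ exists because $(v,L)$ is winning, and at $\Ev$'s vertices every successor has this property. The memory is then updated to a maximal element of $W_{v'}$ containing this set. This update is computed from the stored $L$, not from the actual visited set, which $\Ad$ does not know.

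Your symmetric-chain idea can be repaired the same way. Fix a partition of $2^{[k]}$ into $\binomksurdeuxk$ chains, and read chain $C$ at vertex $v$ as $\max(C\cap W_v)$. This is an initial segment of $C$, and it is nonempty along the play by the invariant. Update to the chain containing $\max(C\cap W_v)\cup\{i\mid v'\in F_i\}$. This preserves ``the visited set is contained in a winning stored set'', and it even gives memory states whose meaning does not depend on the vertex.

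Finally, $\tau$ must be winning on all of $\Ad$'s winning region, not just from $(s,m_0)$, or be replaced by the direct choice above. The positions $(v,L)$ you query need not be consistent with $\tau$.
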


\begin{proof}
From Lemma~\ref{lem:product-game-equivalence} it follows that the winners of $\game$ and that of $\game \times \M$ with reachability objective of reaching $V \times \{[k]\}$ are the same. Since reachability games can be won with memoryless strategies, it follows that in $\game$ each player has a winning strategy using memory $\M$.
Thus both players have a winning strategy using $2^k$ memory states.
We will proceed to make the bounds more precise.
\begin{enumerate}
	\item $\Ev$ does not need a special memory state to remember that all target sets
have been visited, since at that point she has already won.
Therefore, she can always win with $2^k-1$ memory states:
the memory structure $\M$ is the same, except that the state $\set{1,\ldots,k}$ is no longer needed,
and in the update function, $\set{1,\ldots,k}$ is replaced by any state,
which does not matter since the play is already winning. 
\item For $\Ad$, Lemma~\ref{lem:adam-mem-downward-closed} allows us to merge memory states:

For $v \in V$, consider the set $W_v$ of subsets $S$ such that $(v,S)$ is winning for $\Ad$, using some winning strategy $\tau$.
The maximal elements of $W_v$ (for the relation set inclusion) are pairwise incomparable, hence there are at most $\binomksurdeuxk$ of them;
denote them by $S_1(v),\ldots,S_p(v)$.
Hence, at any node, $\Ad$ only needs to consider at most $p(v)$ memory states, where $p(v) \le \binomksurdeuxk$.
%The idea is the following: from $v$, $\Ad$ only considers one of the $p$ ``worst cases'', namely one of $S_1(v),\ldots,S_p(v)$.

Define a memory structure $\M' = (M',i_0,\upd')$.
where $M' \subseteq 2^{[k]}$ are all the maximal elements of $W_v$ for nodes $v$. %We have $|M'| \le \binomksurdeuxk$.
Since $(s,m_0)$ is winning for $\Ad$ in $\game \times \M$, there exists $i_0 \in M'$ such that $i_0$ contains $m_0$;
we take such an $i_0$ as the initial memory state.
We define the update function by letting $\upd'(i,(v,v'))$ be a $j \in M'$ such that
$j$ contains $\upd(i,(v,v'))$, if such a $j$ exists.

We now define the strategy $\tau'$ that chooses the next move from $(v,i)$.
For every $(v,S)$ winning for $\Ad$ in $\game \times \M$ using $\tau$, there exists a choice $(v',\upd(S,(v,v')))$ using $\tau$ that is also winning for $\Ad$.
Applied to $(v,i)$, this guarantees the existence of a vertex $v'$: we define $\tau'(v,i) = v'$.
For every finite play $w$ ending in $v \in V$ in which $\Ad$ plays according to $\tau'$, we show that the resulting memory state (\textit{i.e.} $\upd'^+(w)$) is an $i$ such that $i$ contains the set of visited target sets,
\textit{i.e.} $\upd^+(w) \subseteq i$.
We show this by induction. It holds initially by construction,
since $m_0 = \upd^+(s) \subseteq i_0$.
Assume that the last transition is $(v,v') \in E$.
Since $\Ad$ wins from $(v,i)$, he also wins from $(v',\upd(i,(v,v')))$:
either because $\Ev$ cannot leave $\Ad$'s winning region,
or by construction of $\Ad$'s strategy.
Therefore there exists a $j \in M'$ such that $j$ contains $\upd(i,(v,v'))$,
in other words $\upd'(i,(v,v'))$ is well defined.
By the induction hypothesis, $\upd^+(w) \subseteq i$,
hence $\upd^+(w \cdot v') \subseteq \upd(i,(v,v')) \subseteq j$.
This invariant implies that the play always remains in $\Ad$'s winning region,
and is therefore winning for $\Ad$.
The number of memory states used at any instance is $\binomksurdeuxk$.
\end{enumerate}
\hfill\end{proof}
The bound on the memory for $\Ad$ essentially comes from the size of the largest antichain in the set of subsets of $[k]$, and this applies to all topologically closed objectives~\cite{ColcombetFH14}.

We now prove that the upper bounds above are tight, by providing matching lower bounds.

\begin{thm}[Lower bounds on memory]
\label{thm:memlower}	
For every $k \ge 1$,
\begin{enumerate}
	\item there exists a generalised reachability game $\game = (\Aa,s,\F)$ with $\F=\{F_1,\ldots,F_k\}$
	in which $\Ev$ needs $2^k - 1$ memory states to win;
	\item there exists a generalised reachability game $\game = (\Aa,s,\F)$ with $\F=\{F_1,\ldots,F_k\}$
	in which $\Ad$ needs $\binomksurdeuxk$ memory states to win.
\end{enumerate}
\end{thm}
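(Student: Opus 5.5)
Both parts need an explicit family of games together with a pigeonhole argument. The argument says that a strategy with fewer memory states must, at some decision vertex, treat two histories the same way even though they require different moves. By Theorem~\ref{thm:memupper}, the memory machine $\M$ that records the set $S\subseteq[k]$ of visited target sets is always sufficient. So the plan is to build games in which (almost) all of this information is necessary, and in which the relevant collections of sets $S$ are exactly the ones counted in Theorem~\ref{thm:memupper}.

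For $\Ev$ (item 1), I would use a two-phase game. In the first phase $\Ad$ walks along a chain $u_1,\dots,u_k$. At each $u_i$ he decides whether to pass through a vertex $a_i\in F_i$ or to skip it, so that he can realise any $S\subseteq[k]$ as the set of targets visited so far. The chain then enters a hub vertex $h$ of $\Ev$. If $S=[k]$, $\Ev$ has already won, which is exactly why only $2^k-1$ histories remain relevant. For each proper subset $S$, the gadget behind $h$ should be designed so that $\Ev$ wins from $(h,S)$ by some move, but that same move loses from $(h,S')$ for every other proper subset $S'\neq S$. The pigeonhole step is then routine: a strategy with fewer than $2^k-1$ states reaches $h$ with the same memory state after two histories $S\neq S'$, makes the same choice after both, and so loses one of them.

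For the gadget I would try the following. $h$ has a successor $x_T$ for each proper subset $T$; $x_T$ belongs to the targets $F_i$ with $i\notin T$, and afterwards $\Ad$ gets a punishing move. The punishing move must make overclaiming ($T\not\subseteq S$) and underclaiming ($T\subsetneq S$) both fatal. Overclaiming is automatically fatal, because the targets in $T\setminus S$ are then never visited. Underclaiming is where a naive gadget breaks, since $\Ev$ could simply always pick $T=\emptyset$. I therefore expect to need a second $\Ad$ phase after $x_T$, possibly duplicating targets as $F_i=\{a_i,b_i\}$. Its job is to make revisiting targets in $S$ costly, for instance by having $\Ad$ offer the $b_i$ only through vertices he controls in a way that depends on the claimed $T$. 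Designing this underclaiming penalty is the step I expect to be the main obstacle.

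For $\Ad$ (item 2), the roles are dual, and the target count is the size of the middle antichain $\{S\subseteq[k]:|S|=\ksurdeux\}$ from Theorem~\ref{thm:memupper}. The plan is as follows. $\Ev$ first selects a set $S$ with $|S|=\ksurdeux$ and visits exactly those targets. The play then reaches a hub of $\Ad$, which has successors $y_T$ indexed by sets $T$. The gadget must make $\Ad$'s winning response unique for each $S$, so that the required map from histories to moves at the hub is injective on this antichain. This cannot be achieved by a single step in which $\Ad$ merely avoids one missing index $j\notin S$: two different middle sets $S\neq S'$ always share some index outside both, and $\Ad$ could play that index from both histories. So I would add a continuation after $y_T$ in which $\Ev$ again moves and can exploit any mismatch between $T$ and $S$. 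The aim is that $\Ad$ survives exactly when $T=[k]\setminus S$.

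With such a gadget the counting argument is the same as for $\Ev$. Two distinct middle sets mapped to the same memory state receive the same answer $T$, so $\Ad$ loses against one of them. Hence $\Ad$ needs at least $\binomksurdeuxk$ states, which matches the upper bound in Theorem~\ref{thm:memupper}.
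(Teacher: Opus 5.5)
\textbf{Item~1 has a genuine gap: the hub property you need cannot be realised in any generalised reachability game.} $\Ev$'s objective is monotone in the set of visited targets. By the same simulation argument as in Lemma~\ref{lem:adam-mem-downward-closed}, her winning region in $\game\times\M$ is upward closed in the memory component. Concretely, if a move at $h$ together with some continuation wins when the targets in $S$ have been visited, then the same move and continuation win for every $S'\supseteq S$. So no ``underclaiming penalty'' can exist: visiting or revisiting targets never hurts $\Ev$.

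In your chain, $\Ad$ can reach $h$ with $S=\emptyset$. Since $\Ev$ wins, she has a winning continuation from $(h,\emptyset)$, and playing it after every history wins. Hence she needs no memory about the first phase at all. More generally, in any ``$\Ad$ chooses, $\Ev$ reacts once'' design, histories can be merged onto the $\subseteq$-minimal visited sets. These form an antichain, so a pigeonhole argument at a single hub certifies at most $\binomksurdeuxk$ states, which is strictly below $2^k-1$ for $k\geq 2$.

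The missing ingredients are a \emph{repeated} interaction, in which $\Ev$ must consult and update her memory many times, and an \emph{adaptive} counter-strategy for $\Ad$. In the paper's flower game, $\Ad$ repeatedly picks a petal $i$. At $v_i$, $\Ev$ either visits $i$ and returns to the heart, or stops at $\overline{i}$, which lies in every $F_j$ with $j\neq i$. Stopping is safe only if $i$ was already visited, and never stopping lets $\Ad$ cycle forever. Take a strategy with fewer than $2^k-1$ states, and let $S_m$ be the set of petals at which $\Ev$ stops when her memory state at the heart is $m$. Some proper subset $X\subsetneq[k]$ differs from every $S_m$, and $\Ad$ always proposes a petal in the symmetric difference of $X$ and $S_m$. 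Then either all proposals stay in $X$ forever, so a target outside $X$ is never visited, or $\Ev$ stops at a petal $i\in S_m\setminus X$ whose target $i$ was never visited.

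\textbf{Item~2 follows the paper's idea but leaves the key gadget unspecified.} Your idea of using the middle-layer antichain and forcing $\Ad$ to mirror $\Ev$'s choice is the paper's, but the punishing continuation is simple. For $k=2p+1$, $\Ev$ visits a $p$-set $S$, then $\Ad$ visits a $p$-set $T$, then $\Ev$ visits a $p$-set of her choice. If $T\neq S$, then $|S\cup T|\geq p+1$, so at most $p$ targets remain and $\Ev$ covers them. If $T=S$, then $p+1$ remain and one is missed. Since $\Ad$'s $T$ is determined by his memory state on entering his phase, he needs $\binom{k}{p}=\binomksurdeuxk$ states. For even $k=2p$, let $\Ev$ choose only $p-1$ targets in the last phase.
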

\begin{proof}

\begin{enumerate}
	\item We first describe a generalised reachability game in which $\Ev$ needs $2^k - 1$ memory states to win.
The arena is illustrated in Figure~\ref{fig:flower}, for $k = 5$, and appears in~\cite{ChatterjeeHH11}.

\begin{figure}[!ht]
\begin{center}
\scalebox{0.7}{
\begin{tikzpicture}[on grid, auto, thick, >=stealth']
\tikzstyle{box}=[circle, draw, fill=gray!10, minimum size=8mm]
\tikzstyle{state}=[rectangle, draw, fill=gray!10, minimum size=8mm]

	\node[state] (coeur) at (0,0) {$c$};
	\node[draw=none] (racine) at (-0.4,-4.2) {};
	\draw[->] (racine) to[bend left=8] (coeur);

	\node[box] (a1) at (0.8,3.6) {$v_1$};
	\node[box] (b1) at (2.6,3.2) {$1$};
	\node[box] (c1) at (1.4,1.9) {$\overline{1}$};

	\draw[->] (coeur) to[bend left=14] (a1);
	\draw[->] (a1) to[bend left=24] (b1);
	\draw[->] (b1) to[bend left=14] (coeur);
	\draw[->] (a1) to[bend right=16] (c1);
	\draw[->] (c1) edge[loop left] ();

	\node[box] (a2) at (3.5,1.2) {$v_2$};
	\node[box] (b2) at (3.8,-0.5) {$2$};
	\node[box] (c2) at (2.2,0.2) {$\overline{2}$};

	\draw[->] (coeur) to[bend left=14] (a2);
	\draw[->] (a2) to[bend left=24] (b2);
	\draw[->] (b2) to[bend left=14] (coeur);
	\draw[->] (a2) to[bend right=16] (c2);
	\draw[->] (c2) edge[loop left] ();

	\node[box] (a3) at (2.2,-3.0) {$v_3$};
	\node[box] (b3) at (0.9,-3.7) {$3$};
	\node[box] (c3) at (1.0,-2.2) {$\overline{3}$};

	\draw[->] (coeur) to[bend left=14] (a3);
	\draw[->] (a3) to[bend left=24] (b3);
	\draw[->] (b3) to[bend left=14] (coeur);
	\draw[->] (a3) to[bend right=16] (c3);
	\draw[->] (c3) edge[loop below] ();

	\node[box] (ai) at (-2.8,-2.7) {$v_4$};
	\node[box] (bi) at (-3.9,-1.5) {$4$};
	\node[box] (ci) at (-2.2,-1.3) {$\overline{4}$};

	\draw[->] (coeur) to[bend left=14] (ai);
	\draw[->] (ai) to[bend left=24] (bi);
	\draw[->] (bi) to[bend left=14] (coeur);
	\draw[->] (ai) to[bend right=16] (ci);
	\draw[->] (ci) edge[loop above] ();

	\node[box] (ak) at (-3.4,1.7) {$v_5$};
	\node[box] (bk) at (-2.3,2.9) {$5$};
	\node[box] (ck) at (-1.7,1.5) {$\overline{5}$};

	\draw[->] (coeur) to[bend left=14] (ak);
	\draw[->] (ak) to[bend left=24] (bk);
	\draw[->] (bk) to[bend left=14] (coeur);
	\draw[->] (ak) to[bend right=16] (ck);
	\draw[->] (ck) edge[loop above] ();
\end{tikzpicture}
}
\end{center}
\caption{A generalised reachability game in which $\Ev$ needs $2^k - 1$ memory states to win.}
\label{fig:flower}
\end{figure}

The arena has a flower shape : vertex $c$ controlled by $\Ad$ at the heart of the flower; the flower has $k$ petals where each petal consists of 3 vertices $v_i$, $i$, and $\overline{i}$, all controlled by $\Ev$. There are $k$ target sets $F_1, \ldots, F_k$, where $F_i = \{i\} \cup \{\overline{j} | j \neq i\}$. % A vertex labelled by $i$ belongs to $F_i$, and a vertex labelled by $\overline{i}$ belongs to all $F_j$ except $F_i$.
A play starts at the heart $c$ of the flower;
first $\Ad$ chooses a petal $i$,
then from $v_i$ $\Ev$ either chooses to visit target $i$ before returning to the heart
(in which case the play continues), or to visit all targets except $i$ and stop the play.
$\Ev$ wins with the following strategy: the first time $\Ad$ chooses petal $i$,
$\Ev$ returns to the heart; the second time $\Ad$ chooses petal $i$,
$\Ev$ stops the play.
This strategy is easily implemented with $2^k$ memory states.
One of these states is in fact unnecessary: it is the one corresponding to the case where all petals have been seen once,
at which point $\Ev$ has already won.

We show that there is no winning strategy for $\Ev$ using fewer than $2^k - 1$ memory states.
Let $\sigma$ be a strategy using memory $\M$ with fewer than $2^k - 1$ memory states, defined by the function $\next$.
For each memory state $m$, consider
$S_m = \set{i \mid \next(v_i,\upd(m,(c,v_i))) = \overline{i}}$,
the set of petals from which $\Ev$ would choose to stop the game if $\Ad$ proposed that petal.
Since there are fewer than $2^k - 1$ memory states, there exists a strict subset $X$ of $\set{1,\ldots,k}$
that is not of the form $S_m$ for any memory state $m$.
$\Ad$ wins against strategy $\sigma$ by choosing at each step a petal
in the symmetric difference of $X$ and $S_m$, where $m$ is $\Ev$ current memory state.
Indeed:
\begin{itemize}
	\item if $\Ad$ keeps choosing petals from $X$ forever, then $\Ev$ never stops the play and only the targets
from $X$ are visited;
	\item otherwise, when $\Ev$ stops the play, the last time the token was on the heart,
$\Ev$'s memory state was some $m$ such that $X \subsetneq S_m$,
and the chosen petal is an $i$ belonging to $S_m \setminus X$, so target $i$ has never been visited.
\end{itemize}

\item We now describe a generalised reachability game in which $\Ad$ needs $\binomksurdeuxk$ memory states to win.
Let $k = 2p + 1$. The game consists of 3 similar gadgets placed sequentially, where the first and last gadgets are controlled by $\Ev$ and the middle one is controlled by $\Ad$. 
Figure~\ref{fig:lower_bound_adam} illustrates a gadget for $p=3$, where vertices are labelled with the targets and one player (here, $\Ev$) chooses exactly $p$ targets out of $2p+1$ in one pass of the gadget.
\begin{figure}[!ht]
\begin{center}
\scalebox{0.7}{
\begin{tikzpicture}[shorten >=0.7pt, on grid, auto, thick, >=stealth']

\tikzstyle{state}=[circle, draw, fill=gray!10, minimum size=5mm]
\tikzstyle{box}=[circle, draw, fill=gray!10, minimum size=5mm]

	\node[state] (init) at (0,0) {$v_0$};

	\node[box] (15) at (2,2.4) {$5$};
	\node[box] (14) at (2,1.2) {$4$};
	\node[box] (13) at (2,0) {$3$};
	\node[box] (12) at (2,-1.2) {$2$};
	\node[box] (11) at (2,-2.4) {$1$};

	\node[box] (26) at (4.6,3.0) {$6$};
	\node[box] (25) at (4.6,1.8) {$5$};
	\node[box] (24) at (4.6,0.6) {$4$};
	\node[box] (23) at (4.6,-0.6) {$3$};
	\node[box] (22) at (4.6,-1.8) {$2$};

	\node[box] (37) at (7.2,3.6) {$7$};
	\node[box] (36) at (7.2,2.4) {$6$};
	\node[box] (35) at (7.2,1.2) {$5$};
	\node[box] (34) at (7.2,0) {$4$};
	\node[box] (33) at (7.2,-1.2) {$3$};

	\node[state] (final) at (9.8,0) {$v_1$};

	\draw[->] (init) -- (11);
	\draw[->] (init) -- (12);
	\draw[->] (init) -- (13);
	\draw[->] (init) -- (14);
	\draw[->] (init) -- (15);

	\draw[->] (11) -- (22);
	\draw[->] (11) -- (23);
	\draw[->] (11) -- (24);
	\draw[->] (11) -- (25);
	\draw[->] (11) -- (26);

	\draw[->] (12) -- (23);
	\draw[->] (12) -- (24);
	\draw[->] (12) -- (25);
	\draw[->] (12) -- (26);

	\draw[->] (13) -- (24);
	\draw[->] (13) -- (25);
	\draw[->] (13) -- (26);

	\draw[->] (14) -- (25);
	\draw[->] (14) -- (26);

	\draw[->] (15) -- (26);

	\draw[->] (22) -- (33);
	\draw[->] (22) -- (34);
	\draw[->] (22) -- (35);
	\draw[->] (22) -- (36);
	\draw[->] (22) -- (37);

	\draw[->] (23) -- (34);
	\draw[->] (23) -- (35);
	\draw[->] (23) -- (36);
	\draw[->] (23) -- (37);

	\draw[->] (24) -- (35);
	\draw[->] (24) -- (36);
	\draw[->] (24) -- (37);

	\draw[->] (25) -- (36);
	\draw[->] (25) -- (37);

	\draw[->] (26) -- (37);

	\draw[->] (33) -- (final);
	\draw[->] (34) -- (final);
	\draw[->] (35) -- (final);
	\draw[->] (36) -- (final);
	\draw[->] (37) -- (final);
\end{tikzpicture}
}
\end{center}
\caption{A gadget where $\Ev$ chooses exactly $p$ targets out of $2p+1$.}
\label{fig:lower_bound_adam}
\end{figure}
The whole game arena is comprised of the gadget being repeated three times:
\begin{enumerate}
	\item In first copy of gadget $\Ev$ chooses $p$ targets,
	\item In second copy of gadget $\Ad$ chooses $p$ targets,
	\item In third and final copy of gadget $\Ev$ chooses $p$ targets.
\end{enumerate}
To win, in second copy of gadget $\Ad$ must visit exactly the same targets that $\Ev$ visited in the first copy,
which requires $\binomksurdeuxk$ memory states;
otherwise at least $p+1$ targets have been visited when the third copy begins,
and $\Ev$ wins by choosing the targets that have not been visited, of which there are at most $p$.
\end{enumerate}
\hfill\end{proof}

We do not know the exact complexity of generalised reachability games where the target sets have size $2$.
In the remaining of this subsection, we discuss this question, focusing on memory requirements for both players.
The memory required for Eve is still exponential, as shown in Figure~\ref{fig:counter_ex_memory} for $k = 4$.
Specifically, it shows a generalised reachability game where reachability sets have size $2$ won by Eve, 
where she needs $2^{\ksurdeux + 1} - 1$ bits of memory to win. On the other hand, the exact memory required for Adam remains open.
Figure~\ref{fig:adamsize2}~%, following an idea of Christof Loeding,
%\red{need reference, paper or personal communication?},
 shows a generalised reachability game where reachability sets have size $2$ won by Adam, 
where he needs $4$ memory states to win.
\begin{thm}
For every $k \ge 1$,
\begin{enumerate}
	\item there exists a family of generalised reachability games $\game = (\Aa,s,\F)$ with $\F=\{F_1,\ldots,F_k\}$ 
    such that $|F_i|\leq 2$ for all $1\leq i\leq k$,
    in which $\Ev$ needs $2^{\ksurdeux + 1} - 1$ memory states to win;
	\item there exists a generalised reachability game $\game = (\Aa,s,\F)$ with $\F=\{F_1,\ldots,F_k\}$ 
    such that $|F_i|\leq 2$ for all $1\leq i\leq k$,
	in which $\Ad$ needs $4$ memory states to win.
\end{enumerate}
\end{thm}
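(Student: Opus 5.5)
Both items are lower bounds, so for each one I will build an explicit game and then show that every strategy with fewer memory states loses. Upper bounds are not needed here. For item (1) I will reuse the counting argument from the proof of Theorem~\ref{thm:memlower}(1). For item (2) the game has a fixed number of target sets, so I will argue directly, by case analysis on small strategies.

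For item (1), set $q = \ksurdeux + 1$. The goal is a flower-shaped arena with $q$ petals, mimicking Figure~\ref{fig:flower}. The only real change concerns the "stop" branch $\overline{i}$. In the original game, stopping at petal $i$ must cover all targets except $i$. That is impossible when every target set has at most two elements, because a target of size $2$ touches at most two stop branches. My plan is to replace each single target $F_i$ of the original flower by a pair of size-$2$ targets that share one vertex in the petal. Concretely, petal $i$ gets two target vertices $i$ and $i'$. After stopping, Eve enters a gadget controlled by Adam, in which Adam names one unvisited index $j \neq i$ and Eve must then cover it. The targets are chosen to be pairs of the form $\{i,\overline{j}\}$, where $\overline{j}$ lies in that gadget. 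This uses about two targets per petal, which is why $k$ targets give only $q = \ksurdeux+1$ petals. The design goal is the invariant: "stopping at petal $i$ wins iff petal $i$ was already visited, and all other petals are covered either by earlier visits or by the exit gadget." If this works, Eve's winning strategy is the same as in the flower (return the first time a petal is chosen, stop the second time), using $2^q - 1$ states. The lower bound then follows the original argument verbatim. For a strategy with fewer states, take a proper subset $X$ of petals that is not of the form $S_m$ for any memory state $m$. Adam then always picks a petal in $X \,\triangle\, S_m$.

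I expect the main obstacle to be making the exit gadget work with size-$2$ targets. Eve must not be able to win after a premature stop, yet she must be able to win after a correct one. My first attempt is a chain of Adam vertices after $\overline{i}$. I will verify the invariant by hand for $k=4$, $q=3$, following Figure~\ref{fig:counter_ex_memory}, and then generalise by adding two targets per extra petal.

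For item (2), I will construct a small explicit game, as in Figure~\ref{fig:adamsize2}, with size-$2$ targets. In this game Eve first makes one of four visiting choices, and Adam must answer each choice with a different avoiding move at a shared Adam vertex later in the play. Winning for Adam follows from Lemma~\ref{lem:adam-mem-downward-closed} and the product construction: I will exhibit the four maximal sets in $W_v$ at the critical vertex. For the lower bound, I will take any 3-state strategy, apply pigeonhole to two of Eve's four choices that lead to the same memory state, and show that Eve has a continuation covering all targets. This final step is a finite check, and I would carry it out exhaustively.
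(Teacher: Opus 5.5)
Your plan for item (2) follows the paper's argument. Four prefixes of $\Ev$ each leave a different target unvisited and meet at one $\Ad$ vertex with four avoiding moves. With at most three memory states one of these moves is never played, and $\Ev$ takes the prefix missing that target. The gap is in item (1).

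\textbf{Item (1): the construction is missing.} No game is actually produced. The exit gadget, which you yourself call the main obstacle, is the whole construction, and the invariant ``stopping at $i$ wins iff petal $i$ was visited'' is only a design goal. The parameters also do not add up. With two targets per petal, $k$ targets give $\lfloor k/2\rfloor$ petals, not $\lfloor k/2\rfloor+1$. Even a working gadget would therefore yield at most $2^{\lfloor k/2\rfloor}-1$ from the flower argument, roughly half the bound.

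\textbf{Item (1): the flower mechanism fails with $|F_i|\le 2$.} In the original flower, $F_i$ has a vertex in every other petal's stop branch, so it has size $k$. Suppose instead petal $i$ is credited by a target $\{p_i,e_i\}$, with $p_i$ on its return branch and $e_i$ reachable after stopping. If $\Ad$ proposes $i$ twice, your intended strategy stops at $i$ having visited only petal $i$. So from $\overline{i}$, $\Ev$ must be able to force visits to all $e_j$ with $j\neq i$. By the argument of Theorem~\ref{thm:ptime-singleton-2player}, two singleton targets that $\Ev$ can force visits to both of, from some vertex, have comparable attractors. Hence for $q\ge 3$ the sets $\AtE(\{e_j\})$ form a chain. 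If $e_a$ has the smallest attractor, every $\overline{i}$ with $i\neq a$ lies in all of them. A \emph{premature} stop at any petal $i\neq a$ then already wins, and $\mathcal{O}(q)$ memory states suffice. The same collapse happens with several such targets per petal.

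\textbf{The idea the paper uses instead.} The paper reverses the roles. $\Ev$ owns the heart and decides which petals to enter. Inside petal $i$ it is $\Ad$ who picks one of the targets $2i-1$ or $2i$. At the end, $\Ev$ runs through her own chain containing second copies of each pair, where she must pick the complement of each of $\Ad$'s choices. Every target then has at most two vertices, one in a petal and one on the chain. What $\Ev$ must store is the sequence of $\Ad$'s binary choices so far, i.e.\ a node of the complete binary tree of depth $\lfloor k/2\rfloor$. This is where $2^{\lfloor k/2\rfloor+1}-1$ comes from.
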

\begin{proof}

First we present the family of games for $\Ev$. An example of such a game is shown in Figure~\ref{fig:counter_ex_memory} for $k = 4$, where vertices in the same target set has the same label.
The arena is divided into two parts: the left hand side is a flower with $\ksurdeux$ petals,
and the right hand side a one-player arena controlled fully by $\Ev$.
The game starts at the heart $c$ of the flower and $\Ev$ has the option of visiting one of the $\ksurdeux$ petals or leaving the flower to go to the right hand side.
%Once this task is completed, she can move to the right hand side to reach the remaining target sets.
There are $k$ target sets, $1 \dots k$, each of size $2$, and each petal contains exactly one vertex from two different target sets, i.e. petal $i$ contains target $2i-1, 2i$ for $1 \leq i \leq \ksurdeux$.
$\Ev$ needs to remember the $\ksurdeux$ choices made by $\Ad$ (one for each petal), in order to reverse them when visiting right hand side of the arena. For example:
if $\Ad$ chose the target set $1$, then the target set $2$ has not been reached,
so $\Ev$ has to choose target set $2$ later.
Remembering those choices and asking for each petal requires $2^{\ksurdeux + 1} - 1$ memory states.
(Note that, this is the size of the complete binary tree of depth $\ksurdeux$.)

\begin{figure}[ht]
\begin{center}
\scalebox{0.7}{
\begin{tikzpicture}[on grid, auto, thick, >=stealth']
\tikzstyle{state}=[circle, draw, fill=gray!10, minimum size=8mm]
\tikzstyle{box}=[rectangle, draw, fill=gray!10, minimum width=8mm, minimum height=6mm]

	\node[state] (n0) at (0,0) {$c$};
	\draw[->] (-1.0,0) -- (n0);

	\node[box] (p1) at (0,2.6) {};
	\node[state] (n1) at (-1.2,1.3) {$1$};
	\node[state] (n2) at (1.2,1.3) {$2$};

	\draw[->] (n0) -- (p1);
	\draw[->] (p1) to[bend left=18] (n2);
	\draw[->] (p1) to[bend right=18] (n1);
	\draw[->] (n1) to[bend right=18] (n0);
	\draw[->] (n2) to[bend left=18] (n0);

	\node[box] (p2) at (0,-2.6) {};
	\node[state] (n3) at (-1.2,-1.3) {$3$};
	\node[state] (n4) at (1.2,-1.3) {$4$};

	\draw[->] (n0) -- (p2);
	\draw[->] (p2) to[bend left=18] (n3);
	\draw[->] (p2) to[bend right=18] (n4);
	\draw[->] (n3) to[bend left=18] (n0);
	\draw[->] (n4) to[bend right=18] (n0);

	\node[state] (c1) at (3.2,0) {};
	\node[state] (d1) at (4.6,-1.1) {$1$};
	\node[state] (d2) at (4.6,1.1) {$2$};
	\node[state] (c2) at (6.0,0) {};
	\node[state] (d3) at (7.4,-1.1) {$3$};
	\node[state] (d4) at (7.4,1.1) {$4$};

	\draw[->] (n0) -- (c1);
	\draw[->] (c1) -- (d1);
	\draw[->] (c1) -- (d2);
	\draw[->] (d1) -- (c2);
	\draw[->] (d2) -- (c2);
	\draw[->] (c2) -- (d3);
	\draw[->] (c2) -- (d4);
	\draw[->] (d3) edge[loop right] ();
	\draw[->] (d4) edge[loop right] ();
\end{tikzpicture}
}
\end{center}
\caption{A generalised reachability game where $\Ev$ needs $2^{\ksurdeux + 1} - 1$ memory states to win.}
\label{fig:counter_ex_memory}
\end{figure}

Now we describe the game for $\Ad$ as show in \Cref{fig:adamsize2}, where he needs $4$ memory states to win. The game has $4$ targer sets, $1 \dots 4$ each of size $2$, and vertices in the same target set has the same label. Label $\neg i$ denotes that the vertex belongs to all target sets except $i$.
The game starts from vertex $s$. First $\Ev$ chooses and isits three of the four target sets 
(two targets in the first choice, $1$ and $2$, or $3$ and $4$, and then one in the second choice),
and move to vertex $t$, controlled by $\Ad$.
There, $\Ad$ has four options, each allowing all target sets but one.
Remembering the four possibilities requires four memory states, and leads to a win.
However, with less memory states, one of the four options will never be played, and $\Ev$ wins.

\begin{figure}[ht]
\begin{center}
\scalebox{0.6}{
\begin{tikzpicture}[
    on grid,
    auto,
    thick,
    >=stealth',
    shorten >=1pt,
    shorten <=1pt
]
\tikzstyle{state}=[circle, draw, fill=gray!10, minimum size=8mm]
\tikzstyle{box}=[rectangle, draw, fill=gray!10, minimum width=8mm, minimum height=8mm]

	\node[state] (0) at (0,0) {$s$};
	\draw[->] (-1.2,0) -- (0);

	\node[state] (1)  at (2.0, 1.8) {$1$};
	\node[state] (2)  at (4.0, 1.8) {$2$};
	\node[state] (3)  at (2.0,-1.8) {$3$};
	\node[state] (4)  at (4.0,-1.8) {$4$};

	\node[state] (4b) at (6.5, 3.0) {$4$};
	\node[state] (3b) at (6.5, 1.0) {$3$};
	\node[state] (2b) at (6.5,-1.0) {$2$};
	\node[state] (1b) at (6.5,-3.0) {$1$};

	\node[state] (234) at (10.3, 3.0) {$\neg 1$};
	\node[state] (134) at (10.3, 1.0) {$\neg 2$};
	\node[state] (124) at (10.3,-1.0) {$\neg 3$};
	\node[state] (123) at (10.3,-3.0) {$\neg 4$};

	\node[box] (c) at (13.4,0) {$t$};

	\draw[->] (0) -- (1);
	\draw[->] (0) -- (3);

	\draw[->] (1) -- (2);
	\draw[->] (3) -- (4);

	\draw[->] (2) -- (4b);
	\draw[->] (2) -- (3b);
	\draw[->] (4) -- (2b);
	\draw[->] (4) -- (1b);

	% route these around the negated nodes
	\draw[->] (4b) .. controls (8.2,4.5) and (11.7,4.5) .. (c);
	\draw[->] (3b) .. controls (8.4,2.2) and (11.8,2.2) .. (c);
	\draw[->] (2b) .. controls (8.4,-2.2) and (11.8,-2.2) .. (c);
	\draw[->] (1b) .. controls (8.2,-4.5) and (11.7,-4.5) .. (c);

	\draw[->] (c) -- (123);
	\draw[->] (c) -- (124);
	\draw[->] (c) -- (134);
	\draw[->] (c) -- (234);

	\draw[->] (123) -- (1b);
	\draw[->] (123) -- (2b);
	\draw[->] (123) -- (3b);

	\draw[->] (124) -- (1b);
	\draw[->] (124) -- (2b);
	\draw[->] (124) -- (4b);

	\draw[->] (134) -- (1b);
	\draw[->] (134) -- (3b);
	\draw[->] (134) -- (4b);

	\draw[->] (234) -- (2b);
	\draw[->] (234) -- (3b);
	\draw[->] (234) -- (4b);
\end{tikzpicture}
}
\end{center}
\caption{A generalised reachability game where $\Ad$ needs $4$ memory states to win.}
\label{fig:adamsize2}
\end{figure}
\end{proof}

Quite surprisingly, we could not generalize this example to obtain a better lower bound than $4$.
We do not know whether this bound is tight (in any arena, if $\Ad$ wins, then he has a winning strategy with $4$ memory states),
which is plausible. 
Note that this would imply a \coNP$^{\textsc{NP}}$ algorithm: guess a winning strategy for $\Ad$ with $4$ memory states, 
and compose this strategy with the game, then solve the resulting one-player game.

\section{Maximum Generalised Reachability}

In this section, we address the complexity of the optimisation problems $\maxgenreach$ and the $\maxgenreachpromise$. %
\subsection{One Player Case with $\Ev$}
We first consider the case where all vertices of the game are controlled by $\Ev$. In this case, a strategy of $\Ev$ produces a unique play $\rho$. Therefore, the $\maxgenreach$ and $\maxgenreachpromise$ problems are equivalent if $\Ev$ pledges the set of targets seen in the play $\rho$. Hence we only state our results for the $\maxgenreach$ variant.

\begin{thm}
\label{thm:maxgrev}
Let $\game$ be a game where $\Ev$ controls all the vertices, i.e, $V=V_{\Ev}$.
\begin{enumerate}
    \item $\maxgenreach$ is in \P~when each target set $F_i \in \F$ is of size $1$.
    \item $\maxgenreach$ is \NP-complete in general. It is \NP-hard even when $|F_i| = 2$ for each target set $F_i \in \F$. 
\end{enumerate}
\end{thm}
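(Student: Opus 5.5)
For item (1), where every target set is a singleton $\{t_i\}$ and $\Ev$ controls every vertex, the problem becomes a longest-path problem on the DAG of strongly connected components. First I compute the SCC decomposition of $G$ and its condensation DAG. A single play stays in an SCC for as long as $\Ev$ likes, and inside an SCC she can visit every vertex. So if the play enters an SCC $C$, she can visit all targets in $C$ before leaving. Conversely, any play passes through a sequence of SCCs forming a path in the condensation DAG. I weight each SCC $C$ by $w(C)=|\{i \mid t_i\in C\}|$, counting multiplicities of indices (distinct indices with the same vertex each count). Then the maximum number of targets $\Ev$ can visit from $s$ equals the maximum weight of a path in the DAG that starts at the SCC of $s$. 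Two remarks make this correct:
\begin{itemize}
\item[--] Every path can be extended to end in a bottom SCC, and with positive weights extension never hurts.
\item[--] Distinct SCCs contain disjoint vertex sets, so no target is double counted.
\end{itemize}
A maximum-weight path in a DAG is computed by dynamic programming in reverse topological order in time $\mathcal{O}(|V|+|E|)$. Finally I compare the result with $k$. This gives membership in \P.

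For item (2), the upper bound follows the argument of Theorem~\ref{thm:genreach-eve-only}(1). If $\Ev$ can visit at least $k$ target sets, a play that does so can be cut to a prefix of length at most $|V|\cdot|\F|$: between consecutive first visits to new target sets, cycles can be removed. An \NP\ algorithm therefore guesses such a path from $s$ and counts the target sets it hits.

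For hardness with $|F_i|=2$, I would reduce from MAX-2-SAT, which is \NP-hard: given a 2-CNF formula and $k$, decide whether at least $k$ clauses can be simultaneously satisfied. I reuse the chain arena from the proof of Theorem~\ref{thm:pspace-hardness-general-case}, with every branching vertex $i$ owned by $\Ev$. Each clause $c_j=l_{j_1}\lor l_{j_2}$ becomes the target set $F_j=\{l_{j_1},l_{j_2}\}$, duplicated literals being collapsed harmlessly. Plays correspond bijectively to assignments, and a play visits $F_j$ exactly when the assignment satisfies $c_j$. Hence $\Ev$ can visit $k$ target sets if and only if $k$ clauses are simultaneously satisfiable.

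The main subtlety is item (1): I must justify the SCC/longest-path characterisation carefully, in particular that staying inside an SCC to collect all its targets and then exiting along any DAG edge is always realisable. This holds because each SCC is strongly connected and every vertex has a successor. The hardness part is routine once MAX-2-SAT is invoked.
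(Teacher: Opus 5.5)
Your proposal is correct and follows the paper's proof essentially step for step. The steps match: SCC condensation with a maximum-weight path computed by dynamic programming for singleton targets (your counting of target indices with multiplicity is slightly more careful than the paper's), guessing a polynomially long path for \NP membership, and reducing from MAX-2-SAT via the chain arena of Theorem~\ref{thm:pspace-hardness-general-case} with all branching vertices owned by $\Ev$.

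One small point concerns the claim that every target set has $|F_i| = 2$: collapsing duplicated literals yields size-$1$ sets. To get exactly size $2$, either start from MAX-2-SAT instances with exactly two distinct literals per clause (still \NP-hard), or pad each singleton target set with an unreachable dummy vertex.
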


\begin{proof}
For the case with all target sets of size $1$, we present an algorithm that runs in polynomial time. The algorithm first computes the strongly connected component (SCC) decomposition of the arena. For each SCC, we assign a value equal to the number of target vertices contained in it. Within each SCC, $\Ev$ can visit all the target vertices. Computing the maximum number of target vertices that $\Ev$ can visit corresponds to finding a path in the SCC decomposition that maximises the sum of the value of the SCCs contained in the path.
This can be computed bottom-up using dynamic programming starting from the bottom SCCs. 
This is explained in detail in the proof of Theorem \ref{thm:maxgrp2p}, as the algorithm also works for the two-player case.
When target sets are of size $2$, \NP-hardness follows from MAX-2-SAT problem, which is known to be \NP-hard~\cite{Halaby2016}. This follows the same reduction as the one used to show \PSPACE-hardness for the general case with 2 players and arbitrary target sets in Theorem~\ref{thm:pspace-hardness-general-case}.

To see membership in NP, $\Ev$ can guess a path of size at most $nk$ in length that visits $k$ target sets, where $n$ is the number of vertices in the arena. This is possible as the length of the path between any two consecutive target vertices not seen before is at most $n$.
\end{proof}

\subsection{One Player Case with $\Ad$}

For one-player games with $\Ad$, for both the $\maxgenreach$ and $\maxgenreachpromise$ problems, $\Ad$ always has an optimal strategy that produces a lasso. This follows from the fact that it is always better for $\Ad$ to encounter fewer distinct vertices, and consequently fewer target sets. 

Recall that the $\maxgenreach$ and $\maxgenreachpromise$ problems can have different solutions even when $\Ad$ is the sole player, as demonstrated by a game obtained by modifying Figure~\ref{fig:maxproblem}, where all vertices belong to Adam : in this case, Eve cannot promise to visit any \emph{particular} target vertex, but she can ensure that at least one target vertex is visited.
In this section we deal with the one-player variant of these problems with $\Ad$ as the only player. First we provide the full complexity picture for the $\maxgenreach$ problem. 
\begin{thm}
\label{thm:maxgrad}
Let $\game$ be a game where $\Ad$ controls all the vertices, i.e, $V=V_{\Ad}$. 
\begin{enumerate}
    \item  $\maxgenreach$ is in $\P$ when all target sets are singleton.
    \item $\maxgenreach$ is $\coNP$-complete in general. The $\coNP$-hardness holds even when $|F_i| = 2$ for each $F_i \in \F$. 
\end{enumerate}
\end{thm}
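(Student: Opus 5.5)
With all vertices owned by $\Ad$, the question ``can $\Ev$ guarantee at least $k$ target sets?'' is the complement of ``is there a play from $s$ visiting at most $k-1$ target sets?''. As remarked at the start of this subsection, $\Ad$ has an optimal strategy producing a lasso: a simple path from $s$ to some vertex $t$, followed by a simple cycle through $t$. So the problem reduces to minimising the number of target sets hit by a lasso from $s$.

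\emph{Singleton case (item 1).} For singletons, the number of target sets hit equals the number of distinct target vertices on the play. I would compute the SCC decomposition. Any play eventually stays in one SCC $C$ that contains a cycle; a nontrivial SCC, or a vertex with a self-loop, qualifies. Inside such a $C$, $\Ad$ can choose a cycle through as few target vertices as possible. Finding that minimum looks like a minimum-weight cycle problem with vertex weights $1$ on targets and $0$ elsewhere, which is solvable in polynomial time, since the cost equals the number of distinct target vertices on a simple cycle. For each vertex $t$, I would compute the minimum weight of a cycle through $t$ via shortest paths from $t$ back to $t$ with nonnegative vertex weights. I would also compute the minimum weight of a path from $s$ to $t$ by Dijkstra.

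A lasso's cost is at most the sum of these two weights. Equality can fail when the path and the cycle share target vertices, but then the shared targets are only counted once. To handle this, I would observe that an optimal play can be taken to be a walk $s \to t$ followed by looping forever on a closed walk from $t$. Its cost is the number of distinct targets on the walk. Minimising distinct targets on a walk is the same as minimising the weight of a walk, because revisiting a vertex costs nothing extra only if we count distinct vertices; but among optimal solutions one can choose a simple path and a simple cycle. So I would take
\[
\min_t \big(d(s,t) + c(t) - \text{overlap}\big),
\]
where the overlap is not directly computable. To avoid this, I would instead run shortest paths in the graph in which the cycle part is collapsed: for each cyclic SCC $C$ and each vertex $t\in C$, compute $c(t)$ and add a sink edge from $t$ with cost $c(t)$. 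Overcounting shared vertices would only arise if the cycle and the path share a target. In that case, reroute: let $u$ be the first vertex of the path lying on the cycle, and take the lasso with stem up to $u$ and the same cycle. This has no overlap and is no worse. Hence the minimum over $t$ of $d(s,t)+c(t)$ equals the true optimum. $\Ev$ wins iff this minimum is at least $k$. All steps are polynomial.

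\emph{General case (item 2).} For membership in $\coNP$, the complement is in $\NP$: guess a lasso of length at most $2|V|$ and check that it visits fewer than $k$ target sets.

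For hardness with $|F_i|=2$, I would reduce from the complement of vertex cover, or equivalently from minimum hitting with a path. Given a 3-CNF-like structure, I would use the chain gadget of Theorem~\ref{thm:pspace-hardness-general-case}, with all choice vertices owned by $\Ad$, so that plays correspond to assignments $x_1,\ldots,x_m$ ending in the sink $\bot$. For each variable $i$, add the target set $\{x_i,\neg x_i\}$ contributes exactly one per play, so these are irrelevant. Instead, for a graph $H=(V_H,E_H)$ and bound $b$, I would encode vertex cover: each vertex $i$ of $H$ becomes a choice between $x_i$ (``in cover'') and $\neg x_i$. The target sets are $\{x_i,x_j\}$ for each edge $(i,j)$, which are hit iff the edge is covered, together with $\{x_i,x_i\}$ style sets making every chosen vertex cost one. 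The padding sets $\{x_i, y_i\}$ use fresh dummy $y_i$. A play then visits $|E\text{ covered}| + |\text{chosen}|$ target sets.

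To make the count monotone, I would instead reduce from MAX-2-SAT complemented: $\Ad$ minimises the number of satisfied 2-clauses, which is MIN-2-SAT, known to be \NP-hard~\cite{Kohli-et-al-94}. Each clause $l_1\vee l_2$ becomes the target set $\{l_1,l_2\}$. Then $\Ad$ can keep the count below $k$ iff some assignment satisfies fewer than $k$ clauses. This gives $\coNP$-hardness with $|F_i|=2$, since MIN-2-SAT's decision version is \NP-hard.

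The main obstacle is item 1's correctness argument about overlap between stem and cycle; I would make the rerouting argument careful.
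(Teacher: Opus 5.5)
Your proposal is correct and follows the paper's route: for singleton targets you minimise, over the junction vertex $t$, the lightest path from $s$ to $t$ plus the lightest cycle through $t$. Your rerouting to a proper lasso makes explicit the stem/cycle overlap that the paper glosses over, though you should define $c(t)$ so that a target at the junction $t$ is not counted twice. For the general case you use the same MIN-2-SAT reduction on the chain gadget with $\Ad$ owning all vertices, and you guess a short lasso for \coNP membership. The abandoned vertex-cover encoding in your item~2 (with its degenerate set $\{x_i,x_i\}$) plays no role and should be deleted.
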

\begin{proof}
We begin with a polynomial time algorithm for the case where target sets have size $1$. Note that a strategy of $\Ad$ is just a path $\rho$. We can assume that $\rho$ is a path of the form $s\to t\to t$. Otherwise, one can find a minimal loop in $\rho$ and simply repeat it to obtain a path $\rho'$ of the correct form. Since $\rho'$ visits a subset of vertices visited by $\rho$, $\rho'$ should be at least as good as $\rho$ for $\Ad$. Therefore, we search for paths of the form $s\to t\to t$ which visit the least number of target vertices.

We assign weights to edges of the graph $G$ using the function $w: E\to \{0,1\}$ as follows: 
$w(u,v)=1$ if, and only if, $\{v\}\in \F$.
The problem reduces to computing the lightest weighted $s\to t\to t$ path in the weighted graph. 
This can be done by iterating over all choices of $t$ and computing the lightest such path.

For co-NP hardness, we look at the complement problem, i.e. given a game $\game$ with $V=V_\Ad$ and $k$, is there a strategy of $\Ad$ such that on playing this strategy at most $k$ targets are visited. We consider the problem MIN-2-SAT, which is known to be NP-hard \cite{Kohli-et-al-94}. 
Following the same reduction as in  NP-hardness in proof of Theorem \ref{thm:maxgrev}, but giving control of the vertices to $\Ad$, we get a game in which $\Ad$ can visit at most $k$ target vertices if and only if there is an assignment which satisfies at most $k$ clauses.  

For the upper bound, we observe that $\Ad$'s strategy can be simplified to a path of length at most $n+1$, because once he sees a vertex twice, he can simply repeat the loop without visiting more target sets. We can thus guess a path of length $n+1$ and check if this hits at most $k$ targets. This puts the complement problem in NP and thus shows the co-NP membership. 
\end{proof}

Note that the algorithm to solve $\genreach$ games where all vertices are controlled by $\Ad$ computes $\AtE$ for each target set, and checks if the initial vertex is in the $\AtE$ for all target sets or not. If not, $\Ad$ has a choice to avoid a target set and win. This runs in polynomial time as computing the 
$\AtE$ can be done in polynomial time. However, this does not work for the $\maxgenreach$ problem as the initial state $s$ might not be in the $\AtE(F_i)$ and $\AtE(F_j)$, but in the $\AtE(F_i\cup F_j)$. Thus, the play will visit at least $1$ of the target sets.
For example, in the game from Figure~\ref{fig:maxproblem}, $s$ and $u$ are neither in $\AtE(\{u_1\})$, nor in $\AtE(\{u_3\})$, but they are in $\AtE(\{u_1,u_3\})$.

Next, we show that the polynomial time algorithm can be adapted to work for the $\maxgenreachpromise$ problem.

\begin{thm}
\label{thm:maxgrprad}
Let $\game$ be a game where $\Ad$ controls all the vertices, i.e, $V=V_{\Ad}$. Then,~\\   
  $\maxgenreachpromise$  is in \P.
\end{thm}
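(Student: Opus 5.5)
In the one-player $\Ad$ case, a promised collection $\F'$ is achievable for $\Ev$ exactly when every play from $s$ visits every $F_i\in\F'$. Because $\Ad$ owns all vertices and chooses the play, this means that for every $F_i\in\F'$ individually, every path from $s$ eventually meets $F_i$. In other words, $s\in\AtE(F_i)$ for each $F_i\in\F'$.

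The reason is the one already used in Theorem~\ref{thm:genreach-adam-only}. Suppose $\Ad$ has a single play avoiding some promised $F_i$. Then he wins by following it, and he cannot lose by missing other sets, since visiting those is irrelevant to him. Conversely, suppose each promised set is unavoidable on its own. Then every play visits all of them, since a play is just a path and each set lies on every path. This is exactly where $\maxgenreachpromise$ differs from $\maxgenreach$: the quantifier order is ``for every $F_i\in\F'$, every play hits $F_i$'', which decomposes set by set. In $\maxgenreach$, by contrast, we have ``every play hits at least $k$ sets'', which does not decompose.

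The algorithm is therefore as follows. For each $F_i\in\F$, compute $\AtE(F_i)$, which takes time $\mathcal{O}(|E|)$ per set as noted in the preliminaries. Let $\F^*=\{F_i\in\F\mid s\in\AtE(F_i)\}$. Answer yes if and only if $|\F^*|\geq k$. In the one-player $\Ad$ arena, $\AtE(F_i)$ is simply the set of vertices from which all infinite paths hit $F_i$, namely the complement of the vertices that can reach a cycle while avoiding $F_i$. Either description is computable in polynomial time, so the total running time is $\mathcal{O}(|\F|\cdot|E|)$.

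Correctness needs two directions. If $|\F^*|\geq k$, take $\F'=\F^*$. Every play from $s$ lies in $\AtE(F_i)$ for each $F_i\in\F'$, and in this arena every path from a vertex of $\AtE(F_i)$ hits $F_i$, so $\Ev$ (who has no moves) wins. If some $\F'$ with $|\F'|\geq k$ is winning, then each $F_i\in\F'$ satisfies $s\in\AtE(F_i)$. Otherwise, by the lasso argument in Theorem~\ref{thm:genreach-adam-only}, $\Ad$ has a path from $s$ avoiding $F_i$ forever, which contradicts that $\F'$ is winning. Hence $\F'\subseteq\F^*$ and $|\F^*|\geq k$.

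The only subtle step is the decomposition itself: a promise against a single deterministic opponent path reduces to independent unavoidability of each set. This is immediate once we observe that a winning promise must hold against every path, so it holds against every path for each set separately. I expect no real obstacle beyond stating this cleanly.
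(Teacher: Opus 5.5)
Your proposal is correct and matches the paper's proof. Both compute $\AtE(F_i)$ for every target set, let $\F^*$ be the sets whose attractor contains $s$, and accept iff $|\F^*|\geq k$; the converse rests on $\Ad$ avoiding any promised set whose attractor misses $s$. One cosmetic fix: a play from $s$ need not stay inside $\AtE(F_i)$, since it may leave after hitting $F_i$, so say it \emph{starts} in $\AtE(F_i)$, which is all your argument actually uses.
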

\begin{proof}
    The $\maxgenreachpromise$ problems asks, if there is a $k$ sized subset of target sets such that the play visits a target from each of the $k$ chosen target sets. 
    The algorithm computes the $\AtE(F_i)$ for all target sets $F_i$ and counts how many of them contain the initial state $s_0$. If $s_0$ is contained in at least $k$ of the $\AtE$ sets, then any play will visit the $k$ target sets. This is because being in the attractor set of a target $F_i$ guarantees that Eve has a strategy to force the play to reach $F_i$, even though Adam controls all the vertices. Therefore, $\Ev$ can promise $k$ target sets whose $\AtE$ contains $s_0$. If $s_0$ is not contained in $k$ attractor sets, then no matter which $k$ sets $\Ev$ promises, there will be a promised target set $F'$, such that $s_0\not \in \AtE(F)$. Therefore, $\Ad$ will be able to avoid the promised target sets.
\end{proof}

\subsection{Two Player Case}
Here we discuss the complexity of the two-player case.
We first state the results for the $\maxgenreach$ problem. %

\begin{thm}\label{thm:maxgrcomplexity}
Let $\game$ be a game.
\begin{enumerate}
\item $\maxgenreach$ is \PSPACE-complete. The \PSPACE-hardness holds even when $|F_i| \leq 2$ for each $F_i \in \F$.
\item $\maxgenreach$ is \coNP-hard when each $F_i \in \F$ is singleton.

\end{enumerate}
\end{thm}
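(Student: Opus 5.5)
For the first item, the upper bound would reuse the alternating-machine argument from the proof of Theorem~\ref{thm:pspace-hardness-general-case}, and the lower bound would be a gap-preserving reduction from TQBF with 3-CNF matrix, in the spirit of the MAX-2-SAT construction of Garey, Johnson and Stockmeyer. For the second item I would build a two-player game with singleton targets whose optimal value detects whether Adam can pick a satisfying assignment; this gives hardness for the complement of SAT. The steps are as follows.

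\emph{Upper bound.} I would work in the product game $\game\times\M$ from Section~\ref{sec:mem}, with reachability objective the set $V\times\{S : |S|\ge k\}$. This set is upward closed, so exactly as in Lemma~\ref{lem:product-game-equivalence}, $\Ev$ guarantees $k$ target sets in $\game$ iff she wins this reachability game. She then has a positional attractor strategy in the product. Along any play consistent with it, the memory component only grows. While the memory component stays equal to $S$, the attractor rank strictly decreases, so at most $|V|$ steps pass. Hence the threshold is reached within $(n+1)|V|$ steps. An alternating machine simulating the game for that many steps and accepting iff at least $k$ target sets were seen decides $\maxgenreach$ in \APTime\ $=$ \PSPACE.

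\emph{Hardness with $|F_i|\le 2$.} Start from $\phi=Q_1x_1\ldots Q_mx_m.\,c_1\land\dots\land c_n$ in 3-CNF. Replace each clause $c_j=a\lor b\lor c$ by the ten GJS clauses over a fresh variable $w_j$:
\[
a,\ b,\ c,\ w_j,\ \neg a\lor\neg b,\ \neg b\lor\neg c,\ \neg c\lor\neg a,\ a\lor\neg w_j,\ b\lor\neg w_j,\ c\lor\neg w_j .
\]
For any assignment to $a,b,c$, the best choice of $w_j$ satisfies exactly $7$ of these if $c_j$ holds and at most $6$ otherwise. Quantify all $w_j$ existentially and innermost. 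Then build the chain game of Theorem~\ref{thm:pspace-hardness-general-case} for the variables $x_1,\ldots,x_m,w_1,\ldots,w_n$, with one target set of size $\le 2$ per new clause. The claim is that $\Ev$ guarantees $7n$ target sets iff $\phi$ is true. The $w_j$ come last and are chosen by $\Ev$, so she can optimise each gadget separately once the $x$'s are fixed.

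The technical wrinkle is that the target sets form a set, so repeated unit clauses such as $\{a\}$ coming from different $j$ would collapse. I would fix this by replacing each literal vertex by a path of copies, one copy per occurrence, and using the relevant copy in each target set. This keeps the arena a DAG.

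\emph{\coNP-hardness with singletons.} I would reduce from UNSAT of a CNF $\psi$ over $x_1,\ldots,x_m$. The arena contains the chain of Theorem~\ref{thm:pspace-hardness-general-case}, with all choice nodes owned by $\Ad$ and every literal vertex a singleton target. After $x_m$ comes an $\Ev$ node that either goes to a sink or moves to a clause node $c_j$. Each $c_j$ is an $\Ad$ node with edges to the literal vertices of $c_j$, and from there the chain simply continues. A literal vertex has been visited iff $\Ad$ has made it true. If every clause has a true literal, $\Ad$ always answers with a visited literal and replays his earlier choices on the rest of the chain, so exactly $m$ targets are ever visited. If some clause is false under the chosen assignment, $\Ev$ picks it, and any answer of $\Ad$ adds a new target. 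Thus $\Ev$ guarantees $m+1$ target sets iff $\psi$ is unsatisfiable.

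\emph{Main obstacle.} I expect the delicate step to be the re-entry into the chain in the singleton reduction. I must check that $\Ad$ gains nothing by altering later choices; altering them only adds targets, so he won't. I must also check that $\Ev$ cannot profit from repeated loops, since her gain comes only from false literals of the current assignment, and $\Ad$ keeps that assignment fixed. In the PSPACE part, the delicate step is the exact-count gap argument for the GJS gadgets under quantifier alternation.
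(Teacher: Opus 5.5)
Your proposal is correct. It departs from the paper's proof in two places.

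\textbf{Upper bound in (1).} The paper simply invokes Kupferman and Shenwald's result for the MaxR objective. You instead give a self-contained argument:
\begin{itemize}
\item the threshold set $V\times\{S : |S|\ge k\}$ in the product with memory $2^{[n]}$ is upward closed;
\item a positional attractor strategy never repeats a product vertex, so at most $|V|$ steps are spent at each of the at most $n+1$ memory values;
\item an alternating machine with horizon $(n+1)|V|$ therefore decides the problem.
\end{itemize}
This is rigorous and mirrors the paper's own upper bound for $\genreach$. The citation additionally covers weighted objectives, which your argument would handle equally easily.

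\textbf{Lower bound in (1).} Here you follow the paper: the Garey--Johnson--Stockmeyer ten-clause gadget, with the fresh variables existential and innermost, and threshold $7n$. You are more careful than the paper on one point it leaves unaddressed. If $\F$ is a set, clauses repeated across gadgets give identical target sets and break the count; an example is the unit clause $\{a\}$ for a literal $a$ occurring in several original clauses. Your path-of-copies fix repairs this.

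\textbf{Part (2).} Here the routes genuinely differ.
\begin{itemize}
\item The paper reduces from Vertex Cover. $\Ev$ owns the graph vertices and proposes edges, $\Ad$ owns the edges and answers with an endpoint. $\Ev$'s round-robin strategy forces the visited vertices to form a cover.
\item You reduce from UNSAT. $\Ad$ commits to an assignment on a first pass through the chain of Theorem~\ref{thm:pspace-hardness-general-case}, then $\Ev$ challenges with a clause. The baseline is exactly $m$ targets, and a falsified clause forces $m+1$.
\end{itemize}
Both are hitting-set encodings in which $\Ad$ must answer every challenge from a small committed set of targets. The paper's arena is simpler and corresponds one-to-one with the source instance. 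Yours reuses the chain arena and needs only a single genuine $\Ev$ vertex.

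Finally, the step you flag as the main obstacle is not delicate.
\begin{itemize}
\item In the satisfiable case, $\Ad$'s explicit strategy keeps the count at $m$ however often $\Ev$ loops: always answer with a literal true under the fixed satisfying assignment, then replay that assignment.
\item In the unsatisfiable case, $\Ev$ wins after a single challenge.
\end{itemize}
So neither loops nor $\Ad$'s deviations need any analysis.
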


\begin{proof}
(1)     
The upper bound in the general case, i.e, $|F_i|\ge 2$, for all $F_i\in \F$, follows from \cite[Theorem 12]{KupfermanS24}. In fact, they consider the question of maximizing weighted reachability, where different reachability objectives can have different weight associated to them and $\Ev$ attempts to maximise the total weight. Our case corresponds to the case where all the reachability objectives have the same weight, which is called the \textit{MaxR} objective in \cite{KupfermanS24}.

%The PSPACE lower bound presented in \cite{KupfermanS24} follows from the PSPACE-hardness of solving generalised reachability games, which works for target sets of size $\ge 3$. 
We now present the proof for \PSPACE lower bound even for the case where target sets are of size $2$.
We use the fact that MAX-2-SAT is \NP-hard~\cite{GAREY1976237} as well the reduction used for this.
In the reduction from 3-SAT to MAX-2-SAT~\cite{GAREY1976237}, given a 3-SAT instance $\phi$ over $n$ variables, $X = \{x_1, \dots, x_n\}$, and $m$ clauses, $C_1,\dots,C_m$, they construct a 2-SAT instance $\phi'$ with $n+m$ variables and $10m$ clauses in the following way: for each clause $C_i = (l_i^1 \vee l_i^2 \vee l_i^3)$ in $\phi$ a new variable $y_i$ is introduced, and there are $10$ clauses $C^1_i,\dots, C^{10}_i$ in $\phi'$ corresponding to $C_i$. It satisfies the following crucial property : for any clause $C_i$ and any assignment of the variables $X$, $C_i$ is satisfied iff there is an assignment of $y_i$ under the same assignment of the variables $X$, such that exactly $7$ of the clauses $C^1_i,\dots, C^{10}_i$ are satisfied. If $C_i$ is not satisfied, then no more than $6$ clauses among $C^1_i,\dots, C^{10}_i$ are satisfied irrespective of the value of $y_i$.  %They create a set of 2-CNF clauses such that 
Hence exactly $\frac{7}{10}$ of the 2-CNF clauses are satisfied if the original 3-CNF clause was satisfied. Otherwise, at most $\frac{6}{10}$ of the 2-CNF clauses are satisfied. 

Our reduction follows the same construction turning a 3-QBF in CNF instance into a 2-QBF instance. New variables $y_i$'s are made innermost existential variables, i.e. at the innermost quantifier level. %Since this holds for any assignment, we can transform the matrix of the QBF formula from a 3-CNF to a 2-CNF formula such that exactly $\frac{7}{10}$ of the clauses are satisfied by a satisfying assignment. 
Since $y_i$'s are existential variables, for each clause $C_i$ and for each assignment of $X$, the same property holds for the $10$ clauses corresponding to $C_i$ in the 2-QBF instance. Hence the original QBF formula is true iff there is an assignment of the variables $y_i$'s such that for any assignment of the variables $X$, at least $\frac{7}{10}$ of the clauses in the 2-QBF instance are satisfied. 
%Therefore, this proves that MAX-2-QSAT is PSPACE-hard. 
Now we use the same construction to translate the 2-QBF formula to a generalised reachability game, as in the proof of Theorem \ref{thm:pspace-hardness-general-case}, where each clause corresponds to a target set of size $2$. The original 3-QBF formula is true iff $\Ev$ has a strategy to visit at least $\frac{7}{10}$ of the target sets. Hence, the $\maxgenreach$ problem with target sets of size $2$ is \PSPACE-hard.
%Using the construction to translate QSAT to generalised reachability games, we obtain that $\maxgenreach$ with target sets of size $2$ is PSPACE-hard.

(2) 
We reduce the complement of $\maxgenreach$ from the minimum vertex cover problem, which is known to be NP-complete~\cite{Karp72}.
Given an undirected graph $G = (V,E)$ and a number $\ell$, the minimum vertex cover problem asks whether $G$ has a vertex cover $S \subseteq V$ of size at most $\ell$. We construct a game as follows: the vertices controlled by $\Ev$, $V_\Ev=V$, correspond to the vertices of the original graph $G$ and the vertices controlled by $\Ad$, $V_\Ad=E$, correspond to the edges of the original graph $G$. 
From any vertex in $V_\Ev$, $\Ev$ can choose any edge $(u,v)$ of the original graph $G$ and move to the vertex corresponding $(u,v)\in V_\Ad$. From a vertex $(u,v)$ in $V_\Ad$, $\Ad$ can move to vertices $u,v\in V_\Ev$, corresponding to the two end points of the edge.
The start vertex is any arbitrary vertex in $V_\Ad$. 
The target set $\F=\{\{v\}~\mid~ v\in V_\Ev\}$ consists of singleton sets containing each vertex of $V$.
We claim that $G$ has a vertex cover of size at most $\ell$ iff $\Ad$ has a strategy to visit no more than $\ell$ targets.

Now suppose, $G$ has a vertex cover of size at most $\ell$, say $V'$. Then consider the strategy of $\Ad$ where he always picks a vertex from $V'$ whenever an edge is visited. It follows that under this strategy, no matter what $\Ev$ plays, only vertices from $V'$ are visited. Hence the number of targets visited is not more than $\ell$. 

On the other hand, suppose $\Ad$ has a strategy $\tau$ to enforce plays that visit no more than $\ell$ targets. Observe that since $\Ad$ is trying to minimise the number of targets visited, he has no incentive to choose different vertex for the same edge, when the edge is revisited. Hence, $\tau$ can be assumed to be a memoryless strategy. 
Now consider the following strategy $\sigma$ of $\Ev$: assuming any ordering on the set $E$, $\sigma$ chooses the next edge in this order in a round-robin manner. 
In the play under $\sigma$ and $\tau$, since every edge is picked by $\sigma$ at least once, and $\tau$ enforces no more than $\ell$ vertices are visited, the range of $\tau$ gives a vertex cover of size at most $\ell$. 
This proves our claim and completes the proof of \coNP-hardness.
\end{proof}

Note that the \coNP-hardness of the $\maxgenreach$ problem with target sets of size $1$ is in contrast with the results for $\genreach$ problem which is known to be in P for target sets of size $1$.
We leave open the question whether $\maxgenreach{}$ for singleton target sets is in \coNP. A possible approach would be to prove a bound on the memory of the winning strategy of $\Ad$, similar to the open case for size 2 target sets for $\genreach{}$. 

Next, we state our results for the two-player case of $\maxgenreachpromise$.

\begin{thm}
\label{thm:maxgrp2p}
\phantom{a}
\begin{enumerate}
\item The $\maxgenreachpromise$ problem for the two-player case is in $\P$ when all target sets are singleton.
\item $\maxgenreachpromise$ is \PSPACE-complete in general. It is $\PSPACE$-hard even when all target sets have size $3$.
\end{enumerate}
\end{thm}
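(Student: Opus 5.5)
For part (1) I would use the structure from Theorem~\ref{thm:ptime-singleton-2player}. Let all targets be singletons $F_i=\{t_i\}$ and put $A_i=\AtE(\{t_i\})$. A promised family $\F'$ is winnable from $s$ iff (by Theorem~\ref{thm:ptime-singleton-2player} applied to the subgame with targets $\F'$) the attractors $\{A_i \mid F_i\in\F'\}$ form a chain under inclusion and $s\in A_i$ for all promised $i$. Attractors are computed in the whole arena and do not depend on which targets are promised, so this is a purely combinatorial condition on the fixed family of sets $A_1,\dots,A_n$. The problem therefore becomes: find a maximum number of indices $i$ with $s\in A_i$ whose sets $A_i$ are pairwise comparable under inclusion.

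To solve this, I compute all $A_i$ in time $\mathcal{O}(n|E|)$, discard those not containing $s$, and group indices with equal attractors into classes, each weighted by its size. Strict inclusion on the distinct sets is a partial order, so its Hasse structure is a DAG. A chain of sets corresponds to a path in the DAG of strict inclusions, and the answer is the maximum-weight path in that DAG. This is computable by dynamic programming over a topological order (for example, by decreasing set size) in polynomial time. We then answer yes iff this maximum is at least $k$. This is the same bottom-up DP that the paper mentions for the SCC decomposition in Theorem~\ref{thm:maxgrev}. For correctness, the winning strategy visits the promised targets from the smallest attractor to the largest, by concatenating attractor strategies. Conversely, if a promised family contains two incomparable attractors, or one not containing $s$, then $\Ad$ defeats it exactly as in the proof of Theorem~\ref{thm:ptime-singleton-2player}. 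The one point needing care is that the criterion of Theorem~\ref{thm:ptime-singleton-2player} holds with attractors taken in the full arena even though only a subset of the targets is promised. This is immediate, because the attractor of a single vertex does not depend on the other targets.

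For part (2), the upper bound is as follows. Guess (or iterate over, inside \PSPACE) a subset $\F'$ with $|\F'|\ge k$ and solve $\genreach$ on $(\Aa,s,\F')$ in \PSPACE by Theorem~\ref{thm:pspace-hardness-general-case}. This gives $\NPSPACE=\PSPACE$. For hardness I would reduce $\genreach$ with target sets of size $3$ (Theorem~\ref{thm:pspace-hardness-general-case}) by mapping $(\Aa,s,\F)$ with $|\F|=n$ to the instance $(\Aa,s,\F,k=n)$. The only family of size $\ge n$ is $\F$ itself, so the answer is yes iff $\Ev$ wins the original game. Target sizes are unchanged, so hardness holds already for size $3$. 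The paper's hardness construction uses sets of size exactly $3$ when clauses have exactly $3$ literals, which is the standard form.

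I expect the main obstacle to be the correctness argument in part (1), specifically checking that chains of comparable attractors containing $s$ exactly characterise winnable promises. Once the reduction to the criterion of Theorem~\ref{thm:ptime-singleton-2player} is in place, the rest is routine.
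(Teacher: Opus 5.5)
Your proposal is correct and takes essentially the same route as the paper. For part (1), the paper builds the preorder graph on $t_0=s,t_1,\dots,t_n$ with $t_i\preceq t_j$ iff $t_i\in\AtE(t_j)$, weights its SCCs by the number of targets, and takes a maximum-weight path from the SCC of $t_0$. This is exactly your maximum-weight chain of attractors containing $s$, since $t_i\in A_j$ iff $A_i\subseteq A_j$ and the SCCs are precisely your classes of equal attractors. Part (2) likewise matches the paper: membership comes from an \NPSPACE procedure that guesses $\F'$ and runs the \PSPACE algorithm for $\genreach$, and hardness holds because $\genreach$ is the special case $k=|\F|$.
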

\begin{proof}

We give a polynomial time algorithm for the case when all target sets are singleton.
The algorithm below is similar to the generalised reachability case and proceeds as follows:

\begin{enumerate}
    \item Compute the attractor relation between target vertices %
    forming a preorder graph capturing these relationships. Formally, with $t_0 = s$, consider the relation $t_i \preceq t_j$ iff $t_i \in \AtE(t_j)$, for $i\in \{0\}\cup [n]$ and create a graph with $t_i$ as vertices and edges from $t_i$ to $t_j$ iff $t_i \preceq t_j$. Call this the preorder graph.
    \item Perform the Strongly Connected Component (SCC) decomposition of this preorder graph.
    \item Assign a weight to each SCC equal to the number of target vertices contained within it.
    \item The resulting SCC decomposition is a Directed Acyclic Graph (DAG). Using bottom-up dynamic programming, find a path in this DAG with the maximum total %
    weight starting from the SCC containing $t_0$.
\end{enumerate}

The set of targets on this path from $t_0$ can be visited because, by construction, the $\Ev$-attractors of the targets are ordered by inclusion.

At the same time, any target set $\Ev$ can promise must have target states whose $\Ev$ attractors are ordered by inclusion; they are therefore on a path from $t_0$ in this DAG.
Thus, there cannot be a larger such set, as it would lead to a path with a larger weight.

This shows that our dynamic programming algorithm works. 
For the complexity, we note that each step can be performed efficiently:
     Attractor computation can be done in \( O(|V| + |E|) \). SCC decomposition can also be computed in \( O(|V| + |E|) \) \cite{SHARIR198167}.
     Assigning weights to SCCs and constructing the weighted DAG is linear in the graph size.
     Finally, computing the maximum-weight path in a DAG via bottom-up dynamic programming is also linear in the size of the DAG obtained via SCC decomposition. Since each step is polynomial, the overall algorithm solves the problem $\maxgenreachpromise$ in polynomial time when each target set is singleton.

\bigskip

For the general case with target sets of arbitrary size, we obtain a \NPSPACE~algorithm by simply guessing the target sets promised by $\Ev$ and then applying the \PSPACE~algorithm for solving generalised reachability games from Theorem~\ref{thm:pspace-hardness-general-case}. As a consequence of Savitch's Theorem, this gives a \PSPACE~upper bound for the general problem. The $\PSPACE$ lower bound holds even for target sets of size $3$ as the generalised reachability game with the objective to visit all targets is a special case of the $\maxgenreachpromise$ problem.
\end{proof}
We leave the exact complexity of $\maxgenreachpromise{}$ for the case with target sets of size $2$ open; Theorem \ref{thm:maxgrev} provides an \NP~lower bound.

\section{Discussion}

We study variations of games with generalised reachability objectives and maximum generalised reachability objectives, considering the size of target sets as parameter.
We provide several complexity results for these two problems, showing first
that in general, solving generalised reachability game is \PSPACE-complete.
Next, we show that generalised reachability can be checked in time linear in the size of the game, linear in the number of singleton target sets, and exponential in the number of larger target sets.
Further, we establish complexity results for the single-player cases, where either only the system or only the environment makes decisions.
Regarding sizes of winning strategies, we
provide tight upper and lower bounds for memory requirements of both players in generalised reachability games.

Next, we introduce optimisation variants of the generalised reachability problem, where the goal generalises from visiting all target sets to visiting as many target sets as possible.
The first natural goal here is to just \emph{maximise} the number of target sets visited; we show that this problem is different in that it is co\textsc{NP}-hard, even when each target set contains only a single element, and \textsc{PSPACE}-complete even when the size of target sets is restricted to at most $2$.
We also show that both single player variants of this problem are tractable if the target sets are singleton, but become intractable already for games with target sets of size $2$ over DAGs with pathwidth $2$.

We also consider an interesting variant, where $\Ev$ is asked to pledge -- before the game starts -- the target sets that are to be visited.
She then has to visit them all, and her goal shifts to pledging a largest set of target sets.
We show that the solution of games with such objectives is tractable for singleton target sets even in the two-player case. Another interesting variant could require $\Ev$ to specify the order in which the target sets are visited. This variant introduces additional constraints and may not be reducible to the cases considered in this work.

Thus, we clarify the landscape of complexity of several problems in generalised reachability. The major remaining open problem is the complexity of the generalised reachability problem where every target set is of size at most $2$. 
The promise variant, $\maxgenreachpromise$ problem with target sets of size $2$, is shown to be NP-hard (even when $\Ev$ is the only player). Since $\genreach$ can be seen as a special case of $\maxgenreachpromise$, this has consequences for the exact complexity of $\genreach$ with target sets of size $2$. In this case, a polynomial time algorithm for $\genreach$ would imply $\maxgenreachpromise$ is harder (unless $\P=\NP$). On the other hand, an upper bound better than PSPACE for $\maxgenreachpromise$ would also imply an improved upper bound for $\genreach$ in the case of target sets of size $2$.

Regarding the optimisation variant $\maxgenreach$, the exact complexity of the case with singleton sets remains open as we have shown it to be co\textsc{NP}-hard and in PSPACE.

\section*{Acknowledgments}

This work is supported by the EPSRC projects LaST (EP/Z003121/1), TRUSTED (EP/X03688X/1), and Games for Good (EP/X042596/1), the Fonds de la Recherche Scientifique – FNRS under Grant n° T.0188.23 (PDR ControlleRS) and by the EU's MSCA project Counting STaRS (grant agreement No 101285726).

%\nocite{}
\bibliographystyle{alphaurl} 
\bibliography{references}

@InProceedings{IgnatievJM13,
author="Ignatiev, Alexey
and Janota, Mikol{\'a}{\v{s}}
and Marques-Silva, Joao",
noeditor="J{\"a}rvisalo, Matti
and Van Gelder, Allen",
title="Quantified Maximum Satisfiability",
booktitle="Satisfiability Testing (SAT 2013)",
series       = {LNCS},
volume       = {7962},
doi          = {10.1007/978-3-642-39071-5\_19},
year="2013",
publisher="Springer",
noaddress="Berlin, Heidelberg",
pages="250--266",
isbn="978-3-642-39071-5"
}

@inproceedings{PR89,
author = {Pnueli, A. and Rosner, R.},
title = {On the synthesis of a reactive module},
year = {1989},
isbn = {0897912942},
publisher = {Association for Computing Machinery},
noaddress = {New York, NY, USA},
nourl = {https://doi.org/10.1145/75277.75293},
doi = {10.1145/75277.75293},
booktitle = {Symposium on Principles of Programming Languages (POPL 1989)},
pages = {179–190},
nonumpages = {12},
nolocation = {Austin, Texas, USA},
noseries = {POPL '89}
}

@misc{Zer13,
author = {Zermelo, E.},
title = {{Über eine Anwendung der Mengenlehre auf die Theorie des Schachspiels}},
year = {1913},
booktitle = {International Congress of Mathematicians (ICM 1913)}
}

@InProceedings{chatterjee_et_al10,
  author =	{Chatterjee, Krishnendu and Doyen, Laurent and Henzinger, Thomas A. and Raskin, Jean-Fran\c{c}ois},
  title =	{{Generalized Mean-payoff and Energy Games}},
  booktitle =	{Foundations of Software Technology and Theoretical Computer Science (FSTTCS 2010)},
  pages =	{505--516},
  series =	{LIPIcs},
  ISBN =	{978-3-939897-23-1},
  ISSN =	{1868-8969},
  year =	{2010},
  volume =	{8},
  noeditor =	{Lodaya, Kamal and Mahajan, Meena},
  publisher =	{Schloss Dagstuhl -- Leibniz-Zentrum f{\"u}r Informatik},
  noaddress =	{Dagstuhl, Germany},
  noURL =		{https://drops.dagstuhl.de/entities/document/10.4230/LIPIcs.FSTTCS.2010.505},
  URN =		{urn:nbn:de:0030-drops-28484},
  doi =		{10.4230/LIPIcs.FSTTCS.2010.505}
}

@article{GAREY1976237,
title = {Some simplified {NP}-complete graph problems},
journal = {Theoretical Computer Science},
volume = {1},
number = {3},
pages = {237-267},
year = {1976},
issn = {0304-3975},
doi = {10.1016/0304-3975(76)90059-1},
nourl = {https://www.sciencedirect.com/science/article/pii/0304397576900591},
author = {M.R. Garey and D.S. Johnson and L. Stockmeyer}
}

@article{ZIELONKA1998135,
title = {Infinite games on finitely coloured graphs with applications to automata on infinite trees},
journal = {Theoretical Computer Science},
volume = {200},
number = {1},
pages = {135-183},
year = {1998},
issn = {0304-3975},
doi = {10.1016/S0304-3975(98)00009-7},
nourl = {https://www.sciencedirect.com/science/article/pii/S0304397598000097},
author = {Wieslaw Zielonka}
}

@inproceedings{Karp72,
  author       = {Richard M. Karp},
  noeditor       = {Raymond E. Miller and
                  James W. Thatcher},
  title        = {Reducibility Among Combinatorial Problems},
  booktitle    = {Complexity of Computer Computations},
  series       = {The {IBM} Research Symposia Series},
  pages        = {85--103},
  publisher    = {Plenum Press, New York},
  year         = {1972},
  nourl          = {https://doi.org/10.1007/978-1-4684-2001-2\_9},
  doi          = {10.1007/978-1-4684-2001-2\_9},
}

@Inbook{Bloem2018,
author="Bloem, Roderick
and Chatterjee, Krishnendu
and Jobstmann, Barbara",
editor="Clarke, Edmund M.
and Henzinger, Thomas A.
and Veith, Helmut
and Bloem, Roderick",
title="Graph Games and Reactive Synthesis",
bookTitle="Handbook of Model Checking",
year="2018",
publisher="Springer International Publishing",
address="Cham",
pages="921--962",
isbn="978-3-319-10575-8",
doi="10.1007/978-3-319-10575-8_27",
nourl="https://doi.org/10.1007/978-3-319-10575-8_27"
}

@inproceedings{DBLP:journals/corr/abs-2509-14091,
  author       = {Sougata Bose and
                  Daniel Hausmann and
                  Soumyajit Paul and
                  Sven Schewe and
                  Tansholpan Zhanabekova},
  editor       = {Giorgio Bacci and
                  Adrian Francalanza},
  title        = {Generalised Reachability Games Revisited},
  booktitle    = {Games, Automata,
                  Logics, and Formal Verification (GandALF 2025)},
  series       = {{EPTCS}},
  pages        = {76--90},
  year         = {2025},
  url          = {https://doi.org/10.4204/EPTCS.428.7},
  doi          = {10.4204/EPTCS.428.7},
  bibsource    = {dblp computer science bibliography, https://dblp.org}
}

@InProceedings{chatterjee_et_al16,
  author =	{Chatterjee, Krishnendu and Dvor\'{a}k, Wolfgang and Henzinger, Monika and Loitzenbauer, Veronika},
  title =	{{Conditionally Optimal Algorithms for Generalized B\"{u}chi Games}},
  booktitle =	{Mathematical Foundations of Computer Science (MFCS 2016)},
  pages =	{25:1--25:15},
  series =	{LIPIcs},
  ISBN =	{978-3-95977-016-3},
  ISSN =	{1868-8969},
  year =	{2016},
  volume =	{58},
  noeditor =	{Faliszewski, Piotr and Muscholl, Anca and Niedermeier, Rolf},
  publisher =	{Schloss Dagstuhl -- Leibniz-Zentrum f{\"u}r Informatik},
  noaddress =	{Dagstuhl, Germany},
  noURL =		{https://drops.dagstuhl.de/entities/document/10.4230/LIPIcs.MFCS.2016.25},
  URN =		{urn:nbn:de:0030-drops-64403},
  doi =		{10.4230/LIPIcs.MFCS.2016.25}
}

@misc{fijalkow2023gamesgraphs,
      title={Games on Graphs: From Logic and Automata to Algorithms}, 
      author={Nathanaël Fijalkow and C. Aiswarya and Guy Avni and Nathalie Bertrand and Patricia Bouyer and Romain Brenguier and Arnaud Carayol and Antonio Casares and John Fearnley and Paul Gastin and Hugo Gimbert and Thomas A. Henzinger and Florian Horn and Rasmus Ibsen-Jensen and Nicolas Markey and Benjamin Monmege and Petr Novotný and Pierre Ohlmann and Mickael Randour and Ocan Sankur and Sylvain Schmitz and Olivier Serre and Mateusz Skomra and Nathalie Sznajder and Pierre Vandenhove},
      year={2025},
      doi={10.48550/arXiv.2305.10546},
}

@article{fij-horn2010,
author = {Fijalkow, Nathanaël and Horn, Florian},
year = {2010},
month = {10},
pages = {},
title = {The surprizing complexity of generalized reachability games},
doi = {10.48550/arXiv.1010.2420}
}

@article{Halaby2016,
  title={On the Computational Complexity of {MaxSAT}},
  author={Mohamed El Halaby},
  journal={Electron. Colloquium Comput. Complex.},
  year={2016},
  volume={TR16},
  url={https://api.semanticscholar.org/CorpusID:7872296}
}

@article{Kohli-et-al-94,
author = {Kohli, Rajeev and Krishnamurti, Ramesh and Mirchandani, Prakash},
title = {The Minimum Satisfiability Problem},
journal = {SIAM Journal on Discrete Mathematics},
volume = {7},
number = {2},
pages = {275-283},
year = {1994},
doi = {10.1137/S0895480191220836},

noURL = { 
    
        https://doi.org/10.1137/S0895480191220836
    
    

},
noeprint = { 
    
        https://doi.org/10.1137/S0895480191220836
    
    

}
}

@inproceedings{KupfermanS24,
  author       = {Orna Kupferman and
                  Noam Shenwald},
  noeditor       = {S. Akshay and
                  Aina Niemetz and
                  Sriram Sankaranarayanan},
  title        = {Games with Weighted Multiple Objectives},
  booktitle    = {Automated Technology for Verification and Analysis ({ATVA} 2024)},
  series       = {LNCS},
  volume       = {15054},
  pages        = {110--132},
  publisher    = {Springer},
  year         = {2024},
  nourl          = {https://doi.org/10.1007/978-3-031-78709-6\_6},
  doi          = {10.1007/978-3-031-78709-6\_6},
  bibsource    = {dblp computer science bibliography, https://dblp.org}
}

@article{DBLP:journals/jcss/Immerman81,
  author       = {Neil Immerman},
  title        = {Number of Quantifiers is Better Than Number of Tape Cells},
  journal      = {J. Comput. Syst. Sci.},
  volume       = {22},
  number       = {3},
  pages        = {384--406},
  year         = {1981},
  nourl          = {https://doi.org/10.1016/0022-0000(81)90039-8},
  doi          = {10.1016/0022-0000(81)90039-8},
  bibsource    = {dblp computer science bibliography, https://dblp.org}
}

@article{SHARIR198167,
title = {A strong-connectivity algorithm and its applications in data flow analysis},
journal = {Computers {\&} Mathematics with Applications},
volume = {7},
number = {1},
pages = {67-72},
year = {1981},
issn = {0898-1221},
doi = {10.1016/0898-1221(81)90008-0},
author = {M. Sharir}
}

@inproceedings{pnueli1977,
  author = {Pnueli, Amir},
  title = {The Temporal Logic of Programs},
  booktitle = {Foundations of Computer Science (FOCS 1977)},
  year = {1977},
  publisher    = {{IEEE} Computer Society},
  doi          = {10.1109/SFCS.1977.32},
  pages = {46--57}
}

@inproceedings{chatterjee2007,
  author = {Chatterjee, Krishnendu and Henzinger, Thomas A. and Piterman, Nir},
  title = {Generalized Parity Games},
  booktitle = {Foundations of Software Science and Computational Structures (FoSSaCS 2007)},
  series = {LNCS},
  publisher = {Springer},
  year = {2007},
  volume       = {4423},
  doi          = {10.1007/978-3-540-71389-0\_12},
  pages = {153--167}
}

@inproceedings{ChatterjeeHH11,
  author       = {Chatterjee, Krishnendu and Henzinger, Thomas A. and Horn, Florian},
  title        = {Finitary Winning in Omega-Regular Games},
  booktitle    = {ACM Transactions on Computational Logic},
  volume       = {11},
  number       = {1},
  pages        = {1--27},
  year         = {2011}
}

@article{FijalkowHorn13french,
  title = "Les jeux d'accessibilit{\'{e}} g{\'{e}}n{\'{e}}ralis{\'{e}}e",
  author = "Nathana{\"{e}}l Fijalkow and Florian Horn",
  year = "2013",
  journal = "Technique et Science Informatiques",
  number = "9-10",
  pages = "931--949",
  volume = "32",
  doi = "10.3166/tsi.32.931-949",
}

@inproceedings{ColcombetFH14,
  author       = {Thomas Colcombet and
                  Nathana{\"{e}}l Fijalkow and
                  Florian Horn},
  title        = {Playing Safe},
  booktitle    = {{FSTTCS}},
  series       = {LIPIcs},
  pages        = {379--390},
  publisher    = {Schloss Dagstuhl - Leibniz-Zentrum f{\"{u}}r Informatik},
  year         = {2014}
}
\end{document}